\documentclass[a4paper,11pt]{article}

\usepackage{microtype}
\usepackage{mathtools,amsmath}
\usepackage{amssymb,amsthm}
\usepackage[T1]{fontenc}
\usepackage{graphicx}
\usepackage{xspace}
\usepackage{hyperref,color}
\usepackage{pdfpages}
\usepackage{fullpage} 
\usepackage[mathlines]{lineno}
\usepackage{enumerate}
\usepackage{mathrsfs}
\usepackage[linesnumbered,ruled,vlined]{algorithm2e}

 \newtheorem{theorem}{Theorem} \newtheorem{lemma}[theorem]{Lemma}
 
 \newtheorem{corollary}[theorem]{Corollary}
 
 \newtheorem{definition}[theorem]{Definition}
 \newtheorem{observation}[theorem]{Observation}

\newcommand{\xcoord}{\ensuremath{x}}
\newcommand{\ycoord}{\ensuremath{y}}
\newcommand{\myint}[1]{\ensuremath{\textsf{int}(#1)}}

\newcommand{\mybd}[1]{\ensuremath{\partial#1}}
\newcommand{\mycl}[1]{\ensuremath{\textsf{cl}(#1)}}

\newcommand{\vv}[1]{\ensuremath{\mathbf{#1}}}

\newcommand{\usetp}{\ensuremath{\mathbb{S}^+}}

\newcommand{\prob}[2]{\ensuremath{\textsc{OSP}_{#1}^{\mathrm{#2}}}}

\newcommand{\nnprob}{\ensuremath{\textsc{OSP}}}

\newcommand{\nopt}{\ensuremath{\textsf{opt}}}
\newcommand{\dwidth}[2]{\ensuremath{\omega_{#1}(#2)}}

\newcommand{\projx}[1]{\ensuremath{I_x(#1)}}

\newcommand{\ac}[1]{\mathsf{AC}(#1)}   
\newcommand{\iant}{\mathsf{Ant}}   
\newcommand{\td}{\ensuremath{V}}   
\newcommand{\normal}{\ensuremath{\vec{\eta}}}
\newcommand{\filter}{\ensuremath{ \mathrm{Filter}_{1}}}

\newcommand{\cinterval}{\ensuremath{\mathbf{I}}}
\newcommand{\ccb}{\ensuremath{\mathscr{E}}}
\def\denseitems{
    \itemsep1pt plus1pt minus1pt
    \parsep0pt plus0pt
    \parskip0pt\topsep0pt}

\newbox\ProofSym \setbox\ProofSym=\hbox{%
  \unitlength=0.18ex%
  \begin{picture}(10,10) \put(0,0){\framebox(9,9){}}
    \put(0,3){\framebox(6,6){}}
  \end{picture}}

\title{Orthogonal Strip Partitioning of Polygons: Lattice-Theoretic Algorithms and Lower Bounds\thanks{This work was supported in part by a KIAS Individual Grant AP106101 via the Center for Artificial Intelligence and Natural Sciences at Korea Institute for Advanced Study, and by the Center for Advanced Computation at Korea Institute for Advanced Study.
}}
\author{Jaehoon Chung\thanks{Korea Institute for Advanced Study (KIAS), Seoul, Korea. 
{\tt sk7755@kias.re.kr}}}
\begin{document}
\date{}
\maketitle

\begin{abstract}
We study a variant of a polygon partition problem, 
introduced by Chung, Iwama, Liao, and Ahn [ISAAC'25]. 
Given orthogonal unit vectors $\mathbf{u},\mathbf{v}\in \mathbb{R}^2$ and a polygon $P$ with $n$ vertices, 
we partition $P$ into 
connected pieces by cuts parallel to $\mathbf{v}$ 
such that each resulting subpolygon has width at most one in direction $\mathbf{u}$.
We consider the value version, 
which asks for the minimum number of strips, 
and the reporting version, which outputs a compact encoding of the cuts in an optimal strip partition.

We give efficient algorithms and lower bounds for both versions
on three classes of polygons of increasing generality: convex, simple, and self-overlapping.
For convex polygons,
we solve the value version in $O(\log n)$ time and the reporting version in 
$O\!\left(h \log\left(1 + \frac{n}{h}\right)\right)$ time, where $h$ is the width of $P$ in direction $\vv{u}$. 
We prove matching lower bounds in the decision-tree model, showing that the reporting algorithm is input-sensitive optimal with respect to $h$.
For simple polygons, we present $O(n \log n)$-time, $O(n)$-space algorithms for both versions and prove an $\Omega(n)$ lower bound in the decision-tree model.
For self-overlapping polygons, we extend the approach for simple polygons to obtain $O(n \log n)$-time, $O(n)$-space algorithms for both versions, 
and we prove a matching $\Omega(n \log n)$ lower bound in the algebraic computation-tree model via a reduction from the $\delta$-closeness problem.

Our approach relies on a lattice-theoretic formulation of the problem.
We represent strip partitions as antichains of intervals in the 
Clarke--Cormack--Burkowski lattice, 
originally developed for minimal-interval semantics in information retrieval. 
Within this lattice framework, we design a dynamic programming algorithm 
that uses the lattice operations of meet and join.
To the best of our knowledge, this is the first geometric application of the Clarke--Cormack--Burkowski lattice.
\end{abstract}

\section{Introduction}\label{sec:introduction}
Partitioning a polygon into the minimum number of pieces 
has been studied extensively under shape-class constraints,
for example into convex, star-shaped, quadrilateral, or monotone pieces~\cite{abrahamsen_minimum_2024,chazelle_decomposing_1979,keil_decomposing_1985}. 
Much less is known for metric constraints that bound 
geometric measures, such as width, diameter, or perimeter.
In contrast to the classical shape-based setting, 
the problem becomes significantly harder under such metric bounds. 
Imposing a unit upper bound on diameter or perimeter already makes the minimum partition problem NP-hard for simple polygons without holes~\cite{abrahamsen_hardness_2024}.
Moreover, no polynomial-time algorithm is known for computing an optimal solution even when the input is a square or an equilateral triangle~\cite{abrahamsen_partitioning_2025}.



In ISAAC'25, Chung et al.~\cite{chung_minimum_2025} introduced a minimum partition problem with an upper-bounded width constraint, 
which limits the width of each partitioned piece in prescribed directions. 
Such a constraint is motivated by manufacturing and recycling, where materials must be cut or processed within width limits. 
For instance, Lan et al.~\cite{lan_utilization_2023} recycle wind turbine blades 
by first cutting the blades into panels small enough to fit the intake of crushing machines, and then crushing them into recyclates.
Moreover, wind turbine blades are typically made of glass-fiber composites with directionally aligned fibers, so the cutting effort can depend on the cut direction~\cite{hearle_structural_1967}.

From a theoretical perspective, the problem is related to 
Bang's conjecture, a long-standing open problem in convex geometry~\cite{bang_solution_1951}. 
Chung et al. resolved a partition analogue of this conjecture, showing that 
for any convex body, there exists a direction 
in which an optimal partition can be achieved using only parallel cuts. 

These applications and theoretical results motivate the study of minimum partitioning 
under an upper-bounded width constraint. 
Moreover, they suggest that it is natural to restrict attention to partitions
in which all cuts are parallel to a single direction, which 
we call the \emph{strip partition model}.


We consider three classes of input polygons: convex, simple, and self-overlapping, 
where a self-overlapping polygon generalizes a simple polygon by allowing the boundary to self-intersect.
A \emph{self-overlapping curve} was introduced by Shor and van Wyk~\cite{shor_detecting_1992} as the boundary of a topological disk immersed in $\mathbb{R}^2$. When this curve is polygonal, 
it serves as the boundary of a self-overlapping polygon. 
Unlike in the simple case, the boundary alone does not determine a unique interior or visibility relation. 
Following Shor and van Wyk, we define visibility with respect to a triangulation of the curve, so that partitions of the polygon are well defined. Further details on self-overlapping polygons are given in Section~\ref{sec:extend.overlapping}.

\begin{figure}[t!]
  \centering
  \includegraphics[width=0.9\textwidth]{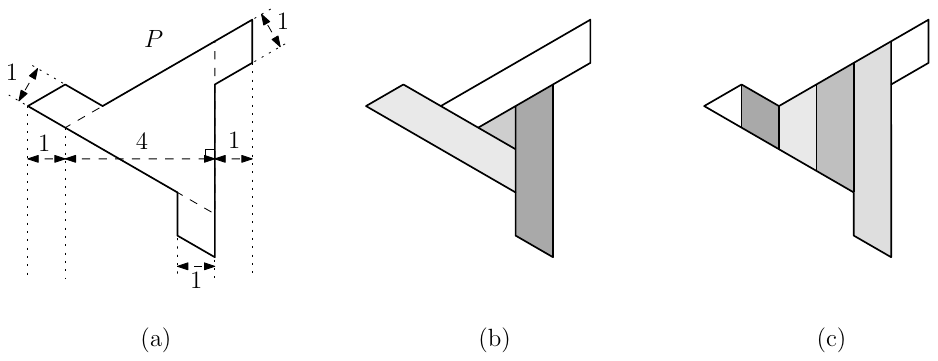}
  \caption{\small (a) The polygon P has a windmill shape with three arms extending from an equilateral triangle of height~4. (b) A minimum partition of $P$ under width and cut constraints, $W=U=\usetp$.
  (c) A minimum orthogonal strip partition of $P$ with $\vv{u}=(1,0), \vv{v} = (0,1)$.}
  \label{fig:strip.partition}
\end{figure}

In this work, we study orthogonal strip partitioning of polygons 
and present efficient algorithms with lower bounds for each class of input polygons.

\subparagraph{Orthogonal strip partition problem.}
We first recall the partitioning problem under width and cut constraints introduced by Chung et al.~\cite{chung_minimum_2025}. 
Let $P$ be a simple polygon, and let $\usetp = \{(\cos\theta, \sin\theta) \mid 0 \le \theta < \pi\}$ denote the set of all unit vectors in the plane. 
For $\vv{v} \in \usetp$ and 
a set $X \subset \mathbb{R}^2$, 
let $\dwidth{\vv{v}}{X}$ denote the width of $X$ in direction~$\vv{v}$, that is, the length of the projection of $X$ onto a line parallel to~$\vv{v}$. 
Given a subset $W \subseteq \usetp$, we say that $X$ satisfies the \emph{unit-width constraint}~$W$ if $\dwidth{\vv{v}}{X} \le 1$ for some $\vv{v} \in W$. 
We also impose a \emph{cut constraint} $U \subseteq \usetp$, meaning that every cut used to partition $P$ must be aligned with some direction in~$U$.
Here, a cut is a line segment $c \subseteq P$ 
whose relative interior lies in the interior of $P$.
Given a simple polygon $P$ and sets $W, U \subseteq \usetp$, 
the goal is to find a partition of $P$ into the minimum number of pieces such that every cut is aligned with a direction in $U$ and each piece satisfies the unit-width constraint~$W$.

The \emph{orthogonal strip partition problem} is the special case with $W = \{\vv{u}\}$ and $U = \{\vv{v}\}$ for two orthogonal unit vectors $\vv{u}, \vv{v} \in \usetp$. 
A \emph{strip} is a subpolygon $Q \subseteq P$ with $\dwidth{\vv{u}}{Q} \le 1$, and a strip partition divides $P$ into strips using only cuts parallel to~$\vv{v}$. 
Without loss of generality, we assume $\vv{u} = (1,0)$ and $\vv{v} = (0,1)$. 
Under this assumption, each cut is a vertical segment and every strip $Q$ has width at most~1 in the horizontal direction. 
In this orthogonal setting, 
we regard a cut as a maximal vertical segment contained in~$P$ whose relative interior lies in the interior of $P$.
See Figure~\ref{fig:strip.partition} for an illustration.

\subparagraph{Cut descriptors and lossless encoding.}
We define the \emph{descriptor} of a cut as the pair of boundary edges of $P$ on which its top and bottom endpoints lie, together with its $x$-coordinate. 
If a top or bottom endpoint lies on two incident edges, we choose the left edge, 
so the descriptor uniquely determines the position of the cut.
A partition of $P$ can be represented by the descriptors of its cuts. 
Our primary objective is to minimize the number of strips. In the reporting variant, we also compute the descriptors of all cuts in an optimal partition.

To store these descriptors succinctly, 
we fix a \emph{lossless encoding} scheme for strip partitions.
For each polygon $P$, the scheme assigns a codeword to each strip partition of $P$
so that the partition can be reconstructed from the codeword without loss. 
Thus each strip partition of $P$ is represented by a unique codeword.

The purpose of lossless encoding is to separate the combinatorial difficulty of finding an optimal partition from the cost of listing all cuts.
For an integer $k\ge1$,  
the optimal strip partition of $k\times 1$ rectangle
consists of $k-1$ cuts spaced one unit apart.
If we store one descriptor per cut, any reporting algorithm requires 
$\Omega(k)$ time to output the explicit list.
We therefore allow the algorithm to output a succinct codeword from which the full list 
of cut descriptors can be reconstructed without loss,
for example by encoding the $x$-coordinate of the leftmost cut together with the number of cuts.
This is called a \emph{run-length} encoding~\cite{bell_text_1990}.
Under this model, the cost is no longer dominated by explicit output size, but by distinguishing among combinatorially different optimal partitions.

\begin{definition}[Orthogonal Strip Partition Problems]
An orthogonal strip partition of a polygon $P \subset \mathbb{R}^2$ is a partition of $P$
into subpolygons (strips) using vertical cuts, such that 
each subpolygon has horizontal width at most~$1$.
We consider three input domains: 
convex, simple, and self-overlapping.
For $\mathcal{D} \in \{\text{convex}, \text{simple}\}$, an input instance 
consists of the vertices of $P$ listed in counterclockwise order along its boundary. 
For $\mathcal{D} = \text{self-overlapping}$, 
an instance consists of a triangulation of $P$.
For each domain $\mathcal{D}$, we consider two versions: 
\begin{itemize}\denseitems
    \item \emph{Value version:} return the minimum number of strips. 
    \item  \emph{Reporting version:} 
    with respect to the fixed lossless encoding scheme, 
    output any codeword representing an optimal strip partition of $P$.
\end{itemize}
\end{definition}

Let $\mathcal{D} \in \{\mathrm{conv}, \mathrm{simp}, \mathrm{self}\}$
denote the type of $P$, 
corresponding to convex, simple, and self-overlapping, respectively.
For a polygon $P$ in domain $\mathcal{D}$, let $\prob{V}{\mathcal{D}}(P)$ and $\prob{R}{\mathcal{D}}(P)$ 
be the value and reporting versions of the orthogonal strip partition problem, respectively.
We denote by $\nopt(P)$ the minimum number of strips in an optimal strip partition of $P$. 

\subparagraph{Related work.} 
The general version of the strip partition problem with
arbitrary width and cut constraints $W,U\subseteq \usetp$,
was first introduced by Chung et al.~\cite{chung_minimum_2025}. 
They studied structural properties in simple polygons, in particular the monotonicity of the minimum partition number under polygon containment. 
They proved a partition analogue of Bang's conjecture.
If $U$ contains every direction orthogonal to each $\vv{w} \in W$,
then any convex body admits 
an optimal partition obtained by equally spaced cuts parallel to some direction $\vv{v} \in U$. 
Hence an optimal solution exists in the strip partition model.

Most algorithmic work for the strip partition model 
focuses on monotone variants, including the convex case.
Liu~\cite{liu_partitioning_1985} gave an $O(n\log n)$-time algorithm for 
partitioning 
a rectilinear polygon with rectilinear holes 
into $x$-monotone pieces using horizontal cuts, where $n$ is the total number of vertices.
Lee et al.~\cite{lee_monotone_2025} 
improved this 
to linear time for hole-free polygons.
Lingas and Soltan~\cite{lingas_minimum_1998} 
studied minimum convex partitions of polygons with holes 
by cuts restricted to a set $U$ of directions; 
in particular, they give an $O(n\log n)$-time algorithm when $|U|=1$ (parallel cuts).

\subparagraph{Our results.}
We study orthogonal strip partitioning of convex, simple, 
and self-overlapping polygons, and give efficient algorithms with matching lower bounds in appropriate computational models, in particular the decision-tree and algebraic computation-tree models.

First, for a convex $n$-gon, the value version can be solved in $O(\log n)$ time and the reporting version in $O\bigl(h \log (1 + n/h)\bigr)$ time, where $h \ge 1$ is the horizontal width of the input polygon. 
As a function of $h$, this bound is $O(\log n)$ when $h=1$ and approaches $O(n)$ as $h$ increases.
We prove matching lower bounds in the decision-tree model, showing that any algorithm needs $\Omega(\log n)$ time for the value version and $\Omega\bigl(h \log (1 + n/h)\bigr)$ time for the reporting version. Thus the reporting algorithm is input-sensitive optimal with respect to~$h$.

Second, for simple polygons with $n$ vertices, we use a lattice-theoretic formulation of strip partitions. We represent strip partitions as antichains of intervals in the Clarke--Cormack--Burkowski lattice, originally developed for minimal-interval semantics in information retrieval, and design a dynamic programming algorithm on a trapezoidal decomposition whose recurrence is expressed using the lattice operations of meet and join on these antichains. This yields an $O(n \log n)$-time, $O(n)$-space algorithm that solves both the value and reporting versions. 
We also show that, unlike for convex polygons, any algorithm requires $\Omega(n)$ accesses to the coordinates of the input vertices in the worst case.
To the best of our knowledge, this is the first geometric application of the Clarke--Cormack--Burkowski lattice.

Finally, we extend this lattice-based framework to self-overlapping polygons given by a triangulation. We obtain an $O(n \log n)$-time, $O(n)$-space algorithm that solves both versions, and prove an $\Omega(n \log n)$ lower bound in the algebraic computation-tree model by a reduction from the $\delta$-closeness problem. This shows that our algorithm is optimal in that model.


\section{Preliminaries}\label{sec:preliminary}
Let $P$ be a polygon with $n$ vertices in the plane, given as a list of vertices in counterclockwise order along its boundary. 
A \emph{partition} of $P$ is a set of connected pieces with pairwise disjoint interiors whose union equals $P$. 
The \emph{cardinality} of a partition is the number of its pieces.

For a set $X \subseteq \mathbb{R}^2$, we denote by $\mybd{X}$ its boundary, by $\myint{X}$ its interior, and by $\mycl{X}$ its closure. 
We regard a polygon as the union of its interior and boundary; in particular, $\mycl{P} = P$, $\mybd{P}$ is the boundary of $P$, and $\myint{P}$ is its interior.

For a point $p \in \mathbb{R}^2$, let $\xcoord(p)$ and $\ycoord(p)$ denote its $x$- and $y$-coordinates, respectively. 
For any two points $p, q \in \mathbb{R}^2$, we use $pq$ to denote the line segment connecting $p$ and $q$, and write $|pq|$ for its length. 
We call $pq$ a \emph{cut} in $P$ if $p, q \in \mybd{P}$ and the interior points of $pq$ lie in $\myint{P}$. 
If $\ell$ is a vertical cut, we denote by $x(\ell)$ the $x$-coordinate of $\ell$.

Let $O$ be a totally ordered set.
A subset $I \subseteq O$ is an \emph{interval} of $O$ if for all $x,y,z \in O$ with $x \le z \le y$, the condition $x,y \in I$ implies $z \in I$. 
A \emph{closed interval} of $O$ is an interval of the form $[a,b] = \{x \in O \mid a \le x \le b\}$. 
We denote by $\cinterval_O$ the set of all closed intervals of $O$. 
If $O\subseteq \mathbb{R}$ with the induced order, then
for a closed interval $I = [a,b] \in \cinterval_O$, we define its length by $|I| \coloneqq b - a$.
The set $O$ is \emph{locally finite} if every closed interval $[a,b] \subseteq O$ contains only finitely many elements of $O$; 
for example, $\mathbb{Z}$ is locally finite, whereas $\mathbb{Q}$ and $\mathbb{R}$ are not.


We use the notation $[m]\coloneqq \{1,2,\ldots,m\}$ for a positive
integer $m$.  For a finite set $A$, we use $|A|$ to denote its cardinality.

\subparagraph{Computational model.} 
We work in the real-\texttt{RAM} model, which supports unit-cost arithmetic ($+,-,\times,\div$) and comparisons on real numbers. 
The $n$ vertices of the input polygon $P$ are given by real coordinates $(x_i,y_i)_{i=1}^n$. 
We allow a restricted use of floor function: 
for each input $x_i$, 
the integer part $\lfloor x_i \rfloor$ is also given and 
can be accessed in constant time, which increases the input size only by a constant factor.
We do not allow the floor function on arbitrary real values, since this would make the model unrealistically powerful~\cite{schonhage_power_1979}. 
The integer labels do not increase the computational power for the problems we consider.

Our algorithms take floor/ceiling of differences between $x$-coordinates of input vertices.
In our model, these are supported in constant time 
because the integer parts of all input $x$-coordinates are given. 
Let $(x_1, x_2,\ldots,x_n)$ be the $x$-coordinates of the vertices of $P$.
For each $i \in [n]$, 
let $f_i = \lfloor x_i \rfloor$ (the integer part) and $r_i = x_i - f_i$ (the fractional part).
Then, $\lfloor x_i - x_j\rfloor = f_i - f_j -\mathbf{1}\!\left[r_i<r_j\right]$, where $\mathbf{1}[\cdot]$ is 
the indicator function and its value is computed by a single comparison. 
For computing ceilings, 
we 
use the identity $\lceil x_i - x_j\rceil=-\lfloor x_j - x_i\rfloor$.

\section{Orthogonal strip partitioning of convex polygons}\label{sec:convex.case}
In this section, we consider the orthogonal strip partition problem on a convex polygon
$P$ with $n$ vertices, stored in counterclockwise order along its boundary. 
Since all cuts are required to be vertical, an optimal partition achieving $\nopt(P)$ is obtained by placing vertical cuts at unit distance, starting from the leftmost vertex of $P$. 
This also follows from the general result of Chung et al.~\cite{chung_minimum_2025} on optimal partitions of convex polygons under width and cut constraints.

\begin{corollary}[\cite{chung_minimum_2025}, Corollary~17]\label{cor:conv.min,partition}
Let $P$ be a convex polygon with $n$ vertices, and let $U, W \subseteq \usetp$ be sets of unit vectors such that every direction orthogonal to some $\vv{v} \in W$ 
is contained in $U$. 
Then, an optimal partition under width constraint $W$ and cut constraint $U$ is achieved by equally spaced cuts orthogonal to a direction $\vv{u} \in W$ that minimizes $\lceil \dwidth{\vv{u}}{P} \rceil$. Given such a direction $\vv{u}$, the partition can be computed in time $O\left(\dwidth{\vv{u}}{P} \log\left(1 + n/\dwidth{\vv{u}}{P}\right)\right)$.
\end{corollary}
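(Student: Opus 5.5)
The plan has two parts: the structural claim that equally spaced cuts are optimal, and the running-time bound for constructing them.

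\emph{Lower bound.} Let $\vv{u}\in W$ minimize $\lceil \dwidth{\vv{u}}{P}\rceil$, and set $k=\lceil \dwidth{\vv{u}}{P}\rceil$. I would show that every feasible partition has at least $k$ pieces. Take a partition into pieces $Q_1,\dots,Q_m$. Each $Q_i$ has width at most~$1$ in some direction $\vv{w}_i\in W$. The cuts must be straight segments in directions from $U$, and $P$ is convex. For the reduction to parallel cuts I would invoke the structural result of Chung et al.\ on convex bodies (their partition analogue of Bang's conjecture). It says that if $U$ contains all directions orthogonal to $W$, then a minimum partition of a convex body can be taken to consist of parallel cuts orthogonal to a single direction $\vv{u}'\in W$. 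Once all cuts are parallel and orthogonal to $\vv{u}'$, the pieces are slabs of $P$ that are consecutive along $\vv{u}'$. Their widths in direction $\vv{u}'$ sum to $\dwidth{\vv{u}'}{P}$, and each is at most~$1$. So there are at least $\lceil \dwidth{\vv{u}'}{P}\rceil\ge k$ pieces.

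A piece could in principle satisfy its width constraint in some direction other than $\vv{u}'$. This is exactly where the Bang-type theorem is needed: it rules out mixing directions in a way that saves pieces. I treat this step as imported and expect it to be the real obstacle.

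\emph{Upper bound.} Place cuts orthogonal to $\vv{u}$ at distances $1,2,\dots,k-1$ from the extreme vertex of $P$ in direction $-\vv{u}$. Each slab of a convex polygon is connected and has $\vv{u}$-width at most~$1$. This gives $k$ pieces, so the equally spaced partition is optimal.

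\emph{Running time.} Rotate coordinates so that $\vv{u}=(1,0)$. Split $\partial P$ at its leftmost and rightmost vertices into an upper chain and a lower chain, each $x$-monotone, using binary search in $O(\log n)$ time. Each cut $x=x_0+j$ meets each chain on one edge. Finding these edges amounts to locating the $k-1$ sorted abscissae among the sorted $x$-coordinates of a chain's vertices, that is, merging a sorted list of $k-1$ values into a sorted list of $n$ values. Exponential (finger) search from the previously located position does this in $\sum_j O(\log(1+g_j))$ time, where $g_j$ is the number of vertices skipped between consecutive cuts and $\sum_j g_j\le n$. By concavity of the logarithm this sum is $O\bigl(k\log(1+n/k)\bigr)$, which equals $O\bigl(\dwidth{\vv{u}}{P}\log(1+n/\dwidth{\vv{u}}{P})\bigr)$ because $k=\Theta(\dwidth{\vv{u}}{P})$ when the width is at least~$1$. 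Each descriptor (the top edge, the bottom edge, and $x_0+j$) is then output in constant time. The case $\dwidth{\vv{u}}{P}<1$ is trivial.
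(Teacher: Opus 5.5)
Your treatment of the structural claim matches the paper's: both import the Bang-type result of Chung et al.\ rather than prove it. That result directly supplies the count $\nopt(P)=\min_{\vv{u}\in W}\lceil\dwidth{\vv{u}}{P}\rceil$. Your intermediate sentence that each slab has $\vv{u}'$-width at most~$1$ does not follow from the parallel-cut reduction alone, as you yourself note, so it is cleaner to cite the count directly. The genuine gap is in the running-time part.

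Write $h=\dwidth{\vv{u}}{P}$ and $k=\lceil h\rceil$. Your procedure spends $\Omega(1)$ time per cut, for two reasons:
\begin{itemize}
\item each exponential search costs at least one comparison even when $g_j=0$;
\item you output all $k-1$ descriptors explicitly.
\end{itemize}
So the true cost is $\Theta\bigl(k+\sum_j\log(1+g_j)\bigr)$, and your concavity argument bounds only the second term.

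For $k\le n$ this is harmless, because $\log_2(1+n/k)\ge 1$ gives $k\le k\log_2(1+n/k)$. For $h\gg n$, however, the target bound $h\log(1+n/h)$ is $\Theta(n)$, while you spend $\Theta(h)$. For example, a triangle of width $10^9$ must be handled in $O(1)$ time. No method that lists the cuts one by one can meet the stated bound in this regime.

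The missing idea is the output encoding. The paper, following Chung et al., uses a hybrid scheme: literal form when $h<n$, run-length form otherwise. To repair your proof in the case $k>n$:
\begin{enumerate}
\item Merge the vertex $x$-coordinates of the upper and lower chains in $O(n)$ time.
\item Between two consecutive breakpoints $a<b$, the pair (top edge, bottom edge) is constant.
\item The cuts $x_0+j$ in that slab form a contiguous range of $j$. Its first element and length come from $\lceil a-x_0\rceil$ and $\lfloor b-x_0\rfloor$; for cuts through a vertex, apply the left-edge convention for descriptors. These floors and ceilings of differences of input coordinates cost $O(1)$ in the paper's model.
\item Emit one run (edge pair, first cut, count) per slab.
\end{enumerate}
This gives an $O(n)$-size codeword in $O(n)=O\bigl(h\log(1+n/h)\bigr)$ time. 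Combined with your finger-search procedure for $k\le n$, it yields the claimed bound in all regimes.
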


By Corollary~\ref{cor:conv.min,partition}, if $h$ is the horizontal width of $P$, then
$\nopt(P) = \lceil h \rceil$, and an optimal partition is obtained by equally spaced
vertical cuts at unit distance.
The value $\nopt(P)$ can therefore be computed in $O(\log n)$ time by first computing
$h$ via a binary search on the list storing the vertices of $P$.

To report an optimal strip partition, we first consider the straightforward approach of
storing each cut descriptor explicitly as the pair of boundary edges spanned by the cut
together with its $x$-coordinate; we refer to this representation as the \emph{literal
encoding}.
A strip partition may, however, contain long sequences of consecutive cuts that span the
same edge pair and differ only in their $x$-coordinates.
For example, a sequence of $m$ cuts with $x$-coordinates $\{t, t+1, \ldots, t+m-1\}$ may
all span the same edge pair.
In this case it is more compact to use a \emph{run-length encoding}~\cite{bell_text_1990},
which stores the common edge pair together with the pair $(t,m)$, where $t$ is the
$x$-coordinate of the first cut and $m$ is the run length.
A literal encoding of a partition with $m$ cuts requires $O(m)$ space, whereas its
run-length encoding uses only $O(n)$ space.
Both encodings are lossless, and each cut can be reconstructed in $O(1)$ time.

The running-time bound in Corollary~\ref{cor:conv.min,partition} is obtained by using
the algorithm of Chung et al.~\cite{chung_approximating_2022}, originally developed in
a different context.
Their implementation uses a hybrid encoding of cuts that combines the literal and run-length encodings. 
When $h < n$, the cuts are stored in literal form; otherwise they are stored in run-length form. 
This hybrid strategy yields an $O\!\left(h \log\!\left(1 + \frac{n}{h}\right)\right)$-time algorithm that uses $O(n)$ space.
As a function of $h$, the running time is $\Theta(\log n)$ when $h = 1$ and increases monotonically, approaching $\Theta(n)$ as $h \to \infty$.

\begin{theorem}\label{thm:conv.algorithm}
Given a convex polygon $P$ with $n$ vertices stored in an array in counterclockwise order along its boundary, 
the problems $\prob{V}{conv}(P)$ and $\prob{R}{conv}(P)$ can be solved in $O(\log n)$ and $O\!\left(h \log\!\left(1 + \frac{n}{h}\right)\right)$ time, respectively, 
where $h$ is the horizontal width of $P$.
\end{theorem}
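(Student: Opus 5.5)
By Corollary~\ref{cor:conv.min,partition} with $W=\{\vv{u}\}$ and $U=\{\vv{v}\}$, an optimal strip partition of $P$ consists of vertical cuts at $x=x_{\min}+1, x_{\min}+2,\ldots,x_{\min}+\lceil h\rceil-1$, where $x_{\min}$ is the smallest $x$-coordinate of a vertex of $P$. Hence $\nopt(P)=\lceil h\rceil$, and both versions reduce to locating the extreme vertices and then describing these cuts.

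For the value version, I first locate the leftmost and rightmost vertices of $P$. Along the counterclockwise array, the sequence of $x$-coordinates is unimodal up to a cyclic shift: it decreases to the leftmost vertex and then increases to the rightmost one. Both extrema can therefore be found in $O(\log n)$ time by the standard binary search for extreme points of a convex polygon. At each probe, compare $x$ at the probe and its neighbor, and compare against the first vertex to decide which chain the probe lies on. I then compute $h=x_{\max}-x_{\min}$. The ceiling $\lceil h\rceil$ is obtained in $O(1)$ time by the identity $\lceil x_i-x_j\rceil=-\lfloor x_j-x_i\rfloor$ from the computational model, since $h$ is a difference of two input coordinates. This gives $O(\log n)$ time in total.

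For the reporting version, the extreme vertices split $\partial P$ into an upper chain $C^+$ and a lower chain $C^-$, each $x$-monotone and stored as a contiguous (cyclic) subarray. Each cut $x=x_{\min}+k$ has its top endpoint on an edge of $C^+$ and its bottom endpoint on an edge of $C^-$. The task is to assign each of the $m=\lceil h\rceil-1$ abscissae to its edge on each chain, i.e.\ to merge a sorted list of $m$ query values with a sorted chain of at most $n$ vertices. Ties with vertices are resolved by the left-edge rule.

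I use two regimes. If $h<n$, I perform an exponential (galloping) search along each chain: starting from the edge found for cut $k-1$, I locate the edge for cut $k$ by doubling and then binary searching. If $d_k$ edges are skipped, this costs $O(\log(1+d_k))$. Since $\sum_k d_k\le n$, concavity of the logarithm (Jensen) bounds the total by $O(m\log(1+n/m))=O(h\log(1+n/h))$. I output the cuts literally. If $h\ge n$, I instead walk along both chains simultaneously in $O(n)$ time. Between consecutive breakpoints, i.e.\ vertices of $C^+$ or $C^-$, all cuts span the same edge pair. The first cut in such a run and the run length are computed with floor operations on differences of input $x$-coordinates, so there are $O(n)$ runs, each encoded as a pair $(t,m)$. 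In this regime $O(n)=O(h\log(1+n/h))$, since $h\log(1+n/h)\ge n\cdot\tfrac{h}{n}\log(1+n/h)=\Theta(n)$ when $n/h\le 1$ (use $\log(1+y)\ge y\log 2$ for $y\le1$). Note also that $h\log(1+n/h)\ge\log(1+n)$ for $h\ge1$, which covers the $O(\log n)$ preprocessing.

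I expect the main obstacle to be the run computation within the real-RAM restrictions. The first cut in a run is at $x_{\min}+\lceil x_v-x_{\min}\rceil$, which requires ceilings only of differences of input coordinates and is therefore allowed. The remaining work is to check that the tie-breaking for cuts passing exactly through vertices is consistent with the descriptor convention.
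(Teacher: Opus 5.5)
Your proposal is correct and follows the paper's argument: Corollary~\ref{cor:conv.min,partition} gives $\nopt(P)=\lceil h\rceil$ with unit-spaced cuts from the leftmost vertex, $h$ is found by binary search in $O(\log n)$ time, and the reporting bound comes from the same hybrid scheme (literal output when $h<n$, run-length encoding when $h\ge n$) that the paper takes from Chung et al.; your galloping-search and Jensen analysis fills in details the paper only cites. On the tie issue you flagged: under the left-edge rule a cut through a breakpoint $v$ belongs to the run ending at $v$, so the first cut of the run starting at $v$ is at $x_{\min}+\lfloor x_v-x_{\min}\rfloor+1$ rather than $x_{\min}+\lceil x_v-x_{\min}\rceil$, and this is equally computable in the model.
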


\subsection{Lower bound for the value problem}

Let $\mathbf{x} = (x_1, x_2, \ldots, x_n)$ be a circularly sorted list of $n$ distinct integers.
Consider the problem of computing $\max_i x_i$.
We reduce this problem to an instance of $\prob{V}{conv}(P)$ as follows.

Given $\mathbf{x}$, construct a convex polygon $P$ whose vertices are given by $p_i = (x_i, x_i^2)$ in the order of $\mathbf{x}$.
This yields a valid convex polygon, and 
its horizontal width is exactly $\max_i x_i - \min_i x_i$.
Therefore, any algorithm solving $\prob{V}{conv}(P)$ can be used to compute $\max_i x_i- \min_i x_i$ 
without additional queries.

Note that any comparison-based algorithm for computing the maximum of $n$ elements requires $\Omega(\log n)$ queries.
This is because there are $n$ possible positions for the maximum.
Moreover, any such algorithm induces a binary decision tree with at least $n$ leaves.
Thus, the same lower bound applies to $\prob{V}{conv}(P)$ in the decision-tree model.

\subsection{Lower bound for the reporting version}
Recall that each descriptor of a vertical cut in $P$ 
consists of a pair of boundary edges of $P$ spanned by the cut 
together with its $x$-coordinate. 
The problem $\prob{R}{conv}(P)$ requires, with respect to some lossless compression
$(E_P, D_P)$, outputting a codeword that encodes all cut descriptors of an optimal partition
of $P$. We now introduce a problem that reduces to an instance of $\prob{R}{conv}(P)$. 

\subparagraph{Interval-Index Coding problem.}
Let $m$ be a fixed positive integer.
The input is a sorted list $\mathbf{x} = (x_0, x_1, \ldots, x_n)$ of $n+1$ distinct real numbers
such that $0 = x_0 < x_1 < \cdots < x_n = m$.
For each $j \in \{1,\ldots,m\}$, let $z(j)$ be the unique index $i \in \{0,1,\ldots,n-1\}$ 
such that $x_i < j \le x_{i+1}$. 
The task is to report the mapping $j \mapsto z(j)$ using some lossless compression scheme $(E,D)$, 
that is to output a codeword $c$ such that $D(c) = (z(j))_{j=1}^{m}$. 

We reduce this problem to $\prob{R}{conv}(P)$ without additional queries.
Given $\mathbf{x}$, we construct a convex polygon $P$ whose vertices are
$p_i = (x_i, x_i^2)$ for all $0 \le i \le n$, in this order. Then
the horizontal width of $P$ is $m$.
For this instance, an optimal strip partition of $P$ uses exactly the vertical cuts at
$x = 1, 2, \ldots, m-1$. 
Since the upper hull of $P$ consists of a single edge,
the descriptor of the cut at $x=j$ is determined by 
the bottom edge on the lower hull that it intersects, which is precisely the edge $(p_{z(j)}, p_{z(j)+1})$.
Running an algorithm for $\prob{R}{conv}(P)$ on this polygon returns a codeword $c_P$
with respect to some fixed lossless compression $(E_P,D_P)$.
Decoding $c_P$ yields 
all cut descriptors of an optimal strip partition of $P$. 
Since each descriptor identifies the bottom edge $(p_{z(j)}, p_{z(j)+1})$ 
intersected by the cut at $x=j$, 
we can recover the entire sequence 
$(z(j))_{j=1}^m$
without additional queries.
 
Let $f$ be the bijection mapping each bottom edge $(p_{z(j)}, p_{z(j)+1})$ 
to its index $z(j)$.  
We just replace $(E_P, D_P)$ with $E: = E_P \circ f^{-1}$ and $D: = f \circ D_P$ 
without changing the returned codeword $c_P$.
Under this modified pair, the codeword $c_P$ is interpreted as an encoding of $(z(j))_{j=1}^m$, 
thereby completing the reduction.

\subparagraph{Search-problem formulation.}
We formalize the Interval-Index Coding problem as a search problem in the sense of
Lov\'{a}sz et al.~\cite{lovasz_search_1995}.
Let $I$ be 
the set of all valid input sentences 
\[
  I := \{\mathbf{x} = (x_0,\ldots,x_n) \in \mathbb{R}^{n+1}
        \mid 0 = x_0 < x_1 < \cdots < x_n = m \}.
\]
Let $W$ be the set of all nondecreasing integer sequences
\[
  W := \{\mathbf{c} = (c_1,\ldots,c_m) \in \{0,\ldots,n-1\}^m
        \mid c_1 \le \cdots \le c_m \}.
\]
For $\mathbf{x} \in I$ and $c \in W$, 
we define a relation $F \subseteq I \times W$ by 
\[
  (\mathbf{x}, \mathbf{c}) \in F
  \;\Longleftrightarrow\;
  \mathbf{c} = (z(j))_{j=1}^{m},
\]
where $(z(j))_{j=1}^{m}$ is the sequence determined from $\mathbf{x}$ in the
Interval-Index Coding problem.
Every valid input $\mathbf{x} \in I$ has 
at least one \emph{witness} $c\in W$ with $(\mathbf{x},c) \in F$.
Conversely, every sequence $\mathbf{c} \in W$
is realizable as $(z(j))_{j=1}^{m}$ for some valid input $\mathbf{x} \in I$.
The search problem defined by $F$ asks, given an input $\mathbf{x} \in I$, to output
a witness $c \in W$ such that $(\mathbf{x},c) \in F$.

In the deterministic decision-tree model, 
an algorithm for this search problem 
queries entries of the input sequence $\mathbf{x}$ and,
at a leaf, outputs a single witness $\mathbf{c} \in W$.
The deterministic decision-tree complexity of $F$ is
\[d(F):= \min\{\textsf{depth}(T) : T \text{ is a deterministic decision tree for } F\}. \]
Any decision tree $T$ for $F$ must have at least $|W|$ leaves, since different witnesses
in $W$ must appear as outputs at distinct leaves.
Thus,
$\textsf{depth}(T) \ge \lceil \log_2 |W| \rceil$.

Introducing a lossless compression scheme $(E,D)$ into the search problem does not
change the cardinality of the witness set $W$.
Because $(E,D)$ is lossless, different witnesses in $W$ must be mapped to
different codewords and can be recovered from them.
Thus there is a one-to-one correspondence between $W$ and the set of possible
codewords, so the search problem still has $|W|$ distinct witnesses.
Therefore, the number of distinct witnesses 
equals the number of ways to place 
$m$ indistinguishable balls into $n$ ordered bins (the classical stars-and-bars argument), 
that is $|W| = \binom{m+n-1}{n-1}$. 

\begin{lemma}\label{lem:log.binom}
    For positive integers $n,m$, 
    $\log_2\binom{m+n-1}{n-1} = \Omega(m\log(1+n/m))$.
\end{lemma}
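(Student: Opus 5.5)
We use the elementary bound $\binom{a}{b} \ge (a/b)^b$, applied to whichever of the two symmetric forms $\binom{m+n-1}{n-1} = \binom{m+n-1}{m}$ is convenient, and split into two cases according to whether $n \le m+1$ or $n > m+1$.

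\emph{Case $n-1 \ge m$.} We write the binomial as $\binom{m+n-1}{m} \ge \left(\frac{m+n-1}{m}\right)^m$. This gives $\log_2\binom{m+n-1}{n-1} \ge m\log_2\left(1+\frac{n-1}{m}\right)$. Since $n-1 \ge n/2$ for $n\ge 2$, we get $1+\frac{n-1}{m} \ge 1+\frac{n}{2m} \ge \sqrt{1+n/m}$. The last inequality holds because $(1+t/2)^2 \ge 1+t$. Hence the bound is at least $\frac{m}{2}\log_2(1+n/m)$. The case $n=1$ is covered only in the degenerate sense: there $n-1\ge m$ forces $m=0$, so it does not occur for positive $m$.

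\emph{Case $n-1 < m$.} We use the other form, $\binom{m+n-1}{n-1} \ge \left(\frac{m+n-1}{n-1}\right)^{n-1} \ge 2^{n-1}$, because $m+n-1 \ge 2(n-1)$. So the left-hand side is at least $n-1$. On the other side, $m\log(1+n/m) \le m\cdot\frac{n}{m}\log_2 e = O(n)$ by $\log(1+t)\le t$. Thus it suffices to have $n-1 = \Omega(n)$, which holds for $n\ge 2$.

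The only real obstacle is the edge case $n=1$. There $\binom{m}{0}=1$, so the left-hand side is $0$, while $m\log(1+1/m)$ is a positive constant, roughly $\log e$. The asymptotic claim therefore fails literally for $n=1$ and must be read for $n\ge 2$; in our application $n\ge 2$ always holds, since $x_0=0<x_n=m$ with $m\ge 1$. Alternatively, one can say the claim holds up to an additive constant. I would state the lemma with $n \ge 2$, or note this caveat explicitly, and otherwise the two cases combine to give $\log_2\binom{m+n-1}{n-1} \ge c\, m\log(1+n/m)$ for an absolute constant $c>0$.
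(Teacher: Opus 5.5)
Your argument is correct, and your Case 1 is exactly the paper's proof. The paper writes $\binom{m+n-1}{m}=\prod_{i=1}^{m}\bigl(1+\frac{n-1}{i}\bigr)\ge\bigl(1+\frac{n-1}{m}\bigr)^{m}\ge(1+n/m)^{m/2}$. Like your bound $\binom{a}{b}\ge(a/b)^b$, this chain holds for every $m\ge1$ and $n\ge2$, so the hypothesis $n-1\ge m$ is not needed and your Case 2 is redundant.

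You are also right that the statement is literally false at $n=1$: there the left side is $\log_2\binom{m}{0}=0$. The paper's justification ``$1\ge m\ln(1+1/m)$'' uses $\binom{m}{0}=1$ in place of its logarithm. One correction to your side remark: $0=x_0<x_n=m$ does not by itself force $n\ge2$. What forces it is that the reduction builds a polygon on $n+1\ge3$ vertices.
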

\begin{proof}
The inequality is trivial for $n=1$, since $1 \ge m \ln (1+\frac{1}{m})$ for all $m>0$.
So we assume $n \ge 2$.
By symmetry, 
\[
\binom{m+n-1}{n-1}
= \binom{m+n-1}{m}
= \prod_{i=1}^m \frac{(n-1)+i}{i}
= \prod_{i=1}^m \left(1 + \frac{n-1}{i}\right).
\]
Since $1 \le i \le m$ and $n\ge 2$, we have
\[
1+\frac{n-1}{i}
\;\ge\;
1+\frac{n-1}{m}
\;\ge\;
1+\frac{n}{2m}
\;\ge\;
\sqrt{1+\frac{n}{m}},
\]
where the last inequality follows from $(1+x/2)^2 \ge 1+x$ for all $x\ge 0$.
Thus,
\[
\binom{m+n-1}{n-1}
= \binom{m+n-1}{m}
\;\ge\; \left(1+\frac{n-1}{m}\right)^m \;\ge\; \left(1+\frac{n}{m}\right)^{m/2}.\]
Taking base-2 logarithms of both sides completes the proof.
\end{proof}

By Lemma~\ref{lem:log.binom}, we have $d(F) = \Omega\bigl(m \log(1 + n/m)\bigr)$.
This lower bound carries over to $\prob{R}{conv}(P)$ in the decision–tree model, 
yielding a lower bound of $\Omega\bigl(h \log (1+n/h)\bigr)$ for any convex $n$-gon with horizontal width $h\ge 1$.

\begin{theorem}\label{thm:conv.alg.optimality}
Let $P$ be a convex polygon with $n$ vertices stored in an array in order along its boundary,
and let $h\ge 1$ be a horizontal width of $P$.
In the algebraic decision-tree model, 
any algorithm for $\prob{V}{conv}(P)$ requires
$\Omega(\log n)$ time, and any algorithm for $\prob{R}{conv}(P)$
requires $\Omega\bigl(h \log (1 + n/h)\bigr)$ time in the worst case.
\end{theorem}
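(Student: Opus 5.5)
The proof assembles the two reductions developed in the preceding subsections. Each reduction maps an instance of a problem with a known counting lower bound to an instance of $\prob{V}{conv}$ or $\prob{R}{conv}$ \emph{without additional queries}. So any decision tree for the polygon problem yields a decision tree of the same depth for the source problem, and the lower bound transfers.

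For the value version, I would start from the problem of computing $\max_i x_i$ of a circularly sorted list of $n$ distinct integers. The map $x_i\mapsto p_i=(x_i,x_i^2)$ places the points on a parabola. They are in convex position, and since the list is circularly sorted, the cyclic order gives a convex polygon whose horizontal width is $\max_i x_i-\min_i x_i$. The step that needs care is extracting the maximum itself, not just the width, from the tree. I would argue at the level of leaves. The $n$ cyclic shifts of a fixed sorted list (for example, shifts of $(1,2,\dots,n)$ with varying offsets of the values) must be distinguished by any tree that reports the width or the position of the extreme vertex. Otherwise two inputs with different maximum positions reach the same leaf, and the algorithm errs on a perturbed input in which the unqueried entries are shifted so that the width changes. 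This forces at least $n$ leaves, and a binary (or constant-branching algebraic) decision tree therefore has depth $\Omega(\log n)$.

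For the reporting version, I would use the Interval-Index Coding reduction with $m=h$ an integer. The polygon with vertices $(x_i,x_i^2)$, $0=x_0<\dots<x_n=m$, has width $m$. By Corollary~\ref{cor:conv.min,partition} its optimal partition is unique, with cuts exactly at $x=1,\dots,m-1$, so its codeword determines $(z(j))_j$. The witness count $|W|=\binom{m+n-1}{n-1}$ together with Lemma~\ref{lem:log.binom} gives depth $\Omega(m\log(1+n/m))$. To cover non-integer $h\ge1$, I would take $m=\lfloor h\rfloor$ and stretch the construction slightly, or simply note that $h\log(1+n/h)$ changes by at most a constant factor between $\lfloor h\rfloor$ and $h$ when $h\ge1$.

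The main obstacle is justifying that every $\mathbf c\in W$ is actually reachable at a distinct leaf in the \emph{algebraic} decision-tree model rather than in the comparison model. I would handle this by exhibiting, for each $\mathbf c$, an explicit valid $\mathbf x$ realizing it. Distinct outputs then require distinct leaves regardless of the form of the test functions, since each leaf outputs a single codeword and losslessness makes the codewords injective. Hence $\textsf{depth}\ge\log_3|W|$ for ternary sign tests, which is still $\Omega(h\log(1+n/h))$. A second subtlety is the edge case $z(m)$ and the cut at $x=m$, which is not an actual cut. Dropping it changes $|W|$ only to $\binom{m+n-2}{n-1}$, and Lemma~\ref{lem:log.binom} still applies with $m-1$ in place of $m$, giving the same asymptotic bound for $h\ge2$. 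The case $h<2$ is covered by the $\Omega(\log n)$ bound, since the output must at least identify the relevant lower edge.
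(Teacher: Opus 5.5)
Your proposal follows the paper's proof. For the value version you lift a circularly sorted list to $p_i=(x_i,x_i^2)$, and for the reporting version you use the Interval-Index Coding reduction, leaf counting via losslessness, and Lemma~\ref{lem:log.binom}. Your extra remarks refine rather than change this route; in particular, you correctly note that $z(m)=n-1$ is forced, so only $\binom{m+n-2}{n-1}$ of the sequences in $W$ are realizable (the paper claims all are), and this leaves the bound unchanged for $h\ge 2$.
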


Combining Theorem~\ref{thm:conv.algorithm} and~\ref{thm:conv.alg.optimality}, 
the algorithms for convex polygons 
are optimal in the (algebraic) decision-tree model.
Note that the horizontal width $h$ of $P$ is an input-sensitive parameter of $P$
that can be computed in $O(\log n)$ time. 
Thus, the algorithm for $\prob{R}{conv}(P)$ is input-sensitive optimal with respect to $h$ in the decision-tree model.

\begin{figure}[b!]
  \centering
  \includegraphics[width=\textwidth]{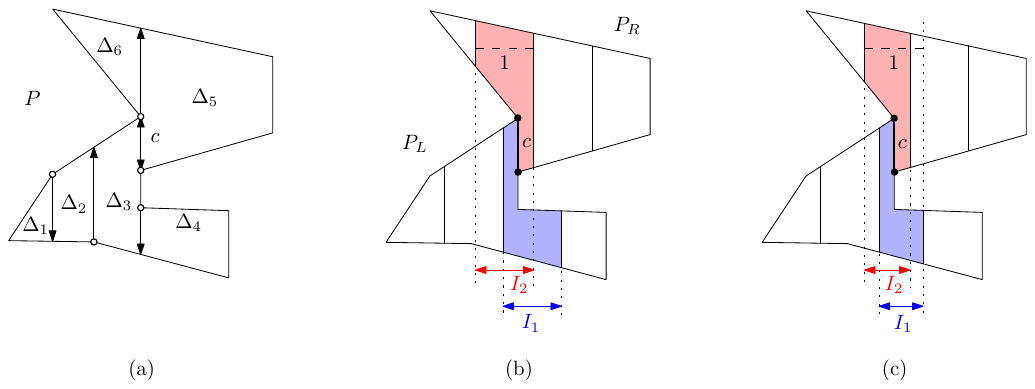}
  \caption{\small (a) Vertical trapezoidal decomposition of $P$ into $\td = \{\Delta_1,\ldots,\Delta_6\}$, and decomposition of $P$ into $P_L$ and $P_R$ by a vertical cut $c$ of $\td$.  (b-c) Merging optimal strip partitions of $P_L$ and $P_R$ incident to $c$ may or may not require an additional cut, depending on whether $|I_1 \cup I_2| \le 1$ or $> 1$.}
  \label{fig:divide.polygon}
\end{figure}

\section{Lattice formulation for orthogonal strip partitions of simple polygons}\label{sec:simple.alg}
In this section, $P$ is a simple polygon with $n$ vertices, stored in an array in counterclockwise order along its boundary.
We give an $O(n \log n)$-time, $O(n)$-space algorithm for both $\prob{V}{\mathrm{simp}}(P)$ and $\prob{R}{\mathrm{simp}}(P)$.

We first compute a vertical trapezoidal decomposition of $P$ and its dual graph in $O(n)$ time using the algorithm of Chazelle~\cite{chazelle_triangulating_1991}; see Figure~\ref{fig:divide.polygon}(a).
Let $\td = \{\Delta_1,\Delta_2,\ldots,\Delta_m\}$ be the set of trapezoids and let $G = (\td,E)$ be its dual graph. 
Then $m = O(n)$ and $G$ is a tree.

Let $c$ be a vertical cut in the trapezoidal decomposition.
Cutting $P$ along $c$ yields two subpolygons $P_L,P_R \subset P$, where
$P_L$ is the part of $P$ lying to the left of $c$ and $P_R$ is the part
lying to the right.
Consider optimal strip partitions of $P_L$ and $P_R$ in which every strip has
horizontal width at most~$1$.
Taking the union of these two partitions, 
with $c$ as a common boundary, yields a feasible strip partition of $P$,
implying $\nopt(P) \le \nopt(P_L) + \nopt(P_R)$.
Conversely, given an optimal partition of $P$, 
restricting it to $P_L$ and $P_R$ yields feasible strip partitions of these subpolygons.
Thus, $\nopt(P_L) + \nopt(P_R) -1 \le \nopt(P) \le  \nopt(P_L) + \nopt(P_R)$,  
and we need to determine whether the lower or upper bound is achieved in a given instance.

To analyze this, observe that in any strip partition of $P_L$, 
there is exactly one strip incident to $c$.
For $P_L$, let $\mathcal{I}_c(P_L)$ denote the set of $x$-intervals obtained as follows:
for each optimal strip partition of $P_L$, take the strip incident to $c$ and project it
onto the $x$-axis, and include the resulting interval in $\mathcal{I}_c(P_L)$.
We define $\mathcal{I}_c(P_R)$ for $P_R$ analogously.
Every interval in $\mathcal{I}_c(P_L)$ and in $\mathcal{I}_c(P_R)$ 
contains $x(c)$, 
since its corresponding strip is incident to $c$.

When gluing optimal partitions of $P_L$ and $P_R$ along $c$, the total number of
strips decreases by one only when the two strips incident to $c$ can be merged
into a single strip without violating the width constraint.
In terms of $x$-intervals, this is equivalent to checking whether there exist
intervals $I_1 \in \mathcal{I}_c(P_L)$ and $I_2 \in \mathcal{I}_c(P_R)$ such that
$\lvert I_1 \cup I_2 \rvert \le 1$. See Figure~\ref{fig:divide.polygon}.

\begin{observation}\label{lem:determine.simple.opt}
We have $\nopt(P) = \nopt(P_L) + \nopt(P_R) - 1$ if and only if 
there exist intervals $I_1 \in \mathcal{I}_c(P_L)$ and $I_2\in \mathcal{I}_{c}(P_R)$ such that 
$|I_1 \cup I_2|\le1$. 
\end{observation}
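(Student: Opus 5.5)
We prove both directions separately, using the sandwich $\nopt(P_L)+\nopt(P_R)-1 \le \nopt(P) \le \nopt(P_L)+\nopt(P_R)$ established just above. Throughout, we use two facts. First, any strip partition of $P$ restricts to strip partitions of $P_L$ and of $P_R$. Second, since $c$ is a full cut of $P$ (a cut of the trapezoidal decomposition whose endpoints lie on $\mybd{P}$), exactly one strip of any partition of $P_L$ (resp.\ $P_R$) is incident to $c$; in a partition of $P$, the pieces meeting $c$ are likewise all strips crossing or touching $c$.

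\emph{($\Leftarrow$)} Suppose $I_1\in\mathcal{I}_c(P_L)$ and $I_2\in\mathcal{I}_c(P_R)$ satisfy $|I_1\cup I_2|\le 1$. By definition, there are optimal strip partitions $\Pi_L$ of $P_L$ and $\Pi_R$ of $P_R$ whose strips $Q_L,Q_R$ incident to $c$ project onto $I_1,I_2$. We glue $\Pi_L$ and $\Pi_R$ along $c$ and then remove $c$, replacing $Q_L,Q_R$ by $Q=Q_L\cup Q_R$.
\begin{itemize}
\item $Q$ is connected, because both pieces contain the whole segment $c$. The incident strip of $P_L$ contains all of $c$, since $c$ is a single edge of $\mybd{P_L}$ and no cut of $\Pi_L$ can lie along it.
\item The projection of $Q$ is $I_1\cup I_2$, an interval of length at most $1$ (both $I_1$ and $I_2$ contain $x(c)$). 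Hence $Q$ is a strip.
\item All other pieces and cuts are unchanged.
\end{itemize}
This yields a feasible partition with $\nopt(P_L)+\nopt(P_R)-1$ strips, and the lower bound forces equality.

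\emph{($\Rightarrow$)} Suppose $\Pi$ is an optimal partition of $P$ with $\nopt(P_L)+\nopt(P_R)-1$ strips. Restrict $\Pi$ to $P_L$ and $P_R$: each strip $S$ of $\Pi$ splits into $S\cap P_L$ and $S\cap P_R$. Since $c$ is a cut of $P$, each of these parts is again connected, or at worst a union of connected pieces, which we may treat by taking components; each has width at most $1$. The restrictions $\Pi_L,\Pi_R$ are strip partitions. Their total count equals $|\Pi|$ plus the number of strips of $\Pi$ that meet $\myint{}$ on both sides of $c$.

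Because $|\Pi_L|\ge\nopt(P_L)$ and $|\Pi_R|\ge\nopt(P_R)$, the count forces three things:
\begin{itemize}
\item exactly one strip $S$ of $\Pi$ crosses $c$;
\item $|\Pi_L|=\nopt(P_L)$ and $|\Pi_R|=\nopt(P_R)$, so both restrictions are optimal;
\item $S\cap P_L$ and $S\cap P_R$ are each connected, since otherwise the count would exceed the bound.
\end{itemize}
Thus $S\cap P_L$ is the strip of $\Pi_L$ incident to $c$, and its projection $I_1$ lies in $\mathcal{I}_c(P_L)$. Similarly $I_2\in\mathcal{I}_c(P_R)$. Finally, $I_1\cup I_2$ is the projection of $S$, so $|I_1\cup I_2|\le 1$.

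The main obstacle is the bookkeeping in the ($\Rightarrow$) direction. We must justify that restricting along $c$ increases the piece count by exactly the number of strips crossing $c$, and that this number is at least one. We handle this by noting that the relative interior of $c$ lies in $\myint{P}$ and is covered by the pieces of $\Pi$ meeting it. If no strip crossed $c$ (so $c$ were itself a cut of $\Pi$), then $|\Pi|=|\Pi_L|+|\Pi_R|\ge\nopt(P_L)+\nopt(P_R)$, a contradiction. If two or more strips crossed $c$, then $|\Pi_L|+|\Pi_R|\ge |\Pi|+2$, which again contradicts the assumed count. These degenerate cases are thus excluded by pure counting.
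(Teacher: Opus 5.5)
Your proof is correct and follows the paper's reasoning: for ($\Leftarrow$) you glue optimal partitions of $P_L$ and $P_R$ and merge the two strips incident to $c$, and for ($\Rightarrow$) you restrict an optimal partition of $P$ to $P_L$ and $P_R$, which is the same restriction argument the paper uses for the sandwich bound. The only difference is that you make the counting in the ($\Rightarrow$) direction explicit, whereas the paper leaves it informal.
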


Moreover, 
it suffices to consider only inclusion-minimal intervals in each family $\mathcal{I}_c(P_L)$ and $\mathcal{I}_c(P_R)$.
If $(I_1,I_2) \in \mathcal{I}_c(P_L)\times \mathcal{I}_c(P_R)$ satisfies 
$|I_1 \cup I_2| \le 1$ and there exists $I_1' \in \mathcal{I}_c(P_L)$ with $I_1' \subsetneq I_1$,
then $|I_1' \cup I_2| \le 1$ also holds; by symmetry, the same argument applies to $\mathcal{I}_c(P_R)$.
Thus, we may restrict attention to inclusion-minimal intervals in
$\mathcal{I}_c(P_L)$ and $\mathcal{I}_c(P_R)$.

\begin{observation}\label{obs:minimal-intervals}
To decide whether there exist
$I_1 \in \mathcal{I}_c(P_L)$ and $I_2 \in \mathcal{I}_c(P_R)$ with 
$|I_1 \cup I_2| \le 1$, it suffices to consider only the inclusion-minimal
intervals in $\mathcal{I}_c(P_L)$ and $\mathcal{I}_c(P_R)$.
\end{observation}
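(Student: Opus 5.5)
The plan is to reduce any witnessing pair to a pair of inclusion-minimal intervals, replacing one coordinate at a time. Two ingredients are needed: the monotonicity of $|I_1\cup I_2|$ under shrinking either argument, and the existence of an inclusion-minimal interval below every interval of each family.

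\emph{Monotonicity.} Every interval in $\mathcal{I}_c(P_L)\cup\mathcal{I}_c(P_R)$ contains $x(c)$. Hence for any $I_1\in\mathcal{I}_c(P_L)$ and $I_2\in\mathcal{I}_c(P_R)$, the union $I_1\cup I_2$ is itself a closed interval. Its length is
\[
|I_1\cup I_2| = \max(\sup I_1,\sup I_2)-\min(\inf I_1,\inf I_2).
\]
This expression is monotone under inclusion in each argument: if $I_1'\subseteq I_1$, then $\sup I_1'\le\sup I_1$ and $\inf I_1'\ge\inf I_1$, so $|I_1'\cup I_2|\le|I_1\cup I_2|$. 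The same holds in the second argument.

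\emph{Existence of minimal elements.} The main issue is to show that every $I\in\mathcal{I}_c(P_L)$ contains an inclusion-minimal member of $\mathcal{I}_c(P_L)$. The family may be infinite, since optimal partitions can vary continuously, so this needs an argument. I would show that $\mathcal{I}_c(P_L)$ is finite.

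By Observation~\ref{lem:determine.simple.opt}'s setup, each interval is the $x$-projection of the strip incident to $c$. That strip is the connected piece of $P_L$ adjacent to $c$ and bounded on the left by at most finitely many vertical cuts. Its projection $[a,x(c)]$ therefore has $a$ either at a vertex $x$-coordinate or at a cut position.

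One could normalize optimal partitions greedily: push cuts as far right as possible without increasing the count. This suggests that the relevant left endpoints come from a finite set. A cleaner route, however, avoids finiteness altogether. The inclusion order on intervals containing $x(c)$ reduces to comparing the endpoints $\inf I$ and $\sup I$. For $P_L$ we have $\sup I=x(c)$ fixed, so the minimal intervals are exactly those with the largest possible $\inf I$.

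I would then show that this supremum is attained by a compactness argument. The set of optimal partitions with a fixed combinatorial type is a closed polyhedral set of cut positions, and there are finitely many types. Limits of feasible partitions, with widths at most $1$ and the same number of strips, remain feasible. The same applies to $P_R$ with $\inf I=x(c)$ fixed.

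\emph{Conclusion.} Given a pair with $|I_1\cup I_2|\le1$, replace $I_1$ by a minimal $I_1'\subseteq I_1$. By monotonicity, $|I_1'\cup I_2|\le1$. Then replace $I_2$ by a minimal $I_2'\subseteq I_2$, and again $|I_1'\cup I_2'|\le1$. The converse direction is trivial, since minimal intervals are members of the families.

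The compactness and attainment step is where care is needed. I expect it to be the main obstacle and would handle it via the finite-type polyhedral argument.
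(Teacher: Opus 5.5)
Your monotonicity-and-replacement step is exactly the paper's argument, and the paper stops there. It never shows that every member of $\mathcal{I}_c(P_L)$ or $\mathcal{I}_c(P_R)$ contains an inclusion-minimal member. You are right that this is needed, but your justification of it fails.

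\textbf{The one-sided claim is false.} You assume $\sup I = x(c)$ for every $I\in\mathcal{I}_c(P_L)$. But $P_L$ is the piece lying locally to the left of $c$ after cutting $P$ along $c$, not $P\cap\{x\le x(c)\}$. It can wrap around above or below $c$ and extend to the right of $x(c)$, so the strip incident to $c$ projects to an interval that merely contains $x(c)$. If your claim held, all members of $\mathcal{I}_c(P_L)$ would share the endpoint $x(c)$ and hence be nested, so $\min_{\subseteq}(\mathcal{I}_c(P_L))$ would never have more than one element. In the analogous situation of Lemma~\ref{lem:no.greedy}, however, the antichain consists of $n$ pairwise incomparable intervals, all containing $x(e)$. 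Hence ``minimal $=$ largest $\inf I$'' is wrong, and maximizing one scalar does not produce a minimal member below a given $I$. Your first idea, finiteness of $\mathcal{I}_c(P_L)$, also fails: the family is typically infinite, since cut positions vary continuously.

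\textbf{The compactness step is not automatic either.} Take $P_R$ with vertices $(0,0),(1.2,0),(1.2,3),(0,3),(0,2),(\tfrac12,\tfrac32),(0,1)$ and $c=\{0\}\times[0,1]$.
\begin{itemize}
\item Two strips are optimal, realized by a single full-height cut at $x\in(\tfrac12,1]$.
\item At $x=\tfrac12$ the vertical segment passes through the reflex vertex $(\tfrac12,\tfrac32)$, so it is not a cut in the sense of Section~\ref{sec:preliminary}.
\item Either half of that segment alone leaves a piece of width $1.2$.
\end{itemize}
Thus, if pieces are the regions cut off by cuts, $\mathcal{I}_c(P_R)=\{[0,x] : \tfrac12<x\le 1\}$ has no inclusion-minimal member. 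You get $[0,\tfrac12]$ only if you use both half-cuts and count the two left cells, which touch at the vertex, as one connected piece. So ``limits of feasible partitions remain feasible'' depends on such a convention, and you must fix it explicitly.

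\textbf{A two-parameter argument is needed.} Once the convention is fixed, show that $\{(\inf J,\sup J) : J\in\mathcal{I}_c(P_L)\}$ is closed in $\mathbb{R}^2$. It is bounded, since every member contains $x(c)$ and has length at most $1$. Then, for a given $I$, choose a member $J\subseteq I$ of minimum length. Any member properly contained in $J$ would be strictly shorter and still contained in $I$, so $J$ is inclusion-minimal.
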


\subparagraph{Algorithm overview and lattice formulation.}
Our algorithm performs a bottom-up dynamic program over the vertical trapezoidal decomposition of $P$ and its dual tree $G$, rooted at an arbitrary node.
For a node $u$ of $G$, let $Q$ be the subpolygon formed by the union of trapezoids in the subtree rooted at $u$.
If $u$ is not the root, we designate a vertical edge $c$ of $Q$ that serves as the interface to its parent.
The associated subproblem for $(Q,c)$ 
is to determine the minimum number $\nopt(Q)$, 
together with the possible $x$-intervals of strips incident to $c$ in optimal partitions of $Q$.

Observation~\ref{lem:determine.simple.opt} shows that, when two subpolygons are glued across $c$, the optimum of the combined region can be determined from the optimum of each subpolygon and the $x$-intervals of the strips incident to $c$ in optimal partitions of the two subpolygons.
Moreover, by Observation~\ref{obs:minimal-intervals}, it suffices to retain only the inclusion-minimal such intervals.
Thus, for each subproblem $(Q,c)$, we maintain the optimal value $\nopt(Q)$ together with the inclusion-minimal intervals in $\mathcal{I}_c(Q)$.


We initialize these DP states at leaf trapezoids and combine them bottom-up at internal nodes, 
eventually reaching the root and obtaining $\nopt(P)$.
The crucial observation is that the update from child subproblems to a parent subproblem 
can be carried out entirely on these interval families.
These interval families form an order-theoretic structure, 
namely the Clarke--Cormack--Burkowski (CCB) lattice.
This lattice comes with two basic operations, meet($\wedge$) and join($\vee$).
As we show next, the two basic operations naturally describe the two update situations of the dynamic program:
gluing two incident strips into one, or introducing a new cut when such a gluing would violate the unit-width constraint.


\subsection{Antichain completion and the Clarke–Cormack–Burkowski lattice} 
Let $(X,\le)$ be a partially ordered set (poset). 
A subset $A \subseteq X$ is an \emph{antichain} if no two distinct elements of $A$ are comparable, that is, for all $a,b \in A$ with $a \ne b$, we have neither $a \le b$ nor $b \le a$. 
We denote by $\ac{X}$ the set of all antichains of $X$.

Boldi and Vigna~\cite{boldi_lattice_2018} 
order antichains by their lower sets. 
For $\mathcal{A} \in \ac{X}$, its \emph{lower set} is
$\downarrow\!(\mathcal{A}) := 
\{ x \in X \mid \exists y \in \mathcal{A} \text{ with } x\le y \}$.
They define a partial order $\preceq$ 
on $\ac{X}$ by
$\mathcal{A} \preceq \mathcal{B}
  \Longleftrightarrow
  \downarrow\!(\mathcal{A}) \subseteq \downarrow\!(\mathcal{B})$. 
  The poset $(\ac{X}, \preceq)$ is called the \emph{antichain completion} of $(X,\le)$.

Now let $O$ be a totally ordered set, and let $\cinterval_O$ be the set of all closed intervals of $O$. 
We order $\cinterval_O$ by reverse inclusion, so $(\cinterval_O,\supseteq)$ is a poset. 
We write $\ccb_O \coloneqq \ac{\cinterval_O}$ and view $(\ccb_O,\preceq)$ as the antichain completion of $(\cinterval_O,\supseteq)$.
For our purposes, it suffices to keep in mind the following description of $\ccb_O$ and $\preceq$.
Each element $\mathcal{A} \in \ccb_O$ is a family of intervals in which no interval properly contains another. 
For $\mathcal{A}, \mathcal{B} \in \ccb_O$, the relation $\mathcal{A} \preceq \mathcal{B}$ means that every interval in $\mathcal{A}$ contains at least one interval in~$\mathcal{B}$.

\begin{figure}[t!]
  \centering
  \includegraphics[width=\textwidth]{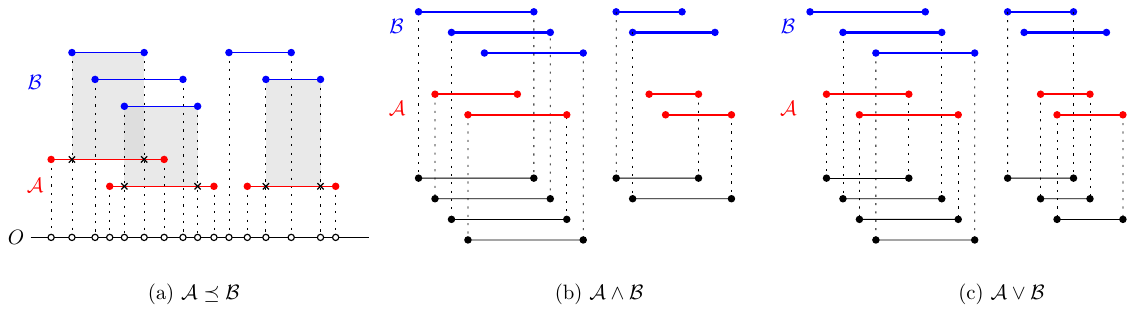}
  \caption{\small Let $(\ccb_O,\preceq)$ be the antichain completion of $(\cinterval_O,\supseteq)$. The figure illustrates the partial order, meet, and join operations for two antichains $\mathcal{A}, \mathcal{B} \in \ccb_O$: 
(a) $\mathcal{A} \preceq \mathcal{B}$; 
(b) $\mathcal{A} \wedge \mathcal{B}$; 
(c) $\mathcal{A} \vee \mathcal{B}$.}
  \label{fig:ccb.lattice}
\end{figure}

\medskip

\subparagraph{Clarke--Cormack--Burkowski (CCB) lattice.}
A poset is a \emph{lattice} if every pair of elements has a unique least upper bound (join) and 
a unique greatest lower bound (meet). 
For a locally finite, totally ordered set $O$, Boldi and Vigna~\cite{boldi_lattice_2018} show that 
the antichain completion 
$\ccb_O$ forms a complete lattice. 
This structure, known as the \emph{Clarke--Cormack--Burkowski lattice}, is named after the work of Clarke, Cormack, and Burkowski on minimal-interval semantics in information retrieval~\cite{clarke_algebra_1995}.

Let $O$ be a locally finite, totally ordered set, 
so $\ccb_{O}$ is a CCB lattice. 
Each $\mathcal{A} \in \ccb_{O}$ is a family of intervals of $O$ in which no interval properly contains another. 
For a family $\mathcal{F}$ of intervals, let
$\min_{\subseteq}(\mathcal{F})
  \coloneqq
  \{ I \in \mathcal{F} \mid \nexists J \in \mathcal{F} \text{ with } J \subsetneq I \}$
be the set of inclusion-minimal intervals in $\mathcal{F}$. 
The join and meet of $\mathcal{A},\mathcal{B} \in \ccb_{O}$ are defined, respectively, as
\[
  \mathcal{A} \vee \mathcal{B}
  \coloneqq \min_{\subseteq}(\mathcal{A} \cup \mathcal{B})
  \quad\text{and}\quad
  \mathcal{A} \wedge \mathcal{B}
  \coloneqq \min_{\subseteq}
     \bigl\{
       [\min\{\ell,\ell'\},\,\max\{r,r'\}]
       \mid [\ell,r]\in\mathcal{A},\ [\ell',r']\in\mathcal{B}
     \bigr\}.
\]
See Figure~\ref{fig:ccb.lattice} for an illustration of the partial order, meet, and join of two antichains.

Boldi and Vigna also show that $\ccb_{O}$ is a \emph{completely distributive} lattice: 
arbitrary meets distribute over arbitrary joins in this lattice. 
In particular, for any family $(\mathcal{A}_i)_{i\in I} \subseteq \ccb_{O}$ and any $\mathcal{B} \in \ccb_{O}$,
we have $\Bigl(\bigvee_{i\in I} \mathcal{A}_i\Bigr) \wedge \mathcal{B}
  \;=\;
  \bigvee_{i\in I} \bigl(\mathcal{A}_i \wedge \mathcal{B}\bigr)$. 

\subparagraph{Lattice-based representation of strip partitions.}
For a vertical cut $c$, the sets 
$\min_{\subseteq}(\mathcal{I}_c(P_L))$ and $\min_{\subseteq}(\mathcal{I}_c(P_R))$
are antichains of intervals, since no interval in each set properly contains
another. By Observation~\ref{obs:minimal-intervals}, these antichains suffice to
decide whether the strips incident to $c$ can be merged.
In fact, the equality 
$\nopt(P) = \nopt(P_L) + \nopt(P_R) - 1$ holds if and only if
the meet
$\min_{\subseteq}(\mathcal{I}_c(P_L)) \;\wedge\;
\min_{\subseteq}(\mathcal{I}_c(P_R))$ contains an interval of length at most~$1$.
Thus,
in our dynamic program, 
we only need to maintain, for each subpolygon $Q$ with vertical edge~$c$, the antichain 
$\iant(Q,c) \coloneqq \min_{\subseteq}\bigl(\mathcal{I}_c(Q)\bigr)$.
In other words, $\iant(Q,c)$ 
is the set of inclusion-minimal $x$-intervals of strips incident to $c$ across all optimal strip partitions of $Q$.

At first glance, these antichains appear to be families of intervals of $\mathbb{R}$,
which is not locally finite.  
However, as we will show, once we fix some $\varepsilon \ge 0$,
every interval that arises in our algorithm has endpoints of the form
$x(p) + m$ or $x(p) + m \pm \varepsilon$, where $p$ is a vertex of $P$
and $m \in \mathbb{Z}$.
Thus all endpoints lie in
\[
  \mathbb{Z}(P,\varepsilon)
  := \{\,x(p) + m - \varepsilon,\; x(p) + m,\; x(p) + m + \varepsilon 
        \mid p \text{ is a vertex of } P,\; m \in \mathbb{Z} \,\},
\]
a locally finite, totally ordered set.
Consequently, every antichain maintained by our dynamic program lies in the 
CCB lattice $\ccb_{\mathbb{Z}(P,\varepsilon)}$.
This lattice is completely distributive, a property that we use  
in the lower-bound proof in Section~\ref{subsec:lower.self}.

\begin{figure}[t!]
  \centering
  \includegraphics[width=0.55\textwidth]{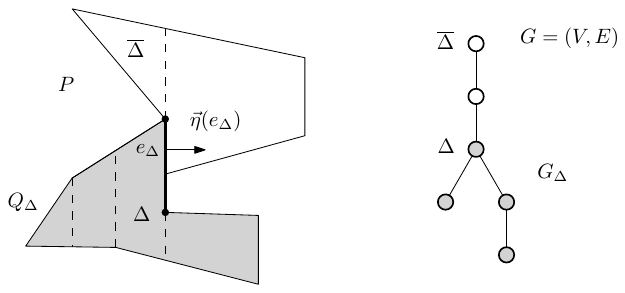}
  \caption{\small Subproblem structure $(Q_\Delta, e_\Delta, \normal(e_\Delta))$ is defined for every non-root node $\Delta$ of $G = (V,E)$.
}
  \label{fig:sub.structure}
\end{figure}


\subsection{Subproblem structures in the refined dual tree}\label{subsec.dynamic.approach}
We now describe the subproblem structure underlying our dynamic program for simple polygons,
formulated in terms of antichains of intervals.

We associate each subproblem with a pair $(Q,e)$, where $Q$ is a union of trapezoids from the vertical trapezoidal decomposition $\td$ and $e$ is a vertical boundary edge of $Q$. 
(Recall that $\mathcal{I}_e(Q)$ denotes the family of $x$-intervals of all strips incident to edge $e$ across all optimal strip partitions of $Q$.)
We compress this family into an antichain by keeping only inclusion-minimal intervals:
$\iant(Q,e) \coloneqq \min_{\subseteq}\bigl(\mathcal{I}_e(Q)\bigr)$.
Every interval in $\iant(Q,e)$ has length at most~$1$, and each such interval includes the common $x$-coordinate $x(e)$.

Let $G = (\td, E)$ be the dual graph of the trapezoidal decomposition $\td$, and fix an arbitrary leaf $\overline{\Delta} \in \td$ as the root. 
For each node $\Delta \in \td$, let $G_\Delta$ be the subtree of $G$ rooted at $\Delta$ and let $Q_\Delta \subseteq P$ be the subpolygon obtained as the union of the trapezoids in $G_\Delta$. 
A vertical cut of the trapezoidal decomposition that lies on $\mybd{Q_\Delta}$ is called a \emph{vertical interface} of $Q_\Delta$. 

Because $G$ is a tree, 
all the vertical interfaces of $Q_\Delta$ lie on a single vertical edge of $Q_\Delta$
for $\Delta\neq \overline{\Delta}$.
Let $e_\Delta$ denote this vertical edge.
Note that $e_\Delta$ is contained in $\mybd{\Delta}$.
We define the \emph{normal direction} of $e_\Delta$, denoted $\normal(e_\Delta)$, to be \textsf{right} if $e_\Delta$ lies on the right side of $\Delta$, and \textsf{left} if it lies on the left side. 
We refer to the triple $(Q_\Delta, e_\Delta, \normal(e_\Delta))$ as the \emph{subproblem structure} associated with~$\Delta$. See Figure~\ref{fig:sub.structure}.


\subparagraph{Refining $G$ into a binary tree $\widetilde G$.}
We transform the dual graph $G$ of $\td$ 
into a rooted binary tree $\widetilde G$ 
by locally refining each star centered at a trapezoid. 
Let $\Delta\in \td$ be a trapezoid in $\td$
with subproblem structure $(Q_\Delta,e_\Delta,\normal(e_\Delta))$, and 
assume without loss of generality that $\normal(e_\Delta)=\mathsf{right}$.
Suppose $\Delta$ has $p$ children incident to its left side and $q$ children on its right.
Let $(\Delta_i^L,e_i^L, \normal(e_i^L))_{i=1}^p$ (resp. $(\Delta_j^R,e_j^R, \normal(e_j^R))_{j=1}^q$) be the subproblem structures associated with its left (resp. right) children.
By construction, each $\normal(e_i^L) = \mathsf{right}$ and each $\normal(e_i^R) = \mathsf{left}$.

First, we construct a binary tree $T_L$ whose leaves are the left children of $\Delta$.
If $p = 0$, we do not create $T_L$;
if $p = 1$, then $T_L$ is a single node $\Delta_1^L$.
For $p \ge 2$, we
create $(p-1)$ new internal nodes
$u_2,\ldots,u_p$ and set each $u_i$ as the parent of 
$\Delta_{i}^L$ and $u_{i-1}$ for each $i=2,\ldots,p$, 
with $u_1 := \Delta_1^L$.
Symmetrically, we construct a binary tree $T_R$ for the right children of $\Delta$.

Next, we remove all edges from $\Delta$ to its children in $G$. 
If $T_L$ exists (i.e., $p\ge 1$), 
we attach the root of $T_L$ as the sole child of $\Delta$.
If $T_R$ exists, 
let $w$ be a new internal node whose children are the root of $T_R$ and $\Delta$; otherwise, let $w := \Delta$.
If $\Delta$ has a parent $z$ in $G$, 
we replace the edge between $z$ and $\Delta$ 
by an edge between $z$ and $w$.
The case $\normal(e_\Delta) = \mathsf{left}$ is handled symmetrically.
See Figure~\ref{fig:refine.binary.tree} for an illustration.

Processing all trapezoids in postorder 
replaces every star subgraph in $G$ 
with a binary tree.
At each step, only the adjacency between $\Delta$ and its children
changes; the subtrees rooted at those children remain intact.
This refinement adds only $O(n)$ new internal nodes in total, and 
$\widetilde G$ can be constructed in $O(n)$ time.
We refer to the original nodes of $G$ as \emph{trapezoid nodes} and
the new internal nodes as \emph{bridge nodes}.
By construction, every trapezoid node in $\widetilde G$ 
has at most one child, and 
every bridge node has exactly two children.

\begin{figure}[t!]
  \centering
  \includegraphics[width=0.8\textwidth]{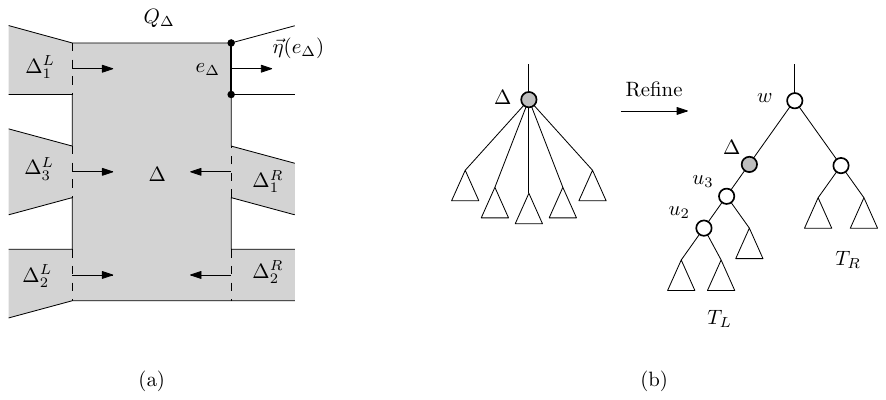}
  \caption{\small (a) The trapezoid $\Delta$ has three left children and two right children in the subtree $G_\Delta$.
(b) Refining the local star centered at $\Delta$ into a binary tree.}
  \label{fig:refine.binary.tree}
\end{figure}

\subparagraph{Subproblem structures in $\widetilde G$.}
We lift the subproblem structures defined for nodes of $G$
to the nodes of the refined tree $\widetilde G$.
For each non-root node $u$, we associate a triple
$(Q_u, e_u, \normal(e_u))$, where
$Q_u$ is a weakly simple subpolygon of $P$,
$e_u$ is a vertical edge on $\mybd{Q_u}$, and
$\normal(e_u)\in\{\mathsf{left},\mathsf{right}\}$ is
the outward normal direction of $e_u$ with respect to $Q_u$.
If $u$ is a trapezoid node corresponding to a trapezoid $\Delta\in\td$,
we inherit its subproblem structure by setting 
$(Q_u,e_u,\normal(e_u)) := (Q_\Delta,e_\Delta,\normal(e_\Delta))$.

Suppose that $u$ is a bridge node with children $v$ and $w$.  
Assume that $(Q_{v},e_{v},\normal(e_v))$ and 
$(Q_{w},e_{w},\normal(e_w))$ are already defined. 
Let $s$ be the minimal vertical segment spanning both $e_v$ and $e_w$, 
and define $Q_u := Q_v \cup s \cup Q_w$.
If $\normal(e_v) = \normal(e_w)$, then $u$ 
is a \emph{same-side bridge}.
We set
$e_u := s$ and $\normal(e_u) := \normal(e_v)$.
In this case, $Q_u$ has boundary edges that touch each other
without crossing, so $Q_u$ is only weakly simple.

If $\normal(e_v) \neq \normal(e_w)$, 
then $u$ is a
\emph{cross-side bridge}. 
In this case, $Q_u$ is always simple and 
one of $e_v, e_w$ properly contains the other.
If $e_v \subsetneq e_w$, 
we set $e_u := e_w \setminus e_v$ and $\normal(e_u) := \normal(e_w)$;
otherwise, $e_u := e_v \setminus e_w$ and $\normal(e_u) := \normal(e_v)$.
Figure~\ref{fig:bridge.node.cap}(a-b) illustrates the subproblem structures at bridge nodes.

For the root node $\rho$ of $\widetilde G$, 
which corresponds to the leaf $\overline{\Delta}$ of $G$,
the region $Q_\rho = P$ has no vertical interface on its boundary. 
Then, we cannot directly define a subproblem structure for $\rho$. 
Let $e$ be the vertical side of $\overline{\Delta}$ that is not incident to any other trapezoid of~$\td$.
Note that $e$ may be a single vertex if $\overline{\Delta}$ is a triangle.
In this degenerate case, 
we conceptually replace $e$ with a very short vertical segment, 
obtained by extending it slightly in both vertical directions.
This extension preserves the $x$-interval of 
$\overline{\Delta}$, keeps $P$ simple, and does not affect any strip partition of $P$. 
We set $e_\rho := e$ and define $\normal(e_\rho)$ as $\mathsf{left}$ 
if $e$ lies on the left side of $\overline{\Delta}$, or 
$\mathsf{right}$ if $e$ lies on the right side. 
This yields the subproblem structure $(Q_\rho, e_\rho, \normal(e_\rho))$ for the root node $\rho$.
See Figure~\ref{fig:bridge.node.cap}(c).

\begin{lemma}\label{lem:subproblem.structure}
For every node $u$ of $\widetilde G$, the subproblem structure
$(Q_u,e_u,\normal(e_u))$ is well-defined: 
$Q_u$ is a weakly simple subpolygon of $P$, $e_u$ is 
a vertical edge of $Q_u$ with positive length, 
and $\normal(e_u)$ is the outward normal direction of $e_u$ with respect to $Q_u$.
\end{lemma}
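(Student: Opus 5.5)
We argue by structural induction on $\widetilde G$ in postorder, strengthening the hypothesis as follows. For every non-root node $u$, $Q_u$ is a weakly simple subpolygon of $P$ that equals the union of the trapezoids in the subtree of $\widetilde G$ rooted at $u$, together with possibly some vertical segments of measure zero. Moreover, $e_u$ is a vertical segment of positive length on $\mybd{Q_u}$, all of $Q_u$ near $e_u$ lies on the side opposite to $\normal(e_u)$, and the open region immediately beyond $e_u$ in direction $\normal(e_u)$ lies in $P\setminus Q_u$. That region is contained in the trapezoid that is the parent of $u$ in $G$.

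\textbf{Base case: trapezoid nodes.} For a trapezoid node $u$ corresponding to $\Delta\neq\overline{\Delta}$, $Q_u=Q_\Delta$ is a union of trapezoids of a subtree of the dual tree $G$. It is therefore a simply connected polygonal region, hence simple. All vertical interfaces of $Q_\Delta$ lie on the single vertical side $e_\Delta$ of $\Delta$ shared with the parent trapezoid. This common portion is a vertical cut of the decomposition, so it has positive length, since adjacent trapezoids share a nondegenerate vertical segment. The outward normal claim holds because $\Delta$ lies on the other side of $e_\Delta$.

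\textbf{Inductive step: bridge nodes.} Let $u$ be a bridge node with children $v,w$, both attached to the same trapezoid $\Delta$.

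\emph{Same-side bridge.} Both $e_v$ and $e_w$ lie on the same vertical side of $\Delta$, say the left side, with the same normal. They are interior-disjoint segments on one vertical line, and neither region enters $\Delta$. The minimal segment $s$ spanning them lies on that side of $\Delta$: it is contained in the closure of the left side, because $e_v,e_w$ are both on it and the side is a segment. Hence $Q_u=Q_v\cup s\cup Q_w$ is connected. Its two pieces touch only along the portions of $s$ between $e_v$ and $e_w$, which are boundary edges of $P$ or boundaries of other left children. This makes $Q_u$ weakly simple. Since $s\supseteq e_v$, it has positive length, and $\Delta$ lies to its right, so $\normal(e_u)=\normal(e_v)$ is outward.

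The key point to check is that, in the chain $u_2,\dots,u_p$, the accumulated segment remains on the side of $\Delta$ and is still adjacent to $\Delta$. This holds because the side of $\Delta$ is convex (a segment), and the children's interfaces are disjoint subsegments of it.

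\emph{Cross-side bridge.} This occurs only at the node $w$ joining the root of $T_R$ with $\Delta$ (or symmetrically). Here the one child is $\Delta$ together with $T_L$, with $e=e_\Delta$ on its right side. By induction, $Q_{T_L}\cup\Delta$ is simple and its right side is a full side of $\Delta$. The other child's $e$ is the same-side segment spanning the interfaces of the right children. That segment is contained in the right side of $\Delta$, and it is a proper subset because the parent's interface $e_\Delta$ also lies on that side and is interior-disjoint from the right children's interfaces. Gluing along this subsegment yields a simple polygon: the two regions lie on opposite sides of the line and meet in a segment.

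The remainder $e_u=e_\Delta$ side $\setminus$ (spanning segment) contains the parent interface $e_\Delta$, hence has positive length. Choosing $e_u$ as the component containing the parent interface resolves the issue of the difference being two pieces; we note this as the most delicate point. Its outward normal is inherited from $\Delta$'s side.

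\textbf{Root.} For the root, $Q_\rho=P$ and $e_\rho$ is the free side of $\overline{\Delta}$, extended if it is degenerate. This is immediate, and the extension argument is given in the text.

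\textbf{Main obstacle.} The hardest step is the cross-side case: showing the containment is proper and that $e_\Delta$ is disjoint from the spanning segment of the right children. This needs the fact that, in a trapezoidal decomposition of a simple polygon, the neighbors across one vertical side of $\Delta$ occupy pairwise disjoint subsegments, ordered vertically. To justify this, we would rely on the structure of Chazelle's decomposition.
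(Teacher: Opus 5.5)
Your outline matches the paper's: induction over $\widetilde G$, trapezoid nodes inherited from $G$, the spanning segment for same-side bridges, $e_w\setminus e_v$ for cross-side bridges, and the root. The paper's proof only asserts the two cross-side facts, namely proper containment and that $e_u$ contains the parent interface. So the step you flag as the main obstacle is where the real work lies, and the fact you propose to use there does not do it.

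\textbf{The cross-side step needs general position.} Knowing that the neighbours across the right side of $\Delta$ occupy pairwise interior-disjoint, vertically ordered subsegments does not stop the parent's interface from lying \emph{between} two right children's interfaces. In that case the segment spanned in $T_R$ swallows the parent interface, and $e_w\setminus e_v$ either loses it or is empty. For example, take a rectangle with two thin notches entering from the right whose tips share the $x$-coordinate $b$, and root $G$ at the leaf between the notches. Then $\Delta$, to the left of $x=b$, has $q=2$, and its two right children span its whole right side.

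The missing ingredient is general position, i.e.\ distinct vertex $x$-coordinates. Then each vertical side of a trapezoid is a point, a single vertical extension, or the two extensions through one split/merge vertex $v$. So at most two trapezoids meet it, on opposite sides of $v$. Since the parent is one of them:
\begin{itemize}
\item $q\le 1$, and the right child's interface is the complementary extension;
\item the containment is proper, and the difference is exactly the parent interface, a single segment of positive length;
\item your ``choose a component'' patch, which changes the paper's definition of $e_u$, becomes unnecessary.
\end{itemize}
Likewise $p\le 2$, the $T_L$ chain has at most one bridge, and the two left subregions touch only at $v$. That gives weak simplicity directly and removes your concern about accumulated segments.

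\textbf{The trapezoid-node case is inconsistent with how you use it.} In your base case you take $Q_u=Q_\Delta$, the union over the whole $G$-subtree with right children included, and $e_u$ the parent interface. But the cross-side step needs node $\Delta$ to carry $Q_{T_L}\cup\Delta$, with edge equal to the \emph{entire} right side of $\Delta$. Otherwise the parent interface and the right child's interface are interior-disjoint and neither contains the other. Your base case does not provide this, and it also contradicts your strengthened hypothesis that $Q_u$ is the union of the trapezoids in the $\widetilde G$-subtree. (The paper's ``inherit from $G$'' has the same slip.)

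A trapezoid node whose child is the root of $T_L$ therefore needs its own inductive step. Show that gluing $\Delta$ to $Q_{T_L}$ along the left side of $\Delta$ gives a simple polygon whose vertical edge is the full right side of $\Delta$. The pair $(Q_\Delta,e_\Delta)$ from $G$ is then recovered only at the node $w$.
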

\begin{proof}
We proceed by induction on $\widetilde G$.
For a trapezoid node $u$, the triple $(Q_u,e_u,\normal(e_u))$ is inherited from the
original dual tree $G$, where it is already well-defined.
At the root node $\rho$, we have $Q_\rho = P$, the extended edge $e_\rho$ has positive length, and 
$\normal(e_\rho)$ is taken to be the outward normal of
$e_\rho$ with respect to $P$.

Now suppose that $u$ is a bridge node with children $v$ and $w$.
By construction, $e_v$ and $e_w$ are vertical segments on
$\mybd Q_v$ and $\mybd Q_w$ lying on a common vertical line.
If $\normal(e_v) = \normal(e_w)$ (same-side bridge), then $Q_v$ and $Q_w$ lie on
the same side of this line.
Let $e_u$ be the minimal vertical segment containing $e_v \cup e_w$, and define
$Q_u := Q_v \cup e_u \cup Q_w$ and
$\normal(e_u) := \normal(e_v) = \normal(e_w)$.
Then $Q_u$ is a weakly simple subpolygon of $P$, $e_u$ is a vertical boundary edge of
positive length, and $\normal(e_u)$ is the outward normal of $e_u$ with respect to $Q_u$ 
(away from the interiors of $Q_v$ and $Q_w$).

If $\normal(e_v) \neq \normal(e_w)$ (cross-side bridge), 
then $Q_v$ and $Q_w$ lie on
opposite sides of the common vertical line, and one of $e_v,e_w$ properly contains the other; 
without loss of generality, assume $e_w \supsetneq e_v$.
Along $e_v$, $Q_v$ and $Q_w$ are adjacent from opposite sides. 
Thus $Q_u := Q_v \cup Q_w$ is a simple polygon, 
and $e_u := e_w \setminus e_v$ is a vertical segment on $\mybd Q_u$. 
Since $u$ is not the root, 
$e_u$ contains a vertical interface of the trapezoidal decomposition, so $|e_u| > 0$. 
Thus, $Q_u$ is a simple subpolygon of $P$, $e_u$ is a vertical boundary edge of positive length, and $\normal(e_u) := \normal(e_w)$ is 
the outward normal of $e_u$ with respect to $Q_u$.
\end{proof}

\begin{figure}[t!]
  \centering
  \includegraphics[width=0.9\textwidth]{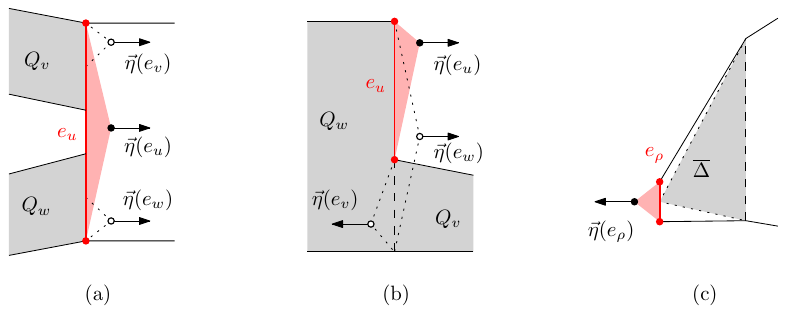}
  \caption{\small The subproblem structures $(Q_u, e_u, \normal(e_u))$ in $\widetilde G$ when $u$ is
(a) a same-side bridge, (b) a cross-side bridge, and (c) the root; attaching the cap $T_\rho(\varepsilon)$ along $e_\rho$ yields the simple polygon $Q_\rho^\varepsilon$.}
  \label{fig:bridge.node.cap}
\end{figure}

\subparagraph{Reducing weakly simple subpolygons to simple ones.}
By Lemma~\ref{lem:subproblem.structure}, every node $u$ of $\widetilde G$
has an associated subproblem structure $(Q_u, e_u, \normal(e_u))$.
If $u$ is a trapezoid node or a cross-side bridge node, then $Q_u$ is
simple. 
If $u$ is a same-side bridge node, however, 
$Q_u$ is only
weakly simple. 
To handle all subproblems in the simple polygon setting, we
convert $Q_u$ into a simple polygon
by attaching a thin triangular cap along $e_u$ on the outside of $Q_u$.

Let $q_u$ be any point on the edge $e_u$, excluding its endpoints. 
For sufficiently small $\varepsilon >0$, 
let $q_u(\varepsilon)$ be the point obtained by shifting 
$q_u$ by $\varepsilon$ units
in the direction of $\normal(e_u)$. 
Let $T_u(\varepsilon)$ be the triangle with base $e_u$ and apex $q_u(\varepsilon)$.

\begin{lemma}\label{lem:subproblem-cap}
For each node $u$ of $\widetilde G$, there exists
$\varepsilon_0>0$ such that $Q_u \cup T_u(\varepsilon)$ is a simple polygon 
for all $0 < \varepsilon \le \varepsilon_0$.
\end{lemma}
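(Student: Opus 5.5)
The plan is to show three things: the cap $T_u(\varepsilon)$ meets $Q_u$ only along $e_u$; attaching it removes the single non-simple point that may occur in the same-side case; and the resulting boundary is a closed polygonal curve without self-intersections. The key observation is that, by Lemma~\ref{lem:subproblem.structure}, $e_u$ is a vertical boundary edge of positive length. Moreover, $\normal(e_u)$ points to the outside of $Q_u$. Hence there is an open half-disk-like neighborhood on the $\normal(e_u)$ side of the relative interior of $e_u$ that avoids $Q_u$.

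First, I would make this precise. $Q_u$ is a finite union of trapezoids of $\td$, so it is a compact polygonal region with finitely many edges. Let $\ell$ be the vertical line through $e_u$ and $H$ the open half-plane on the $\normal(e_u)$ side of $\ell$. No point of $Q_u$ near $e_u$ lies in $H$: locally near $e_u$, the trapezoids of $Q_u$ adjacent to $e_u$ lie on the opposite side. Next I would bound how close the rest of $Q_u$ can come. Consider the compact set $K$ consisting of all edges of $\mybd{Q_u}$ that are not incident to $e_u$, intersected with $\mycl{H}$. This set has positive distance $d_1$ from the closed segment $e_u$, except possibly at the endpoints of $e_u$.

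The endpoints of $e_u$ need separate care, and this is the step I expect to be the main obstacle. At an endpoint $a$ of $e_u$, the boundary edge $f$ of $Q_u$ incident to $a$ other than $e_u$ may lie in $\mycl{H}$ (for instance, when $Q_u$ is reflex at $a$). In that case $f$ makes a positive angle $\alpha$ with $e_u$ at $a$. The triangle $T_u(\varepsilon)$ has angle at $a$ tending to $0$ as $\varepsilon\to 0$, because the apex $q_u(\varepsilon)$ is at horizontal distance $\varepsilon$ from $\ell$ and $q_u$ lies strictly inside $e_u$. So once that angle is below $\alpha$, the cap meets $f$ only at $a$. Other edges through $a$ are handled the same way: there are finitely many, and each has a positive angle. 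In the same-side case, the touching point of $Q_v$ and $Q_w$ lies in the relative interior of the spanning segment $s=e_u$ or at its ends. It is therefore either covered by the cap's base, or it is an endpoint treated as above.

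Finally, I would choose $\varepsilon_0$ smaller than $d_1$ and small enough for all the angle conditions at both endpoints. For $0<\varepsilon\le\varepsilon_0$, the triangle $T_u(\varepsilon)$ then meets $Q_u$ exactly in $e_u$. The boundary of $Q_u\cup T_u(\varepsilon)$ is obtained from $\mybd{Q_u}$ by replacing $e_u$ with the two sides $a\,q_u(\varepsilon)$ and $q_u(\varepsilon)\,b$, which is a simple closed curve. In the same-side case, the boundary pinch of $Q_u$ along $s$ lies in the interior of the union, because the cap sits on the $\normal(e_u)$ side across it. Therefore $Q_u\cup T_u(\varepsilon)$ is a simple polygon.
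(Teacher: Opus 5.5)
Your proposal is the same local argument as the paper's: near $e_u$ the $\normal(e_u)$ side is free of $Q_u$, so a thin enough cap meets $Q_u$ only along its base. You carry it out more carefully and uniformly over all node types, and your angle argument at the endpoints of $e_u$ is genuinely needed (e.g.\ in the cross-side case, where $Q_w$ touches an endpoint of $e_u$ from the $\normal(e_u)$ side) and supplies a detail the paper glosses over.

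One slip to repair: in the same-side case, the boundary edges of $Q_v$ and $Q_w$ that meet at the pinch point touch the relative interior of $e_u$, so they must be kept out of $K$ (otherwise $d_1=0$) and handled directly. They lie in the closed half-plane opposite to $H$, hence meet $\mycl{H}$ only at the pinch point, which lies on the cap's base. Also, the pinch point does not become interior to $Q_u\cup T_u(\varepsilon)$; it stays a boundary point, but the new boundary curve passes through it only once.
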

\begin{proof}
Without loss of generality, assume $\normal(e_u) =\texttt{right}$.
If $u$ is a same-side bridge node, then $Q_u$ is weakly simple.
Along the vertical edge $e_u$, 
several simple subpolygons of $P$ are attached on the left side of $e_u$, 
and a sufficiently small neighborhood to the right of $e_u$ is disjoint from $Q_u$.
Then $e_u$ is a vertical boundary edge of $Q_u$ with outward normal $\normal(e_u)$. 
Choose a point $p$ in the relative interior of $e_u$.
There exists $\varepsilon_0>0$ such that, for all
$0<\varepsilon\le\varepsilon_0$, the triangular cap $T_u(\varepsilon)$ with base
$e_u$ and apex at $p + \varepsilon \normal(e_u)$ has interior disjoint from $Q_u$
and meets $Q_u$ only along its base. For such $\varepsilon$, the union
$Q_u \cup T_u(\varepsilon)$ is a simple polygon.

If $u$ is a cross-side bridge node, then $Q_u$ is already a simple polygon and
$e_u$ is a vertical boundary edge of $Q_u$ with outward normal $\normal(e_u)$.
For any point $p$ in the relative
interior of $e_u$, the same local argument
shows that there exists $\varepsilon_0>0$ such that
$\myint{T_u(\varepsilon)} \cap \myint{Q_u} = \emptyset$ 
and $T_u(\varepsilon)$ meets $Q_u$ along its base for all
$0<\varepsilon\le\varepsilon_0$. Hence $Q_u \cup T_u(\varepsilon)$ is also simple.
\end{proof}

Let $(x_i)_{i=1}^n$ be the $x$-coordinates of the vertices of $P$, and let
$\mathbb{Z}(P,0) := \{\, x_i + m \mid i \in [n],\, m \in \mathbb{Z} \,\}$.
Let $\delta>0$ be the minimum positive distance between two distinct points of $\mathbb{Z}(P,0)$.
From Lemma~\ref{lem:subproblem-cap},
for each node $u$, there exists $\varepsilon_0(u)>0$
such that $Q_u \cup T_u(\varepsilon)$ is simple for all $0<\varepsilon\le\varepsilon_0(u)$.
Choose a particular $\varepsilon>0$ satisfying $\varepsilon \le \min_u \varepsilon_0(u)$ and $\varepsilon < \frac{\delta}{2}$. 
Note that $\delta$ and $\varepsilon$ are not computed explicitly; 
we only fix such values conceptually.


Now we define the capped subpolygon $Q_u^\varepsilon := Q_u \cup T_u(\varepsilon)$, which is a simple polygon.
Figure~\ref{fig:bridge.node.cap} illustrates how the cap $T_u(\varepsilon)$ attaches to $e_u$ for each node type.
Let $\nopt^\varepsilon(Q_u)$ denote the minimum number of strips 
in an optimal strip partition of $Q_u^\varepsilon$.
Among all optimal strip partitions of $Q_u^\varepsilon$, 
let $\mathcal{S}^\varepsilon(Q_u,e_u)$ be the set of strips incident to the apex $q_u(\varepsilon)$, and define
$\iant^\varepsilon(Q_u,e_u) := \min_{\subseteq}\{\projx{S} \mid S \in \mathcal{S}^\varepsilon(Q_u,e_u)\}$, 
where $\projx{S}$ is the interval obtained by projecting $S$ onto the $x$-axis.
Then $\iant^\varepsilon(Q_u,e_u)$ is an antichain of intervals.
By definition, every interval in $\iant^\varepsilon(Q_u,e_u)$ is realized as the $x$-projection of a strip incident to $q_u(\varepsilon)$ in some optimal partition of $Q_u^\varepsilon$.
Although the family $\mathcal{S}^\varepsilon(Q_u,e_u)$ may be infinite, 
Algorithm~\ref{alg:update-node} and Lemma~\ref{lem:num.meet.join} together imply that $\iant^\varepsilon(Q_u,e_u)$ is finite.
We define the dynamic-programming subproblem 
$\nnprob^\varepsilon(Q_u,e_u)$ to be the task of computing both $\nopt^\varepsilon(Q_u)$ and $\iant^\varepsilon(Q_u,e_u)$.
The pair
$\bigl(\nopt^\varepsilon(Q_u),\,\iant^\varepsilon(Q_u,e_u)\bigr)$
is called the \emph{DP state} at node $u$; its first entry is the
\emph{value component}, and its second entry is the
\emph{antichain component}.

\section{Dynamic programming algorithm and complexity}\label{sec:dynamic.programming.alg}
We process the nodes of the rooted binary tree $\widetilde G$ in bottom-up order.
For each node $u$, after the subproblems for all of its descendants have been
solved, we compute the DP state of $\nnprob^\varepsilon(Q_u,e_u)$ using the update
rule for either a trapezoid node or a bridge node.

During this process, we maintain the following invariant for each node $u$ of $\widetilde G$.
The value $\nopt^\varepsilon(Q_u)$ is the minimum number of strips in a strip partition of $Q_u^\varepsilon$.
Let $\mathcal{S}^\varepsilon(Q_u,e_u)$ be the set of strips incident to the apex $q_u(\varepsilon)$ 
among all optimal partitions of $Q_u^\varepsilon$.
Then $\iant^\varepsilon(Q_u,e_u)$ consists of the inclusion-minimal $x$-intervals of the strips in $\mathcal{S}^\varepsilon(Q_u,e_u)$.

\subsection{Bottom-up evaluation on the refined tree}
The update procedure is described in detail in Algorithm~\ref{alg:update-node}.
Throughout the algorithm, we assume that $\normal(e_u) = \textsf{right}$; the case $\normal(e_u) = \textsf{left}$ is symmetric.
We use the low-pass operator $\filter$ on the family of intervals.
For $\mathcal{F} \subseteq \cinterval_\mathbb{R}$, 
let $\filter(\mathcal{F}):= \{\, I \in \mathcal{F} \mid |I| \le 1 \,\}$, 
which discards intervals of length greater than~$1$.

\subparagraph{Trapezoid nodes.}
Let $u$ be a trapezoid node corresponding to a trapezoid $\Delta$.
If $u$ has a child $v$ in $\widetilde G$, 
the DP state $(\nopt^\varepsilon(Q_v), \iant^\varepsilon(Q_v,e_v))$ 
has already been computed.
If $u$ is a leaf, 
we introduce a virtual child $v$ by taking
$e_v$ to be the vertical side of $\Delta$ opposite to $e_u$, letting $Q_v$ be
the degenerate trapezoid consisting of $e_v$, and setting
$\normal(e_v)=\normal(e_u)$; then 
$\nopt^\varepsilon(Q_v)=1$ and 
$\iant^\varepsilon(Q_v,e_v)$ consists of a single interval of length $\varepsilon$ adjacent to $x(e_v)$.

\begin{figure}[t!]
  \centering
  \includegraphics[width=\textwidth]{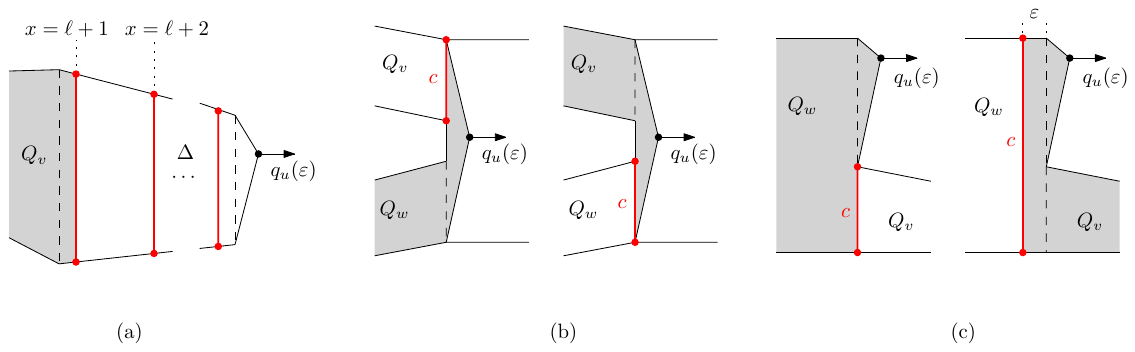}
  \caption{\small Additional vertical cuts are inserted 
  when no strip can be merged or extended across the node:
(a) trapezoid node, (b) same-side bridge node, and (c) cross-side bridge node.}
  \label{fig:placing.cuts}
\end{figure}

The update at a trapezoid node proceeds as follows.
Let $[a,b]$ be the $x$-interval of $\Delta$, where $a=x(e_v)$ and $b=x(e_u)$.
For each strip $S \in \mathcal{S}^\varepsilon(Q_v,e_v)$, 
we remove the old cap $T_v(\varepsilon)$, insert $\Delta$ into the polygon, 
and then attach the new cap $T_u(\varepsilon)$ along $e_u$. 
If $[\ell,r]$ is the $x$-interval of $S$, then the extended strip has $x$-interval
$[\min\{\ell,a\},\,\max\{r,b+\varepsilon\}]$, which is exactly the interval obtained 
by taking the meet of $\{[\ell,r]\}$ with $\{[a,b+\varepsilon]\}$.
Thus, the inclusion-minimal $x$-intervals of all feasible extensions are exactly
$\filter\bigl(\iant^\varepsilon(Q_v,e_v)\wedge \{[a,b+\varepsilon]\}\bigr)$.


If at least one extended strip survives, then we do not introduce any new cut at $u$.
If none survive, 
then no strip can be extended through $\Delta$ without violating the unit-width constraint.
Let $[\ell,r]$ be the $x$-interval of the rightmost strip in
$\mathcal{S}^\varepsilon(Q_v,e_v)$. 
If no extended strip survives, we insert a series of vertical cuts in $\Delta \cup T_u(\varepsilon)$ at $x$-coordinates $\ell+1, \ell+2, \ldots$, decomposing the extended region into
unit-width strips (with the last piece possibly shorter than~$1$).
We then increase $\nopt^\varepsilon(Q_u)$ by the number of new strips, and define
$\iant^\varepsilon(Q_u,e_u)$ to consist of the single $x$-interval 
of the rightmost suffix strip incident to $q_u(\varepsilon)$.

\subparagraph{Bridge nodes.}
Let $u$ be a bridge node of $\widetilde G$ with children $v$ and $w$.
At a bridge node, we consider pairs of strips
$S_v \in \mathcal{S}^\varepsilon(Q_v,e_v)$ and
$S_w \in \mathcal{S}^\varepsilon(Q_w,e_w)$,
and ask whether they can be combined into a single strip incident to
$q_u(\varepsilon)$.
By the choice of $\varepsilon$, 
the $x$-interval of each of $S_v$ and $S_w$ already contains
$[x(e_u)-\varepsilon,\,x(e_u)+\varepsilon]$.
Hence, replacing the child caps $T_v(\varepsilon),T_w(\varepsilon)$ with the cap
$T_u(\varepsilon)$ does not change either $x$-interval.

For a same-side bridge, 
this combination is obtained by removing the caps
$T_v(\varepsilon)$ and $T_w(\varepsilon)$, gluing the two strips along $e_u$,
and attaching the cap $T_u(\varepsilon)$.
For a cross-side bridge, 
we remove the two caps, take the union of the resulting strips, and 
attach $T_u(\varepsilon)$ along $e_u$. 
In either case, if $[\ell,r]$ and $[\ell',r']$ are the $x$-intervals of $S_v$ and $S_w$, respectively,
then the combined region has $x$-interval $[\min\{\ell,\ell'\},\,\max\{r,r'\}]$.
Thus, the inclusion-minimal $x$-intervals of all feasible combinations are exactly
$\filter\bigl(\iant^\varepsilon(Q_v,e_v)\wedge \iant^\varepsilon(Q_w,e_w)\bigr)$.


If at least one pair of strips yields a feasible strip,
then no additional cut is introduced at $u$, and 
the feasible combined strips form
$\mathcal{S}^\varepsilon(Q_u,e_u)$.
Otherwise, we insert a vertical cut near $e_u$ so that 
strips from the two child subproblems 
cannot be combined across $u$. The strips from $Q_v$ and $Q_w$ 
are then extended separately to $q_u(\varepsilon)$.
Hence, $\iant^\varepsilon(Q_u,e_u) = \iant^\varepsilon(Q_v,e_v)\vee \iant^\varepsilon(Q_w,e_w)$.


\begin{algorithm}[t!]
  \caption{DP update at a node $u$ (case $\normal(e_u)=\textsf{right}$)}
  \label{alg:update-node}
  \DontPrintSemicolon

  \KwIn{node $u$ of $\widetilde G$ and the DP states at its children}
  \KwOut{DP state $(\nopt^\varepsilon(Q_u),\, \iant^\varepsilon(Q_u,e_u))$ at $u$}

  \uIf{$u$ is a trapezoid node with child $v$}{
    \textsc{TrapezoidUpdate}$(u,v)$\;
  }
  \ElseIf{$u$ is a bridge node with children $v,w$}{
    \textsc{BridgeUpdate}$(u,v,w)$\;
  }

  \BlankLine
  \SetKwProg{Proc}{procedure}{}{end}

  \Proc{\textsc{TrapezoidUpdate}$(u,v)$}{
    Let $\Delta$ be the trapezoid of $u$ with $x$-interval $[a,b]$\;
    $\iant^\varepsilon(Q_u,e_u)
      \gets \filter\bigl(\iant^\varepsilon(Q_v,e_v) \wedge \{[a,b+\varepsilon]\}\bigr), \quad
      \nopt^\varepsilon(Q_u) \gets \nopt^\varepsilon(Q_v)$\;
    \If{$\iant^\varepsilon(Q_u,e_u) = \varnothing$}{
      let $[\ell,r]$ be the interval in $\iant^\varepsilon(Q_v,e_v)$
        with maximum left endpoint $\ell$\;
    $k \gets \lfloor b+\varepsilon - \ell \rfloor$\;
    $\iant^\varepsilon(Q_u,e_u) \gets \{[\ell + k,\ b+\varepsilon]\},\quad
    \nopt^\varepsilon(Q_u) \gets \nopt^\varepsilon(Q_v) + k\;$
    }
  }

  \BlankLine

  \Proc{\textsc{BridgeUpdate}$(u,v,w)$}{
    $\iant^\varepsilon(Q_u,e_u)
      \gets \filter\bigl(\iant^\varepsilon(Q_v,e_v) \wedge \iant^\varepsilon(Q_w,e_w)\bigr)$\;
    $\nopt^\varepsilon(Q_u) \gets \nopt^\varepsilon(Q_v) + \nopt^\varepsilon(Q_w)-1$\;

    \If{$\iant^\varepsilon(Q_u,e_u) = \varnothing$}{
      $\iant^\varepsilon(Q_u,e_u)
        \gets \iant^\varepsilon(Q_v,e_v) \vee \iant^\varepsilon(Q_w,e_w)$\;
      $\nopt^\varepsilon(Q_u) \gets \nopt^\varepsilon(Q_v) + \nopt^\varepsilon(Q_w)$\;
    }
  }
\end{algorithm}
\subparagraph{Cut placement at nodes.}
We now describe how new cuts are placed when strips cannot be merged or extended at a node $u$
without violating the unit-width constraint.

If $u$ is a trapezoid node, no strip extends through its trapezoid $\Delta$, 
so we place vertical cuts in $\Delta \cup T_u(\varepsilon)$.
They start at $x=\ell+1$, where $[\ell,r]$ is the interval with the maximum left
endpoint in $\iant^\varepsilon(Q_v,e_v)$, and continue at unit distance.
See Figure~\ref{fig:placing.cuts}(a).

Now suppose that $u$ is a bridge node with children $v$ and $w$.
If some strip in $\mathcal{S}^\varepsilon(Q_v,e_v)$ can merge with one in
$\mathcal{S}^\varepsilon(Q_w,e_w)$ into a unit-width strip,
no new cut is added at $u$.
Otherwise, we place a single vertical cut near $e_u$ 
to prevent any strip from $Q_v$ from merging with a strip from $Q_w$, so
that all strips are extended to $q_u(\varepsilon)$ separately.
See Figure~\ref{fig:placing.cuts}(b--c).

The position of this cut depends on the bridge type and on the child from which
each $x$-interval in $\iant^\varepsilon(Q_u,e_u)$ originates.
For a same-side bridge, the cut is placed along $e_w$ if the interval comes
from $\iant^\varepsilon(Q_v,e_v)$ and along $e_v$ if it comes from
$\iant^\varepsilon(Q_w,e_w)$.
For a cross-side bridge, either $e_v \subsetneq e_w$ or
$e_w \subsetneq e_v$. Assume $e_v \subsetneq e_w$. If the interval comes from
$\iant^\varepsilon(Q_w,e_w)$, the cut is placed along $e_v$. 
Otherwise it is placed inside
$Q_w$ at $x$-distance $\varepsilon$ from $e_u$, blocking any strip from $Q_w^\varepsilon$ from
reaching $q_u(\varepsilon)$; see Figure~\ref{fig:placing.cuts}(c). 
The case $e_w \subsetneq e_v$ is symmetric.
Note that inserting these cuts at bridge nodes 
does not increase the $x$-interval of any strip from
$Q_v^\varepsilon$ or $Q_w^\varepsilon$; 
we prove this in the proof of Lemma~\ref{lem:alg.correctness}.

\medskip

Every interval appearing in an antichain $\iant^\varepsilon(Q_u,e_u)$ has endpoints 
of the form $x(p)+m$ or $x(p)+m \pm \varepsilon$, where $p$ is a vertex of $P$ and 
$m \in \mathbb{Z}$.  
Thus all endpoints lie in the locally finite, totally ordered set 
$\mathbb{Z}(P,\varepsilon)$, and each antichain maintained by our DP is an element of 
the Clarke–Cormack–Burkowski lattice $\ccb_{\mathbb{Z}(P,\varepsilon)}$.
The update rules are written using meet ($\wedge$) and join ($\vee$) in this lattice, 
and the outputs of these operations remain in $\ccb_{\mathbb{Z}(P,\varepsilon)}$.

Since $\varepsilon>0$ is chosen sufficiently small, the values 
$\nopt(Q_u)$ and $\nopt^\varepsilon(Q_u)$ differ by at most one.
This difference occurs only when, in every optimal partition of $Q_u^\varepsilon$,
the only strip incident to $q_u(\varepsilon)$ is the cap $T_u(\varepsilon)$; 
in this case we have
$\iant^\varepsilon(Q_u,e_u) = \{[x(e_u),\, x(e_u)+\varepsilon]\}$.

In the DP state, we store an endpoint of the form $x(p)+m$ or $x(p)+m\pm\varepsilon$
as the value $x(p)+m$ together with a single bit indicating whether the 
$\varepsilon$–offset is present.
Since $\varepsilon>0$ is chosen sufficiently
small, all order and width tests used by the algorithm can be decided from the
base values and the flags.
For example, if $a,b \in \mathbb{Z}(P,0)$ and $m \in \mathbb{Z}$, then
\[
\begin{array}{rcl@{\qquad}rcl}
b-a-2\varepsilon>m &\iff& b-a>m, &
b-a+2\varepsilon>m &\iff& b-a\ge m,\\
\lfloor(b-a)+2\varepsilon\rfloor &=& \lfloor b-a \rfloor,  &
\lfloor(b-a)-2\varepsilon\rfloor &=& \lceil b-a \rceil - 1.
\end{array}
\]

\begin{lemma}\label{lem:alg.correctness}
When Algorithm~\ref{alg:update-node} is executed on the tree $\widetilde G$ in
bottom-up order, it correctly computes, for every node $u$ of $\widetilde G$, the pair
$(\nopt^\varepsilon(Q_u), \iant^\varepsilon(Q_u,e_u))$,
where $\nopt^\varepsilon(Q_u)$ is 
the number of strips in an optimal partition
of $Q_u^\varepsilon$, and $\iant^\varepsilon(Q_u,e_u)$ is the antichain of $x$-intervals
of the strips incident to $q_u(\varepsilon)$ across all optimal partitions of $Q_u^\varepsilon$.
\end{lemma}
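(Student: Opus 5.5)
We argue by structural induction on $\widetilde G$ in bottom-up order. The inductive hypothesis at a node $u$ is the stated invariant: $\nopt^\varepsilon(Q_u)$ is the optimum for $Q_u^\varepsilon$, and $\iant^\varepsilon(Q_u,e_u)$ is exactly the family of inclusion-minimal $x$-intervals of strips incident to $q_u(\varepsilon)$ over all optimal partitions of $Q_u^\varepsilon$.

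\textbf{Base case.} For the virtual child of a leaf trapezoid, the region is a degenerate segment plus its cap, so the optimum is $1$ and the single interval has length $\varepsilon$.

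\textbf{Inductive step, restriction and extension.} For each node type, we first prove a restriction/extension correspondence. Any optimal partition $\Pi$ of $Q_u^\varepsilon$ restricts to feasible partitions of the child regions, with the piece meeting $\Delta$ or $e_u$ replaced by the child cap. Conversely, partitions of the children can be glued as in the update rules. The restriction step needs care. The strip of $\Pi$ incident to $q_u(\varepsilon)$, intersected with $Q_v$ and then capped with $T_v(\varepsilon)$, is a strip incident to $q_v(\varepsilon)$. Since $\varepsilon<\delta/2$ and every relevant interval already contains $[x(e_v)-\varepsilon,x(e_v)+\varepsilon]$, adding this cap does not enlarge its $x$-interval. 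This yields the two inequalities
\[
\nopt^\varepsilon(Q_v)+\nopt^\varepsilon(Q_w)-1\le \nopt^\varepsilon(Q_u)\le \nopt^\varepsilon(Q_v)+\nopt^\varepsilon(Q_w)
\]
at a bridge. At a trapezoid node the analogous bound is $\nopt^\varepsilon(Q_v)+k$ with $k\ge 0$.

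\textbf{Bridge nodes.} The lower bound $\nopt^\varepsilon(Q_v)+\nopt^\varepsilon(Q_w)-1$ is attained iff some pair of incident strips, each from an optimal child partition, can be merged. By Observation~\ref{obs:minimal-intervals} it suffices to test minimal intervals, i.e.\ to check whether $\filter(\iant^\varepsilon(Q_v,e_v)\wedge\iant^\varepsilon(Q_w,e_w))\neq\varnothing$. In that case the achievable incident intervals are exactly the unions of pairs, and their minimal elements form the filtered meet. Here one must also argue the converse: an optimal partition of $Q_u^\varepsilon$ with one fewer strip must merge the two incident strips, and each restriction must be optimal for its child, otherwise the count would drop below the lower bound.

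In the other case the optimum is the sum. Every incident strip then originates from exactly one child and belongs to an optimal partition of that child. We verify that the cuts described in ``Cut placement'' realize every such strip without enlarging its interval, for instance a cut at distance $\varepsilon$ inside $Q_w$ in the cross-side case. This gives the join.

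\textbf{Trapezoid nodes.} If the filtered meet with $\{[a,b+\varepsilon]\}$ is nonempty, the same argument applies. Otherwise every optimal partition needs new cuts inside $\Delta\cup T_u(\varepsilon)$. Greedy placement starting at $\ell+1$, with $\ell$ the maximum left endpoint, is optimal by a standard exchange argument: the strip from $Q_v$ should end as far right as possible, which minimizes the number $k=\lfloor b+\varepsilon-\ell\rfloor$ of extra strips. We also show the resulting incident interval $[\ell+k,b+\varepsilon]$ is the unique minimal one, since any partition with the same count is forced to have its last cut at or left of $\ell+k$.

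\textbf{Main obstacle.} We expect the main obstacle to be the completeness direction: showing that every minimal interval arising in some optimal partition of $Q_u^\varepsilon$ is produced by the rules, including the subtlety that restrictions of an optimal partition are optimal for the children. We handle it by counting pieces against the lower bound, so that any non-optimal restriction would contradict the optimality of $\nopt^\varepsilon(Q_u)$. The $\varepsilon$-bookkeeping (the displayed floor identities) ensures that the cap choices do not alter comparisons.
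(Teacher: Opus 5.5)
Your proposal is correct and follows the paper's proof essentially step by step: bottom-up induction on $\widetilde G$ from the virtual-leaf base case, the filtered meet with $\{[a,b+\varepsilon]\}$ and unit-spaced cuts from $\ell+1$ at trapezoid nodes, and the filtered meet falling back to the join with a blocking cut at bridge nodes; you are also more explicit than the paper about the two-sided bound and the completeness direction.

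One small correction concerns your claim that restrictions of an optimal partition of $Q_u^\varepsilon$ are optimal for the children. This is forced only when the value meets the lower bound (the meet case, or a trapezoid node with no new cut). In the join case, and at a trapezoid node needing new cuts, a tie may spend one extra strip in a child, so ``every incident strip originates from exactly one child and belongs to an optimal partition of that child'' is not literally true. This is harmless. In the join case, the merged strip still contains the strip of the optimal child. At a trapezoid node, $\ell>a-1$ implies that an extra child strip saves at most one cut in $\Delta$ and keeps the left endpoint of the strip at $q_u(\varepsilon)$ strictly left of $\ell+k$. So the computed antichains are unchanged.
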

\begin{proof}
For a virtual leaf $u$, $Q_u^\varepsilon$ is exactly the cap
$T_u(\varepsilon)$.  
Hence any optimal partition of $Q_u^\varepsilon$ is just $T_u(\varepsilon)$, 
so $\nopt^\varepsilon(Q_u)=1$ and $\iant^\varepsilon(Q_u,e_u)=\{[x(e_u), x(e_u)+\varepsilon]\}$.
Starting from these base cases, 
we prove the claim by induction on the nodes $u$ of $\widetilde G$ in bottom-up order.
Without loss of generality, assume
$\normal(e_u) = \texttt{right}$, and that the statement holds for all children
of $u$ in $\widetilde G$.

First consider the case that $u$ is a trapezoid node with child $v$, 
and let $\Delta$ be the trapezoid corresponding to $u$.
Let the projection of $\Delta$ onto the $x$-axis be $[a,b]$, where $a = x(e_v)$ and
$b = x(e_u)$.
Since $Q_u^\varepsilon = Q_v \cup \Delta \cup T_u(\varepsilon)$, every strip in
$\mathcal{S}^\varepsilon(Q_v,e_v)$ is extended through $\Delta$ up to $q_u(\varepsilon)$.
In terms of $x$-intervals, this corresponds to taking the meet of
$\iant^\varepsilon(Q_v,e_v)$ with $\{[a,b+\varepsilon]\}$ and removing intervals
of length greater than~$1$, that is,
$\mathcal{F}:=\filter\bigl(\iant^\varepsilon(Q_v,e_v) \wedge \{[a,b+\varepsilon]\}\bigr)$.

If $\mathcal{F}\neq\varnothing$,
then at least one strip can be extended through $\Delta$ up to $q_u(\varepsilon)$
without violating the width constraint, 
no additional cut is required, and the antichain component at $u$ is exactly $\mathcal{F}$.

Otherwise, no strip can be extended under the width constraint.
Let $[\ell,r]$ be the interval in $\iant^\varepsilon(Q_v,e_v)$ with maximum left endpoint
$\ell$. We insert vertical cuts in $\Delta \cup T_u(\varepsilon)$ at
$x = \ell+1, \ell+2, \ldots$, up to $b+\varepsilon$, 
obtaining $\lfloor b+\varepsilon-\ell\rfloor$ new strips;
this choice minimizes 
the horizontal width of the strip incident to $q_u(\varepsilon)$.
The strip incident to $q_u(\varepsilon)$ has $x$-interval $[\ell+\lfloor b+\varepsilon-\ell\rfloor,b+\varepsilon]$.
This exactly matches the trapezoid update in Algorithm~\ref{alg:update-node}.

Now suppose that $u$ is a same-side bridge node with children $v$ and $w$.
Then $e_v$ and $e_w$ lie on the same vertical line, $\normal(e_v) = \normal(e_w)$, and
$x(e_u) = x(e_v) = x(e_w)$.
Consider a strip incident to $q_u(\varepsilon)$ that is obtained by merging one strip from
$\mathcal{S}^\varepsilon(Q_v,e_v)$ and one from $\mathcal{S}^\varepsilon(Q_w,e_w)$.
Its $x$-interval is the smallest interval containing the two corresponding
$x$-intervals, and it is feasible only if its length is at most~$1$.
Thus the family of $x$-intervals of all such merged strips is
\[
\mathcal{F} := \filter\bigl(\iant^\varepsilon(Q_v,e_v) \wedge \iant^\varepsilon(Q_w,e_w)\bigr).
\]

If the antichain $\mathcal{F}$ is nonempty, then at least one pair of strips can be
merged into a strip incident to $q_u(\varepsilon)$, and we have
\[
\iant^\varepsilon(Q_u,e_u) = \mathcal{F},\qquad
\nopt^\varepsilon(Q_u) = \nopt^\varepsilon(Q_v) + \nopt^\varepsilon(Q_w) - 1.
\]
In particular, $\iant^\varepsilon(Q_v,e_v)$ (or $\iant^\varepsilon(Q_w,e_w)$)
already consists of the singleton interval
$[x(e_u),x(e_u)+\varepsilon]$ if and only if 
the corresponding cut along $e_v$ (or $e_w$) is
already present. 
In that case, $\mathcal{F}$ is nonempty,
no additional cut is introduced, and the above equalities apply.

If $\mathcal{F}=\varnothing$, 
then no pair of strips can be merged without violating the width constraint.
We place an additional cut either along $e_v$ or along $e_w$, so that strips from
$Q_v^\varepsilon$ and $Q_w^\varepsilon$ are extended independently to $q_u(\varepsilon)$.
Note that every interval in $\iant^\varepsilon(Q_v,e_v)$ and
$\iant^\varepsilon(Q_w,e_w)$ 
already contains $[x(e_u)-\varepsilon,x(e_u)+\varepsilon]$; 
otherwise one of the antichains is 
the singleton of the interval $[x(e_u),x(e_u)+\varepsilon]$, which would yield
$\mathcal{F}\neq\varnothing$, a contradiction.
Thus, this extension to $q_u(\varepsilon)$ 
does not change the $x$-intervals in $\iant^\varepsilon(Q_v,e_v)$ and
$\iant^\varepsilon(Q_w,e_w)$. 
The $x$-intervals of strips incident to $q_u(\varepsilon)$ are exactly the inclusion-minimal intervals in
$\iant^\varepsilon(Q_v,e_v) \cup \iant^\varepsilon(Q_w,e_w)$, that is,
\[
\iant^\varepsilon(Q_u,e_u)
 = \iant^\varepsilon(Q_v,e_v)\vee\iant^\varepsilon(Q_w,e_w),\qquad
\nopt^\varepsilon(Q_u)
 = \nopt^\varepsilon(Q_v)+\nopt^\varepsilon(Q_w).
\]
This matches the bridge update in Algorithm~\ref{alg:update-node}.



In the case that $u$ is a cross-side bridge node with children $v$ and $w$, we may
assume without loss of generality that $e_v \subsetneq e_w$.
The correctness follows by the same argument as in the same-side case.
The only difference is that, when no pair of strips can be merged under the width
constraint (i.e., $\mathcal{F}=\varnothing$), 
the additional cut blocking strips from $Q_w^\varepsilon$ is placed at
$x(e_u)-\varepsilon$ instead of along $e_w$.
This cut may force any strip in $\mathcal{S}^\varepsilon(Q_v,e_v)$ that reaches
$q_u(\varepsilon)$ to have an $x$-interval containing $[x(e_u)-\varepsilon,x(e_u)]$.
Since every interval in $\iant^\varepsilon(Q_v,e_v)$ already contains
$[x(e_u)-\varepsilon,x(e_u)+\varepsilon]$, their $x$-intervals do not change by the extension,
as in the same-side case.
\end{proof}

By Lemma~\ref{lem:alg.correctness}, Algorithm~\ref{alg:update-node} correctly
computes $(\nopt^\varepsilon(Q_\rho), \iant^\varepsilon(Q_\rho,e_\rho))$ at the root
$\rho$ of $\widetilde G$, where $Q_\rho = P$.
By the discussion above on the relation between $\nopt^\varepsilon(Q_u)$ and $\nopt(Q_u)$,
we obtain $\nopt(P)$ from this pair in constant time.

\subparagraph{Encoding cut descriptors.}
We show that the additional cuts placed at trapezoid and bridge nodes admit a lossless
encoding from which all cut descriptors in the optimal partition can be reconstructed in
$O(n)$ time using $O(n)$ extra space. (The detailed encoding scheme and reconstruction
procedure are given in the proof of Lemma~\ref{lem:cut-encoding}.) 
Importantly, we can generate this encoding during the DP without increasing the asymptotic running time.

\begin{lemma}\label{lem:cut-encoding}
The cut descriptors produced by the dynamic program admit an $O(n)$-space encoding
from which all cuts in an optimal partition can be reconstructed in $O(n)$ time.
\end{lemma}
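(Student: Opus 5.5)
The encoding records, at each node of $\widetilde G$ where Algorithm~\ref{alg:update-node} introduces cuts, a constant-size record. Reconstruction is then a top-down traversal of $\widetilde G$ that decides which of these recorded cuts belong to the final optimal partition. The difficulty is that the DP keeps a whole antichain of candidate strips, not a single partition. Which cuts are actually used depends on which interval is eventually chosen at the root, so the records must be resolved backwards.

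\textbf{Records.} At a trapezoid node $u$ where $\filter(\cdot)$ returns $\varnothing$, the new cuts lie at $x=\ell+1,\ldots,\ell+k$ inside $\Delta\cup T_u(\varepsilon)$. Each such cut spans the top and bottom edge of the trapezoid $\Delta$, so all of them share the same edge pair. I would store them in run-length form $(\text{edge pair of }\Delta,\ \ell+1,\ k)$, as in Section~\ref{sec:convex.case}. This takes $O(1)$ space regardless of $k$: $\ell+1$ is stored as a base value in $\mathbb{Z}(P,0)$, $k$ as an integer, and the descriptor edges are the non-vertical sides of $\Delta$. At a bridge node where the join branch is taken, a single cut is placed along $e_v$, along $e_w$, or at $x(e_u)-\varepsilon$. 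That cut is a vertical segment of the trapezoidal decomposition, or a segment inside a known trapezoid adjacent to $e_u$, so its descriptor is computable in $O(1)$ time. Its exact position depends on which child the finally chosen interval originates from. Since $\widetilde G$ has $O(n)$ nodes, the total space is $O(n)$.

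\textbf{Provenance pointers.} To resolve the dependence on the eventual choice, I would let each interval $I\in\iant^\varepsilon(Q_u,e_u)$ carry a back-pointer to the interval or intervals of the children that produced it. A meet produces a pair, a join a single interval tagged with its child, and a trapezoid update its source interval. Storing all pointers naively could exceed $O(n)$. So instead I would store, per node, only the rule needed to re-derive the source during the top-down pass: given the interval chosen at $u$, the source is recomputed by scanning the children's antichains, which the lemma can charge to the DP already performed. Alternatively, I would argue that the antichain sizes summed over all nodes are $O(n)$, by the bound on meets and joins referred to in the text (Lemma~\ref{lem:num.meet.join}), so the pointers fit in $O(n)$ space.

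\textbf{Reconstruction and the main obstacle.} Starting at the root, I pick any interval of $\iant^\varepsilon(Q_\rho,e_\rho)$. I then walk down $\widetilde G$, at each node following the chosen interval's source. I output the bridge cut whose position is determined by the child of origin, and I output a trapezoid run exactly when the chosen source path passes through a node where cuts were introduced. Each node is visited once, so the pass takes $O(n)$ time plus the run-length records. Consistency of the resulting cuts with an optimal partition of $P$ follows from Lemma~\ref{lem:alg.correctness}. The cut-placement argument in its proof shows that the placed cuts realize the counted strips and do not enlarge intervals. Finally, cuts lying only in a cap $T_u(\varepsilon)$ are discarded, since a vertex-degenerate root edge does not affect partitions of $P$. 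I expect the main obstacle to be bounding the provenance information by $O(n)$ with $O(n)$ total traversal time, in particular when several candidate intervals share sources through repeated meets. I would first try to show that along the chosen path the source is always the extremal interval (maximum left or minimum right endpoint) of a child antichain, making it recoverable in $O(1)$ time if each antichain is stored sorted.
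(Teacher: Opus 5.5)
\textbf{The crux you flag is left open, and none of your three routes closes it.} The crux is recovering, top-down, which child intervals produced the interval chosen at a node, using only $O(n)$ stored information.

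First, antichain sizes summed over all nodes are not $O(n)$. Lemma~\ref{lem:num.meet.join} only bounds $|\iant^\varepsilon(Q_u,e_u)|$ by the subtree size of $u$. Along the join chain $w_1,\ldots,w_n$ of the \texttt{NO}-instance in Section~\ref{subsec:lower.self}, the antichains $\bigvee_{i\le t}\mathcal{I}_i$ have sizes $2,4,\ldots,2n$. For simple polygons, a long zigzag corridor inside the common part of the $I_i$ of Lemma~\ref{lem:no.greedy} passes that $n$-interval antichain unchanged through $\Theta(n)$ trapezoid nodes. This is the same phenomenon that makes the one-pass recurrence $T(t)=T(t-1)+O(t)$ quadratic, so per-interval back-pointers can cost $\Theta(n^2)$ space.

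Second, re-deriving sources by scanning the children's antichains assumes those antichains exist at reconstruction time. The DP discards them, and storing them all is again $\Theta(n^2)$. A full scan per node is also exactly what the $O(n\log n)$ DP avoids via insertion ranks. So this cost can neither be charged to the DP nor give $O(n)$ reconstruction.

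Third, the extremal-source guess is false. In the instance of Lemma~\ref{lem:no.greedy}, $\filter(\mathcal{I}\wedge\{I'\})$ is a single interval whose source is $I_j$ for an arbitrary $j$ selected by $b$.

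Your walk also never says which interval restarts the reconstruction in the non-origin child at a join node. Any continuation there needs an interval from that child's discarded antichain.

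\textbf{The missing idea is to account per endpoint rather than per interval.} In the paper, endpoints are identified up to the $\pm\varepsilon$ offset and labelled by their birth trapezoid. They are created only at trapezoid nodes, at most two per trapezoid, so there are $O(n)$ of them, and each disappears at most once. Recording, at the node where it disappears, the endpoint together with its current partner gives death lists of total size $O(n)$. The full encoding is these death lists, one stored surviving interval per child at each join node, and a run-length record at each trapezoid node with cuts.

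Reconstruction proceeds in three steps. It first outputs all trapezoid cuts unconditionally, since they are fixed once the node is processed, rather than only when a chosen path passes through them. It then vertex-splits $\widetilde G$ at these cuts into a forest, with one starting interval per tree. Finally it traces the birth nodes of the current interval's endpoints. At a join node, these birth nodes identify the child of origin and hence the cut position, and the stored interval restarts the other child. At a meet node, the death list is used to recover the two merged intervals.

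\textbf{Caveat on the paper's meet-node step.} This step rests on the assertion that each merged interval loses an endpoint at that node, and that assertion needs more care when the children's antichains interleave. For example, $\{[0,0.5],[0.2,0.7]\}\wedge\{[0.1,0.6],[0.3,0.8]\}=\{[0,0.6],[0.1,0.7],[0.2,0.8]\}$, and no endpoint of the two constituents of $[0.1,0.7]$ disappears. So your sense that this is the delicate step is well founded. The endpoint-counting bound is what makes the storage $O(n)$, but a complete argument must still explain how constituents are recovered in such configurations.
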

\begin{proof}
To reconstruct all cut descriptors, we record additional information at each node of
$\widetilde G$.
Let $u$ be a node of $\widetilde G$.
The antichain $\iant^\varepsilon(Q_u,e_u)$ consists of $x$-intervals, which are the
projections of strips onto the $x$-axis.
Every endpoint of such an interval comes from either a vertical cut in $P$ or a vertical
edge on the boundary of a trapezoid in $\td$.

\subparagraph{Information stored at DP states.}
We treat each endpoint as a distinct item and associate two events with it: \emph{birth}
and \emph{death}.
An endpoint is born at the unique trapezoid node where it first appears in some
antichain $\iant^\varepsilon(Q_u,e_u)$.
This includes the leaf trapezoids that initialize the dynamic program and the trapezoid
nodes where Algorithm~\ref{alg:update-node} inserts new vertical cuts.
For each endpoint we store its birth record, that is, the trapezoid in which it was
created. There is no birth in the bridge node, because 
vertical cuts at bridge nodes do not introduce new endpoints.
Different endpoints may share the same $x$-coordinate. 
In this case, we distinguish such endpoints by their birth records, so
two endpoints with the same $x$-coordinate but different birth trapezoids are treated as distinct.
Moreover, we identify endpoints that differ only by the $\pm\varepsilon$ perturbation:
if two endpoints have the same birth trapezoid and their $x$-coordinates are $x$ and
$x \pm \varepsilon$, we regard them as the same endpoint, 
since the $\pm\varepsilon$ offset 
is only an artifact of the triangular caps used in the algorithm.

As the dynamic program proceeds in bottom-up order, the antichain at a node $u$ is
obtained from the antichains at its children by the operations in
Algorithm~\ref{alg:update-node}.
During this update some endpoints may disappear from the antichain at $u$.
Whenever an endpoint disappears at a node $u$, we store a \emph{death record} at $u$.
The death record contains the disappearing endpoint and the other endpoint of the same
interval at the moment it disappears.
If $u$ is a trapezoid node, such deaths occur exactly when
Algorithm~\ref{alg:update-node} inserts new vertical cuts inside that trapezoid.
If $u$ is a bridge node, deaths occur during the meet operation, where incident strips
are merged without introducing new cuts.

For every node $u$ we maintain a list $L_u$ of death records of endpoints that die at $u$.
For a bridge node $u$, the list $L_u$ is empty if and only if the join case of
Algorithm~\ref{alg:update-node} is performed at $u$.
For a trapezoid node $u$, the list $L_u$ is empty if and only if no new cut is added at
$u$.
Each endpoint is born at a unique trapezoid node.
At every trapezoid node, 
Algorithm~\ref{alg:update-node} creates
at most two endpoints, and there are $O(n)$ trapezoids in $\td$.
Hence,
the total number of distinct endpoints is $O(n)$, and
the total size of all lists $L_u$ over all nodes is also $O(n)$.

At bridge nodes, the cut position has two possible locations depending on which child
the surviving interval comes from.
When the join case is applied, we store one surviving interval from each child,
so that the strip represented at the node can be identified during the reconstruction phase.
In contrast, at trapezoid nodes, the positions of all cuts are fixed once the node is processed.
We store a compact description of these deterministic cuts
by recording the descriptor of the leftmost cut and the number of cuts.
These require $O(1)$ space per node, 
so the total additional space is $O(n)$.

\begin{figure}[t!]
  \centering
  \includegraphics[width=0.7\textwidth]{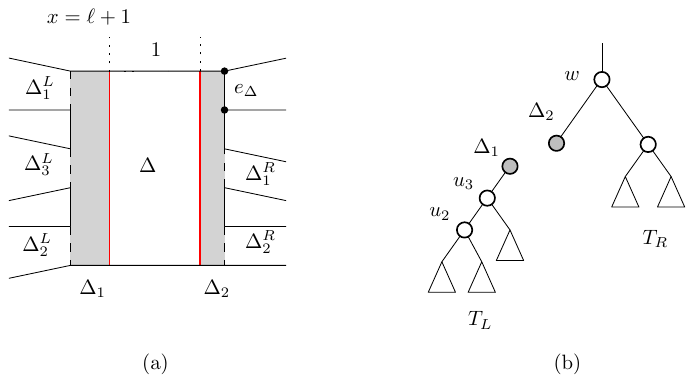}
  \caption{\small 
  Illustration of the reporting step for the example in Figure~\ref{fig:refine.binary.tree}.
  (a) Deterministic cuts inserted in $\Delta$ split it into
    subtrapezoids $\Delta_1$ and $\Delta_2$.
  (b) The cuts induce a vertex split of the node for $\Delta$ in
$\widetilde G$.
  \label{fig:reporting.cuts}}
\end{figure}

\subparagraph{Reconstruction of cut descriptors.}
From this information, we reconstruct the cut descriptors of an optimal partition of $P$.
We first reconstruct all cuts created at trapezoid nodes from their compact
descriptions and cut $P$ along these deterministic cuts, which decomposes $P$ into
subpolygons. In each resulting subpolygon, 
the remaining cuts arise only at bridge nodes where the
meet case was applied.

For each trapezoid node in $\widetilde G$ that contains deterministic cuts,
we perform a vertex split:
we replace the node by two copies, 
one for each side of the cut, 
and distribute its incident edges to these copies 
accordingly.
Then this splits $\widetilde G$ into a forest.  
See Figure~\ref{fig:reporting.cuts}.

Let $G'$ be one of the trees in this forest, and 
let $Q'$ be the subpolygon of $P$ that is covered by $G'$.
Observe that the root and all leaves of $G'$ 
are trapezoid nodes.
Each of these nodes corresponds to a subtrapezoid obtained from an original trapezoid
of the decomposition by cutting along deterministic cuts.
Let $\rho'$ be the root of $G'$, which 
is either the original root $\rho$ of $\widetilde G$ or a trapezoid node at
which new cuts were inserted.

We now choose a starting interval $I$ at the root node $\rho'$, assuming
$\normal(\rho') = \texttt{right}$.
Note that the trapezoid corresponding to $\rho'$ belongs to a
 strip in an optimal partition of $Q'$.
The starting interval $I$ is defined as the $x$-interval of this strip.

If $\rho' = \rho$, we look at the antichain $\iant^\varepsilon(Q_\rho,e_\rho)$.
If it contains an interval of length greater than~$\varepsilon$, we choose any such
interval as $I$.
Otherwise $\iant^\varepsilon(Q_\rho,e_\rho)$ consists only of the $\varepsilon$-length 
adjacent to $x(e_\rho)$, and in this degenerate case we set
$I := [x(e_\rho)-1,\,x(e_\rho)]$, which represents a strip touching $e_\rho$ from the
left.

If $\rho' \ne \rho$, then $\rho'$ is a trapezoid node where new cuts are introduced.
Among the endpoints that die at $\rho'$ due to these cuts, 
we take the rightmost left
endpoint and denote it by $\ell$.
In the update at $\rho'$, this endpoint was paired with a unique interval
$[\ell,r] \in \iant^\varepsilon(Q_v,e_v)$.
We take $I := [\ell,r]$ as the starting interval at $\rho'$.

From the starting interval $I = [\ell,r]$, 
we next find all cuts inside $Q'$ that lie on the boundary of the strip corresponding to $I$.
Let $u$ and $v$ be the birth nodes of the left and right endpoints of $I$, respectively.
Both $u$ and $v$ are leaves of $G'$.
Consider the two paths
\[
  u = u_0, u_1, \ldots, u_k = \rho'
  \quad\text{and}\quad
  v = v_0, v_1, \ldots, v_m = \rho'
\]
from $u$ and $v$ to the root $\rho'$ of $G'$.
Every internal node on these paths is either a trapezoid node or a bridge node.

If an internal node $w$ is a trapezoid node, we simply pass through
it, since no new cut inside $Q'$ is created at such a node.
Now suppose that $w$ is a bridge node that lies on one of the two paths.
We distinguish two cases according to the update at $w$.

If the join case occurred at $w$, 
then the strip represented by $I$ arrives at $w$ from
exactly one child subtree of $w$.
At that node, 
the dynamic program inserted a vertical cut that prevents this strip from merging
with strips coming from the other child. 
From the paths of the birth nodes $u$ and $v$ to $\rho'$, we know which child
$I$ comes from.
Together with the type of $w$ (same-side or cross-side), 
this determines the vertical cut at $w$. 
In the child subtree from which $I$ comes, 
we keep $I$ as the starting interval and continue the reconstruction.
In the other child subtree, we retrieve the surviving interval that was blocked by this
cut and treat it as a new starting interval.
Thus the reconstruction at $w$ produces one cut and splits the current subproblem into
two smaller subproblems, one for each child subtree.

If the meet case occurred at $w$, then $I = [\ell,r]$ was obtained at $w$ by merging two
intervals coming from the two children of $w$.
There exist intervals $I_1 = [\ell_1,r_1]$ and $I_2 = [\ell_2,r_2]$ such that
$I_1 \cup I_2 = I$.
By scanning the death records stored in $L_w$, we can recover both
$I_1$ and $I_2$, since one endpoint of each interval dies at $w$.
We then treat $I_1$ and $I_2$ as starting intervals in the two child subtrees of $w$ in $G'$ 
and apply the same reconstruction procedure recursively in each subtree.

Once the recursion reaches all leaves in all trees $G'$, every subproblem has been
processed and we have recovered every cut in an optimal strip partition of $P$.
Each node of $\widetilde G$ is processed only a constant number of times, and the total
size of all death lists $L_u$ is $O(n)$.
Therefore the reconstruction phase runs in $O(n)$ time and uses $O(n)$ additional space.
\end{proof}

\subparagraph{Failure of greedy approach.}
One might suspect that we can discard some intervals from an antichain component before passing information to its parent.
This is impossible in general:
if we discard even one interval before passing the DP state upward, 
the reduced state no longer determines the optimum.

Let $Q$ be a simple polygon and let $e$ be a vertical edge of $Q$.
Let $\normal(e)\in\{\textsf{left},\textsf{right}\}$ denote the outward normal direction of $e$.
Let $\mathcal{P}(Q,e)$ be the set of all simple polygons $P$ for which there exists a simple polygon $R$ such that
$P= Q \cup R$ and $Q \cap R\subseteq e$.
In other words, $\mathcal P(Q,e)$ consists of all simple polygons that contain $(Q,e,\normal(e))$
as a subproblem structure in the orthogonal strip partition problem.

Let $\iant(Q,e)$ be the antichain component computed by Algorithm~\ref{alg:update-node}
for the subproblem structure $(Q,e,\normal(e))$. 
A \emph{certificate} for $(Q,e,\normal(e))$ 
is a subset $\mathcal{W}\subseteq \iant(Q,e)$ 
with the following property:
there exists an algorithm that, for every $P\in\mathcal P(Q,e)$, 
computes $\nopt(P)$ from
given $\mathcal W$, $\nopt(Q)$, and the remaining region $P\setminus Q$.
The \emph{certificate complexity} of $(Q,e,\normal(e))$ is
$\min\{|\mathcal W| \mid \mathcal W \text{ is a certificate for }(Q,e,\normal(e))\}$.
This viewpoint is standard in lower bound arguments~\cite{arora_computational_2009}, 
and shows whether pruning
$\iant(Q,e)$ loses information necessary to determine $\nopt(P)$.

\begin{figure}[t!]
  \centering
  \includegraphics[width=0.75\textwidth]{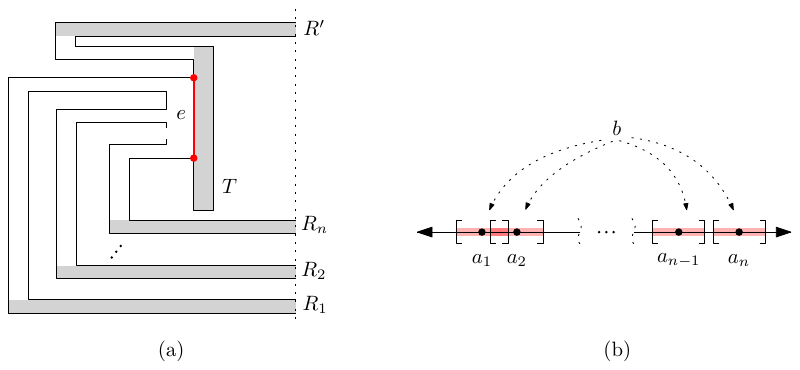}
  \caption{\small (a) 
  The construction of a simple polygon $P= Q \cup T\cup Q'$.
  The staircase rectangles $R_1,\ldots,R_n$ with their corridors form $Q$, and $R'$ with its corridor forms $Q'$; both attach to the rectangle $T$.
  (b) Determining whether $\nopt(P)=n-1$ reduces to testing whether $b$ lies within distance $\tfrac{1}{4n}$ of some $a_i$,
equivalently whether $b\in[a_i-\tfrac{1}{4n},\,a_i+\tfrac{1}{4n}]$ for some $i\in[n]$, where $|a_i-a_j|>\tfrac{1}{4n}$ for $i\neq j$.}
  \label{fig:no.greedy.construction}
\end{figure}

\begin{lemma}\label{lem:no.greedy}
For every integer $n\ge 1$, there exists a subproblem instance $(Q,e,\normal(e))$ with $O(n)$ vertices 
whose
certificate complexity is at least $n$.
\end{lemma}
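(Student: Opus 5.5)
We follow the construction suggested by Figure~\ref{fig:no.greedy.construction}. Fix $n$ and choose reals $a_1<\dots<a_n$ in a small range, say $[0,\tfrac12]$, with $|a_i-a_j|>\tfrac{1}{4n}$ for $i\neq j$.

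\textbf{Building $Q$.} We take $n$ staircase rectangles $R_1,\dots,R_n$, each attached by a thin horizontal corridor to a common vertical edge $e$ with $\normal(e)=\textsf{right}$. The $x$-coordinates are tuned so that, in every optimal partition of $Q$, the strip incident to $e$ may be chosen to come from the corridor of any $R_i$. The resulting $x$-interval is $I_i=[a_i-\tfrac{1}{4n}+c,\;x(e)]$ for some shift $c$. We make the rectangles and corridors long, with integer-offset lengths, so that every optimal partition of each $R_i$ forces its cuts at fixed positions modulo $1$. This is what makes the left endpoint of the strip reaching $e$ equal to $a_i+c-\tfrac{1}{4n}$, up to translation.

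We then check that $\iant(Q,e)=\{I_1,\dots,I_n\}$, using the correctness of Algorithm~\ref{alg:update-node} (Lemma~\ref{lem:alg.correctness}). The intervals have distinct left endpoints and share a common right endpoint $x(e)$, so at the bridge nodes merging the corridors their meet is empty (each pairwise union exceeds width $1$), and the join keeps all $n$ intervals. Hence $|\iant(Q,e)|=n$ and $\nopt(Q)=N$ for an explicit $N$. $Q$ has $O(n)$ vertices.

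\textbf{Completions.} For each $j\in[n]$ we build a polygon $P_j=Q\cup T\cup Q'_j\in\mathcal P(Q,e)$. Here $T$ is a rectangle glued along $e$, and $Q'_j$ is a rectangle $R'$ with a corridor attached to the right side of $T$. Its position $b=b_j:=a_j$ (shifted) is chosen so that the strip from $Q'_j$ entering $T$ has left endpoint determined by $b$. The parameters are such that the strip incident to $e$ from $Q$ can merge through $T$ with the strip from $Q'_j$ into a single unit-width strip if and only if $|b-a_i|\le\tfrac{1}{4n}$, i.e.\ iff $i=j$. By Observation~\ref{lem:determine.simple.opt} applied at $e$, $\nopt(P_j)$ is one less than the value it would take if no interval matched, and the matching interval is exactly $I_j$. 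We also add a baseline completion $P_0$ with $b$ far from all $a_i$.

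\textbf{Certificate argument.} Suppose $\mathcal W\subsetneq\iant(Q,e)$ omits some $I_j$. Consider the modified subproblem $\widehat Q$, obtained by deleting rectangle $R_j$ (or making its corridor too wide). It has the same $\nopt$ value and antichain $\iant(Q,e)\setminus\{I_j\}\supseteq\mathcal W$, so $\mathcal W$ is a subset of its antichain as well, with identical data. Gluing the same remainder $P_j\setminus Q$ gives $\nopt=$ baseline for $\widehat Q$ but baseline${}-1$ for $Q$. Any algorithm receiving only $\mathcal W$, $\nopt(Q)$, and $P\setminus Q$ therefore errs on one of them. Hence every certificate contains all $n$ intervals.

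\textbf{Main obstacle.} The delicate step is the geometric tuning. We must ensure that (i) each $R_i$ really yields exactly one minimal interval $I_i$ with the prescribed left endpoint and that these intervals are incomparable, (ii) $\nopt(Q)$ is unchanged when one $R_i$ is altered, so that the indistinguishability argument is valid, and (iii) the gap $\tfrac1{4n}$ separates merge from no-merge. We will handle (ii) by making each staircase rectangle's optimal count independent of its fractional offset, using lengths that are integers plus the offset.
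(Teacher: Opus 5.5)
\textbf{There is a genuine gap, and it is in the construction of $Q$ itself.}

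You take the strip reaching $e$ from $R_i$ to have $x$-interval $I_i=[a_i-\tfrac{1}{4n}+c,\;x(e)]$. All $I_i$ then share the right endpoint $x(e)$ and have distinct left endpoints, so they are totally ordered by inclusion rather than an antichain. Hence $\min_{\subseteq}$ keeps only the one with the largest left endpoint, and $|\iant(Q,e)|=1$, not $n$.

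Your justification also fails. The union of two such intervals is just the longer one, which has width at most $1$ because it is a strip. So the filtered meet at the bridge nodes is nonempty, not empty.

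This is forced by your geometry. If $Q$ lies to the left of the line $x=x(e)$, every strip touching $e$ has right endpoint exactly $x(e)$. Long rectangles with cuts forced modulo $1$ cannot change that.

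The completion step breaks for the same reason. A strip from $Q'_j$ on the right of $T$ can be glued to $[\ell_i,x(e)]$ if and only if $\ell_i$ exceeds a fixed threshold, which is a one-sided, monotone condition. Whenever $I_j$ merges, every $I_i$ with a larger left endpoint merges too. So no choice of $b$ isolates a single index, and the two-sided test $|b-a_i|\le\tfrac{1}{4n}$ never arises.

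\textbf{The missing idea is to make the intervals straddle $x(e)$.}

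The paper takes $I_i=[a_i,\,a_i+1-\tfrac{1}{4n}]$ with $a_i\in(\tfrac{i}{2n}-\tfrac{1}{4n},\tfrac{i}{2n})$. These have equal lengths, all contain $\tfrac12$, and have left endpoints more than $\tfrac{1}{4n}$ apart. So they form a genuine antichain with $|I_i\cup I_j|>1$ for $i\neq j$.

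Geometrically, the thin rectangles $R_i$ sit in a staircase \emph{below} a thin vertical rectangle $T$ near $x=\tfrac12$, so they extend to the right of $x(e)$. L-shaped corridors leave their upper-left corners, go up, and then go right into $e$.

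The completion $R'$, with $I'=[b,\,b+1-\tfrac{1}{4n}]$, sits above $T$. Then $|I_i\cup I'|=(1-\tfrac{1}{4n})+|a_i-b|$, which is exactly the two-sided closeness test. Each $R_i$ is a single strip, so no forced-cut tuning is needed.

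\textbf{A secondary problem is in your certificate argument.} Deleting $R_j$ does not preserve $\nopt(Q)$, because the $R_i$ are pairwise unmergeable and each occupies its own strip.

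Instead, move $a_j$ far from $b$ while keeping all pairwise separations above $\tfrac{1}{4n}$. This leaves $\nopt(Q)$ and every $I_i$ with $i\neq j$ unchanged, and hence leaves $\mathcal W$ unchanged. It corresponds to the paper's step of choosing $(a_1,\ldots,a_n,b)$ for each $j$. With this repair, your explicit indistinguishability argument, comparing two versions of $Q$ against the same remainder, is more careful than the paper's informal version.
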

\begin{proof}
Fix an integer $n \ge 1$.
For each $i = 1,\ldots, n$, choose a real number $a_i \in \left(\frac{i}{2n}-\frac{1}{4n},\frac{i}{2n}\right)$.
Let $I_i \coloneqq [a_i,a_i +1-\frac{1}{4n}]$, 
and let $\mathcal{I}$ be the family of these intervals 
$\{I_{i}\mid i = 1, \ldots, n\}$.
Note that $|I_i| = 1- \frac{1}{4n}$,
$|I_i \cup I_{i+1}| > 1$, and 
every $I_i$ contains $\frac{1}{2}$.

We realize each interval $I_i \in \mathcal{I}$ 
as a thin horizontal rectangle $R_i$ and place the rectangles at distinct heights so that they form a staircase. 
We place a thin vertical rectangle $T$ near $x=\frac{1}{2}$ above all $R_i$. 
From the upper-left corner of each $R_i$, we connect $R_i$ 
toward $T$ by a thin L-shaped corridor that goes upward and then rightward.
Just before reaching $T$, the corridors merge into a single corridor near the left side of $T$, and this corridor meets $T$ along a boundary segment.
We choose the corridors to be pairwise interior-disjoint, except at their prescribed merges near $T$.
Let $Q$ denote the union of the rectangles $R_i$ and all corridors. Then $Q$ is a simple orthogonal polygon.

Let $b \in \left(\frac{1}{2n},\frac{1}{2}\right)$ be a real number, and let $I' := [b,b+1-\frac{1}{4n}]$. 
We realize $I'$ as a thin horizontal rectangle $R'$, 
place it above $T$, and attach it to $T$
by a thin L-shaped corridor that leaves $R'$ at its lower-left corner.
Let $Q'$ denote the subpolygon consisting of $R'$ and its connecting corridor to $T$. See Figure~\ref{fig:no.greedy.construction}(a).

Let $P := Q \cup T \cup Q'$ be the simple polygon instance 
of the strip partition problem.
We root the dual tree $G$ at the trapezoid $T$. 
Then there are nodes $u$ and $u'$ 
whose subtrees induce the 
subpolygon $Q$ and $Q'$, respectively. 
By Algorithm~\ref{alg:update-node}, 
the DP state at $u$ is $(n-1,\mathcal I)$ and the state at $u'$ is $(1,\{I'\})$.
Since $|I_i\cup I'| = \bigl(1-\tfrac{1}{4n}\bigr) + |a_i-b|$, we have $\filter(\mathcal{I}\wedge \{I'\}) \neq\emptyset \iff \exists i\in[n]$ such that $|a_i-b|\le \tfrac{1}{4n}$.
Consequently, \(\nopt(P)=n-1\) if and only if $\exists i\in[n]$ with $|a_i-b|\le \tfrac{1}{4n}$; otherwise $\nopt(P)=n$. See Figure~\ref{fig:no.greedy.construction}(b).

For each $j\in [n]$, 
we can choose $(a_,\ldots, a_n, b)$ so that  
$|a_j-b| \le \tfrac{1}{4n}$ and $|a_i-b| > \tfrac{1}{4n}$ for all $i\neq j$.
Therefore, determining whether $\nopt(P)=n-1$ requires maintaining all $a_i$ (equivalently, $I_i$) values in the worst case.
In particular, the certificate complexity of $(Q,e,\normal(e))$ is $n$, where $e$ is the vertical edge of $Q$ incident to $T$.
\end{proof}

Although keeping all intervals in the antichain component often appears redundant, Lemma~\ref{lem:no.greedy} shows that 
one cannot safely prune intervals in general.
There exist instances in which every interval is necessary; 
hence 
any greedy approach that discards intervals 
using only information available in the subproblem 
fails on some instance.

\subsection{Algorithmic complexity and lower bounds}
We analyze the running time and space of our dynamic
program. 
The value component is updated only by testing whether the resulting antichain is empty, 
and the cut descriptor is computed 
during the meet or join operations without extra asymptotic cost.
Thus, the overall running time is dominated 
by the cost of the meet and join operations on antichains.

In an antichain, intervals are strictly ordered: for two distinct intervals $[a,b]$ and $[a',b']$, 
we have $a < a'$ if and only if $b<b'$.
Thus, we store each antichain as a list of intervals sorted by their left (or right) endpoints.
Every antichain produced by our lattice operations is maintained in this sorted order.

\begin{lemma}\label{lem:num.meet.join}
Let $\mathcal{A}$ and $\mathcal{B}$ be finite antichains of intervals in $\ccb_{\mathbb{Z}(P,\varepsilon)}$.
Then, $|\mathcal{A} \wedge \mathcal{B}| \le |\mathcal{A}|+|\mathcal{B}| - 1$ and
$|\mathcal{A} \vee \mathcal{B}| \le |\mathcal{A}|+|\mathcal{B}|$.
\end{lemma}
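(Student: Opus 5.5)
The join bound is immediate. By definition $\mathcal{A}\vee\mathcal{B}=\min_{\subseteq}(\mathcal{A}\cup\mathcal{B})$, which is a subset of $\mathcal{A}\cup\mathcal{B}$. Hence its size is at most $|\mathcal{A}\cup\mathcal{B}|\le|\mathcal{A}|+|\mathcal{B}|$. The substance of the lemma is the meet bound. For it I will use the structural fact noted just before the lemma: in an antichain, the intervals are strictly ordered simultaneously by left and by right endpoints. So I write $\mathcal{A}=\{[a_1,b_1],\ldots,[a_p,b_p]\}$ with $a_1<\cdots<a_p$ and $b_1<\cdots<b_p$, and similarly $\mathcal{B}=\{[c_1,d_1],\ldots,[c_q,d_q]\}$ with $c_1<\cdots<c_q$ and $d_1<\cdots<d_q$.

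The plan is a charging argument in the spirit of merging two sorted lists. Every candidate in the meet has the form $J_{ij}=[\min\{a_i,c_j\},\max\{b_i,d_j\}]$. I will show that each inclusion-minimal $J_{ij}$ can be attributed to an edge of a monotone staircase path in the grid $[p]\times[q]$, and such a path has at most $p+q-1$ cells.

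\textbf{Key step.} I claim that if $J_{ij}$ is inclusion-minimal, then we may choose the pair $(i,j)$ so that $[a_i,b_i]$ and $[c_j,d_j]$ are ``adjacent'' in the merged left-endpoint order. More precisely, $J_{ij}$ coincides with one of the following:
\begin{itemize}
\item $[c_j,b_i]$, where $i$ is the smallest index with $a_i\ge c_j$; or
\item $[a_i,d_j]$, where $j$ is the smallest index with $c_j\ge a_i$.
\end{itemize}
To prove the claim, suppose WLOG that $c_j\le a_i$ and $J_{ij}$ is minimal. If $a_{i-1}\ge c_j$ as well, then $J_{i-1,j}$ has the same left endpoint $c_j$ and right endpoint $\max\{b_{i-1},d_j\}\le\max\{b_i,d_j\}$. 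Thus $J_{i-1,j}\subseteq J_{ij}$, so by minimality the two are equal. Iterating reduces to the smallest such $i$. When $c_j\le a_i$, the left endpoint of $J_{ij}$ is $c_j$, so $J_{ij}$ is determined by $j$ and the choice of $i$.

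\textbf{Counting.} Each minimal interval is thereby charged to one of two kinds of pair: either an element of $\mathcal{B}$ together with its successor-in-$\mathcal{A}$, or an element of $\mathcal{A}$ together with its successor-in-$\mathcal{B}$. This naively gives a bound of $|\mathcal{A}|+|\mathcal{B}|$.

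\textbf{Saving one.} To get $-1$, I merge all $p+q$ intervals into a single sequence sorted by left endpoint, breaking ties consistently. I then charge each minimal meet interval to a consecutive pair in this merged sequence whose two members come from different families. There are at most $p+q-1$ consecutive pairs. The argument is that if $J_{ij}$ is minimal with $c_j\le a_i$, then every interval strictly between them in the merged order comes from $\mathcal{A}$ or $\mathcal{B}$ and yields a meet interval contained in $J_{ij}$. Hence $J_{ij}$ also equals the meet of some consecutive cross-family pair.

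\textbf{Main obstacle.} The hard part is making this consecutive-pair reduction rigorous, especially when an intermediate interval belongs to $\mathcal{B}$ rather than $\mathcal{A}$. In that case I would replace $j$ by the later index, using the monotonicity of the right endpoints $d$. I also need to handle ties in endpoints, where equal meet intervals arise from different pairs. Because the meet is taken as a set, duplicates collapse, so ties only help the count.
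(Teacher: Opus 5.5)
Your join bound is fine, and so is the reduction to $|\mathcal{A}\wedge\mathcal{B}|\le|\mathcal{A}|+|\mathcal{B}|$. The key step is correct, except that the interval should read $[c_j,\max\{b_i,d_j\}]$ rather than $[c_j,b_i]$. That bound also follows directly: the meet is an antichain, so its members have pairwise distinct left endpoints, each taken from $\{a_1,\ldots,a_p,c_1,\ldots,c_q\}$.

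The gap is the ``saving one'' step, which is false as stated. Take $\mathcal{B}=\{[0,1],[0.1,1.1],[0.2,1.2]\}$ and $\mathcal{A}=\{[0.5,1.05]\}$. Then $\mathcal{A}\wedge\mathcal{B}=\{[0,1.05],[0.1,1.1],[0.2,1.2]\}$ has $p+q-1=3$ members. But the merged left-endpoint order is $\mathcal{B},\mathcal{B},\mathcal{B},\mathcal{A}$, which has only one consecutive cross-family pair, and its meet is $[0.2,1.2]$. The minimal intervals $[0,1.05]$ and $[0.1,1.1]$ are not the meet of any consecutive cross-family pair, so no such charging can exist. Your proposed repair, replacing $j$ by a later index, goes the wrong way. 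Moving to $j'>j$ raises the left endpoint $c_{j'}$ but also raises $d_{j'}$. So $[c_{j'},\max\{b_i,d_{j'}\}]$ is in general not contained in $J_{ij}$, and minimality of $J_{ij}$ gives you nothing.

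The missing idea is much simpler, and it is what the paper uses. Consider the globally largest left endpoint, say $a_p\ge c_q$. It can be the left endpoint of some $J_{pj}=[\min\{a_p,c_j\},\max\{b_p,d_j\}]$ only if $c_j\ge a_p\ge c_q$, that is, only if $c_j=c_q=a_p$. So either $a_p$ never occurs as a left endpoint in the meet, or it coincides with $c_q$. In both cases the distinct left endpoints of $\mathcal{A}\wedge\mathcal{B}$ range over at most $p+q-1$ values, which gives the bound.

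Your own first charging already contains this. If $a_p>c_q$, then $\tau(p)$ (the smallest $j$ with $c_j\ge a_p$) does not exist. If $a_p=c_q$, the charges via $\tau(p)$ and $\sigma(q)$ both land on the same interval $J_{pq}$. Either way one slot is lost.
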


\begin{proof}
By definition of the join,
$ \mathcal{A} \vee \mathcal{B} = \min_{\subseteq}(\mathcal{A} \cup \mathcal{B})$, 
so $  |\mathcal{A} \vee \mathcal{B}|
  \le |\mathcal{A} \cup \mathcal{B}|
  \le |\mathcal{A}| + |\mathcal{B}|$.

We now bound $|\mathcal{A} \wedge \mathcal{B}|$.  
Let $\mathcal{A} = \{[\ell_i, r_i]\}_{i=1}^k$ and 
$\mathcal{B} = \{[\ell'_j, r'_j]\}_{j=1}^m$, sorted by increasing left endpoints, and
assume without loss of generality that $\ell_k \ge \ell'_m$.
Every interval in $\mathcal{A} \wedge \mathcal{B}$ arises from some pair $(i,j)$ as 
$[\min\{\ell_i,\ell'_j\}, \max\{r_i,r'_j\}]$.
If $\ell_k$ appears as a left endpoint in $\mathcal{A} \wedge \mathcal{B}$, then
$\min\{\ell_k, \ell'_j\} = \ell_k$ for some $j \in [m]$, 
which implies $\ell_k = \ell_j' = \ell_m'$ by maximality of $\ell_k$.
Since $\mathcal{A} \wedge \mathcal{B}$ is an antichain, its intervals have distinct left
endpoints, so $|\mathcal{A} \wedge \mathcal{B}| \le k + m - 1$.
\end{proof}

If $u$ is a leaf node of $\widetilde G$, 
$\iant^\varepsilon(Q_u,e_u)$ consists of a single interval. 
At every trapezoid node or bridge node, Algorithm~\ref{alg:update-node} performs exactly one meet or
one join of two antichains.
At every node $u$ of $\widetilde G$, Lemma~\ref{lem:num.meet.join} implies that
$|\iant^\varepsilon(Q_u,e_u)|$ is bounded by the size of the subtree rooted at $u$.
Boldi and Vigna~\cite{boldi_efficient_2016} give one-pass, optimally lazy algorithms 
that compute 
$\mathcal{A}\wedge\mathcal{B}$ and $\mathcal{A}\vee\mathcal{B}$ in time linear in $|\mathcal{A}|+|\mathcal{B}|$. 
Moreover, the operation $\filter$ can clearly be implemented in time linear in the number of
intervals in the antichain.

Let $T(t)$ be the total running time of 
Algorithm~\ref{alg:update-node} on a subtree with $t$ nodes, where $T(1) = O(1)$.
If $u$ is a trapezoid node, its unique child has a subtree of size $t-1$, 
the antichain at $u$ has size $O(t)$, and 
the update takes $O(t)$ time; thus $T(t) = T(t-1) + O(t)$.
If $u$ is a bridge node with children whose subtrees have sizes $a$ and $t-a-1$, 
then the two antichains together have size $O(t)$, 
so the update again takes $O(t)$ time and $T(t) = T(a) + T(t-a-1) + O(t)$.
In the worst case, this recurrence yields $T(n) = O(n^2)$.

To improve this bound, we work in the real-\texttt{RAM} model, 
where random access and binary search allow these operations to be implemented more efficiently than the one-pass algorithms.
Let $L(\mathcal{B})$ and $R(\mathcal{B})$
denote the sorted list of left and right endpoints of intervals in $\mathcal{B}$, respectively.
For each interval $I = [\ell,r] \in \mathcal{A}$, we compute the insertion ranks of
$\ell$ in $L(\mathcal{B})$ and of $r$ in $R(\mathcal{B})$, 
that is, their positions in these sorted lists.

Given two sorted lists $X$ and $Y$ with $|X|=k$ and $|Y|=m$, 
we use a selection technique of Kaplan et al.~\cite{kaplan_selection_2019} to 
compute all insertion ranks of $X$ in $Y$ in $O\left(k \log \frac{m}{k}\right)$ time for $m \ge 4k$.
We partition $Y$ into blocks of size $B := \lfloor m/(4k)\rfloor$; 
for each $x \in X$, we first identify the block containing $x$ and then perform a binary search within that block.
The block size $B$ itself can be computed in 
$O\!\left(\log \frac{m}{k}\right)$ comparisons by testing powers of two, 
so this does not change the overall running time.


The following lemma shows that 
the join and filtered meet of two antichains can be computed efficiently, 
assuming precomputed insertion ranks.
The detailed algorithm and runtime analysis are given in the proof of Lemma~\ref{lem:meet.join.algorithm}.

\begin{lemma}\label{lem:meet.join.algorithm}
Let $\mathcal{A}, \mathcal{B}$ be two finite antichains of intervals in $\ccb_{\mathbb{Z}(P,\varepsilon)}$ 
such that every interval in $\mathcal{A} \cup \mathcal{B}$ has length at most~$1$.
Assume that, for every interval $I \in \mathcal{A}$, the insertion ranks of its
endpoints among those of $\mathcal{B}$ are available in constant time.
Then, $\mathcal{A} \vee \mathcal{B}$ and $\filter(\mathcal{A} \wedge \mathcal{B})$ 
can be computed in $O(|\mathcal{A}|)$ time.
\end{lemma}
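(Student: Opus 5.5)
The plan is to exploit the fact, noted above, that an antichain sorted by left endpoints is also sorted by right endpoints. For an interval $I=[\ell,r]\in\mathcal{A}$, the insertion ranks of $\ell$ in $L(\mathcal{B})$ and of $r$ in $R(\mathcal{B})$ cut $\mathcal{B}$ into three contiguous index ranges.
\begin{itemize}
\item $C(I)$: the intervals of $\mathcal{B}$ that contain $I$, namely $[\ell',r']$ with $\ell'\le\ell$ and $r'\ge r$. This is the index range between the rank of $r$ and the rank of $\ell$.
\item $D(I)$: the intervals of $\mathcal{B}$ contained in $I$, namely those with $\ell'\ge\ell$ and $r'\le r$. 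This is the complementary range between the rank of $\ell$ and the rank of $r$.
\item The intervals of $\mathcal{B}$ strictly to the left or strictly to the right of $I$ in the common sorted order.
\end{itemize}
At most one of $C(I)$ and $D(I)$ is nonempty, and each is an interval of indices whose two ends are read off from the two ranks in $O(1)$ time. To keep the output bound at $O(|\mathcal{A}|)$ even though $|\mathcal{A}\vee\mathcal{B}|$ may be as large as $|\mathcal{A}|+|\mathcal{B}|$, I would not copy $\mathcal{B}$. Instead, the result is produced \emph{destructively*} from the linked list storing $\mathcal{B}$, whose nodes are addressed by the given ranks. Every change is then either an $O(1)$-time splice that deletes a contiguous block of $\mathcal{B}$ or the insertion of one new node at a known position.

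\emph{Join.} An interval $I\in\mathcal{A}$ survives in $\min_{\subseteq}(\mathcal{A}\cup\mathcal{B})$ iff $D(I)=\varnothing$. An interval of $\mathcal{B}$ survives iff it lies in no $C(I)$. So the algorithm does the following in one pass over $\mathcal{A}$ in sorted order:
\begin{itemize}
\item if $C(I)\neq\varnothing$, delete the block $C(I)$ from the list of $\mathcal{B}$;
\item if $D(I)=\varnothing$ (so $I$ survives), insert $I$ at the position given by its rank.
\end{itemize}
Sortedness of the result follows from the ranks. Blocks $C(I)$ for different $I$ may overlap, but they are monotone in $I$, so a pointer to the last deleted index lets the pass avoid rescanning. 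The total work is $O(|\mathcal{A}|)$.

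\emph{Filtered meet.} Write every element of $\mathcal{A}\wedge\mathcal{B}$ as $I\wedge J=[\min\{\ell,\ell'\},\max\{r,r'\}]$ for some pair $(I,J)$, and classify the pairs.
\begin{itemize}
\item $J\in D(I)$ gives $I$. Since $I$ is a candidate exactly when $D(I)\neq\varnothing$, this costs $O(1)$ per $I$.
\item $J\in C(I)$ gives $J$ itself. The union of these over all $I$ is a union of $O(|\mathcal{A}|)$ contiguous blocks of $\mathcal{B}$.
\item $J$ lies to the left or right of $I$ and is not comparable with it, giving a genuinely new interval. The key claim is that for minimality it suffices to take the nearest such $J$ on each side of $I$. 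On the left this is the interval of $\mathcal{B}$ with largest left endpoint below $\ell$ and right endpoint below $r$, located directly by the ranks. Any farther $J$ gives a meet containing the nearest one's meet, because both the left endpoints and the right endpoints of $\mathcal{B}$ are monotone. This yields at most two new candidates per $I$.
\end{itemize}
Intervals of $\mathcal{B}$ already have length at most~$1$, so $\filter$ only has to test the $O(|\mathcal{A}|)$ new or $\mathcal{A}$-derived candidates, each in $O(1)$ time.

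The last step is to apply $\min_{\subseteq}$ to the union of three parts: $O(|\mathcal{A}|)$ explicit candidates, the retained blocks of $\mathcal{B}$, and nothing else. Every interval of $\mathcal{B}$ lying outside all the blocks $C(I)$ is deleted by splicing, in $O(|\mathcal{A}|)$ splices. The explicit candidates are already sorted in the order of $\mathcal{A}$, and are merged into the retained blocks at rank positions. The dominance tests between a candidate and a retained block again reduce to comparing the candidate against the block's two end intervals, so they take $O(1)$ time.

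I expect the main obstacle to be this last dominance step: proving that a single candidate can only dominate, or be dominated by, a contiguous run at the boundary of a retained block of $\mathcal{B}$, so that it removes only a prefix or suffix of that block, which can be cut with $O(1)$ splices. I would first try to prove this using the monotonicity of both endpoint sequences together with the hypothesis that all intervals have length at most~$1$. Failing that, I would charge each deleted run to the candidate that causes it, using the same monotone-pointer argument as in the join.
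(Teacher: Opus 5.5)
Your join is the paper's: the paper also deletes, for each $I\in\mathcal{A}$, the contiguous range of $\mathcal{B}$ containing $I$ (its $(\beta_i,\alpha_i]$, your $C(I)$), keeps $I$ exactly when no interval of $\mathcal{B}$ lies inside it, and represents what remains of $\mathcal{B}$ by $O(|\mathcal{A}|)$ maximal index ranges, which is what your splicing does. One small fix there: take $C(I)$ with strict containment. Otherwise an interval lying in both $\mathcal{A}$ and $\mathcal{B}$ is deleted and never reinserted.

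For the meet you take a genuinely different route. The paper tries to output exactly the minimal intervals. It groups $\mathcal{A}$ by the left rank $\alpha$, splits each group by $\beta_i<\alpha$, $\beta_i=\alpha$, $\beta_i>\alpha$, and keeps one candidate $[\ell_i,r'_{\alpha+1}]$ from the middle part. It then asserts that merging with the intervals of $\mathcal{B}$ that contain some $I$ needs no further pruning. You instead build an $O(|\mathcal{A}|)$ superset and prune afterwards. The superset consists of each $I$ with $D(I)\neq\varnothing$, the blocks $C(I)$, and the hulls $[\ell'_p,r]$ and $[\ell,r'_q]$, where $J_p$ and $J_q$ are the nearest intervals of $\mathcal{B}$ strictly left and strictly right of $I=[\ell,r]$. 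This superset is complete, since every hull contains one of its members. It is also more robust than the paper's case split. For $\mathcal{A}=\{[0.1,0.4]\}$ and $\mathcal{B}=\{[0,0.5],[0.25,1]\}$, the paper's ranks are $\beta_1=0<1=\alpha_1$, so its rule discards $0.1$ as a left endpoint. Yet $\filter(\mathcal{A}\wedge\mathcal{B})=\{[0,0.5],[0.1,1]\}$, and your hull $[\ell,r'_q]$ produces $[0.1,1]$.

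The gap is the pruning step you leave open, and the worry behind it does not arise: no candidate ever cuts anything off a retained block. The missing observation is that every hull $I'\wedge J'$ contains both $I'$ and $J'$. So a hull strictly inside some $J\in\mathcal{B}$ would give $J'\subsetneq J$, and a hull strictly inside some $I\in\mathcal{A}$ would give $I'\subsetneq I$; both contradict the antichain property. Hence every interval in $\bigcup_I C(I)$ and every $I$ with $D(I)\neq\varnothing$ lies in $\mathcal{A}\wedge\mathcal{B}$. Only the at most $2|\mathcal{A}|$ new hulls can fail to be minimal.

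For these, $[\ell,r'_q]$ strictly contains an interval of $\mathcal{B}$ only if that interval is one of the following:
\begin{itemize}
\item $J_q$ itself;
\item a member of $D(I)$, in which case $I$ is itself a candidate strictly inside;
\item the member of $C(I)$ whose left endpoint is exactly $\ell$.
\end{itemize}
The case of $[\ell'_p,r]$ is symmetric. So each new hull is tested against $O(1)$ intervals of $\mathcal{B}$. Whether $J_q$ lies in a retained block is answered by a forward-moving pointer, because $q$ is nondecreasing in $I$. Pruning among the $O(|\mathcal{A}|)$ explicit candidates, which arrive as three sorted streams, is then a linear merge and scan. Filtering before pruning is harmless: having length at most $1$ passes to subintervals, so $\filter$ commutes with $\min_{\subseteq}$. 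With this step added, your argument proves the $O(|\mathcal{A}|)$ bound.
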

\begin{proof}
Let $\mathcal{A} = \{[\ell_i, r_i]\}_{i=1}^k$ and 
$\mathcal{B} = \{[\ell'_j, r'_j]\}_{j=1}^m$, sorted by increasing left endpoints.
Let $L(\mathcal{B}) = (\ell'_1,\dots,\ell'_m)$ and $R(\mathcal{B}) = (r'_1,\dots,r'_m)$.
For each $i \in [k]$, 
let $\alpha_i$ be the insertion rank of $\ell_i$ in $L(\mathcal{B})$, that is, the unique index $t \in \{0,\dots,m\}$ such that
$\ell'_t \le \ell_i < \ell'_{t+1}$,
where we set $\ell'_0 := -\infty$ and $\ell'_{m+1} := +\infty$.
Similarly, let $\beta_i$ be the insertion rank of $r_i$ in $R(\mathcal{B})$.

\subparagraph{Join operation.}
Consider an interval $I \in \mathcal{A}\vee\mathcal{B}:= \min_{\subseteq}(\mathcal{A}\cup \mathcal{B})$.
If $I = [\ell_i,r_i]$ comes from $\mathcal{A}$, there is no interval in $\mathcal{B}$ that is contained in $I$.
In terms of the insertion ranks $(\alpha_i,\beta_i)$ of $\ell_i$ and $r_i$, 
it is equivalent to $\alpha_i \ge \beta_i$. 
Thus, we identify all intervals in $\mathcal{A}$ that are contained in 
$ \mathcal{A}\vee\mathcal{B}$ in $O(k)$ time.

Conversely,
an interval $J = [\ell'_j, r'_j] \in \mathcal{B}$ is deleted from the join
if and only if it strictly contains some interval $[\ell_i,r_i] \in \mathcal{A}$.
Using the insertion ranks, this means $j \le \alpha_i$ and $j > \beta_i$, 
so the indices $j \in (\beta_i,\alpha_i]$ correspond to intervals of $\mathcal{B}$ that contain $[\ell_i,r_i]$ and therefore cannot belong to $\mathcal{A}\vee\mathcal{B}$.
Thus each $i$ with $\alpha_i > \beta_i$ 
defines a contiguous index range $(\beta_i,\alpha_i]$ of deleted intervals in $\mathcal{B}$.
By sweeping over the pairs $(\alpha_i,\beta_i)$, 
we obtain a family of disjoint index ranges $(\beta_i,\alpha_i]$ that cover those intervals of
$\mathcal{B}$ which must be deleted.
Instead of listing the uncovered indices one by one, we represent the complement of
$\bigcup_i (\beta_i,\alpha_i]$ as a sequence of maximal contiguous index ranges 
and take the corresponding subsequences of $\mathcal{B}$.
This takes $O(k)$ time, even though $|\mathcal{A} \vee \mathcal{B}|$ can be
as large as $k+m$ by Lemma~\ref{lem:num.meet.join}.

\subparagraph{Meet operation.}
We partition the intervals in $\mathcal{A} \wedge \mathcal{B}$ into two types:
\begin{enumerate}
\item intervals that coincide with some $J \in \mathcal{B}$, and
\item intervals whose left or right endpoint comes from some $I \in \mathcal{A}$.
\end{enumerate}

For the first type, let $J_j = [\ell'_j,r'_j] \in \mathcal{B}$.
By the definition of the meet, $J$ appears in $\mathcal{A} \wedge \mathcal{B}$ if and only if there exists $I_i = [\ell_i,r_i] \in \mathcal{A}$ with $I_i \subseteq J_j$.
In terms of the insertion ranks $(\alpha_i,\beta_i)$ of $\ell_i$ and $r_i$, 
this is equivalent to $\beta_i < j \le \alpha_i$. 
If $\beta_i < \alpha_i$, 
the indices $j$ with $\beta_i < j \le \alpha_i$ form an interval of consecutive indices.
For every $i$, we define
\[
  \mathcal{I}^{\mathcal{B}}_i :=
  \begin{cases}
    \{\, J_j \in \mathcal{B} \mid \beta_i < j \le \alpha_i \,\}, & \text{if } \beta_i < \alpha_i,\\[4pt]
    \emptyset, & \text{otherwise}.
  \end{cases}
\]
We process $i = 1,\dots,k$ in increasing order and, for each $i$, delete from
$\mathcal{I}^{\mathcal{B}}_i$ all intervals $J_j$ that already belong to some
$\mathcal{I}^{\mathcal{B}}_{i'}$ with $i' < i$.
Thus each $J_j$ is assigned to the smallest index $i$ with $\beta_i < j \le \alpha_i$,
and the subsequences $(\mathcal{I}^{\mathcal{B}}_i)_{i=1}^k$ are pairwise disjoint.
As in the join operation, they can be computed in overall $O(k)$ time.

We now consider the second type and focus on left endpoints; the case for right endpoints
is symmetric.
Scanning $\mathcal{A}$ in increasing order of left endpoints, we partition it into maximal
subsequences in which all intervals share the same insertion rank $\alpha$ of their left
endpoints.
Fix one such subsequence and write it as 
$I_i = [\ell_i,r_i]$ for $i = p,\ldots,q$.
Since 
the right endpoints $r_i$ are strictly increasing, the ranks $\beta_i$ are non-decreasing 
as $i$ increases from $p$ to $q$.
Thus, the index range $i = p,\ldots,q$ splits into up to three contiguous parts:
\[
  \{i : \beta_i < \alpha\},\quad
  \{i : \beta_i = \alpha\},\quad
  \{i : \beta_i > \alpha\}.
\]
For these three parts we have:
\begin{itemize}
  \item If $\beta_i < \alpha$, then $I_i \subseteq [\ell'_\alpha,r'_\alpha]$,  
  so $\ell_i$ cannot be the left endpoint of any interval in
        $\mathcal{A} \wedge \mathcal{B}$.
  \item If $\beta_i > \alpha$, then $[\ell'_{\alpha+1},r'_{\alpha+1}] \subseteq I_i$,
        so $I_i$ itself appears in $\mathcal{A} \wedge \mathcal{B}$.
  \item If $\beta_i = \alpha$, 
        then $\ell'_\alpha \le \ell_i < \ell'_{\alpha+1}$ and 
        $r'_\alpha \le r_i  <r'_{\alpha+1}$. 
        Each such $I_i$ yields a candidate interval $[\ell_i, r'_{\alpha+1}]$ in $\mathcal{A} \wedge \mathcal{B}$.
        All these candidates share the same right endpoint $r'_{\alpha+1}$, 
        and their left endpoints increase strictly as $i$ runs through this middle part.
        Therefore, only the candidate with the largest left
        endpoint survives as a minimal element and
        contributes a left endpoint to $\mathcal{A} \wedge \mathcal{B}$.
\end{itemize}
Thus, for a fixed left insertion rank $\alpha$ and its corresponding subsequence
$I_i = [\ell_i,r_i]$ with $i = p,\dots,q$, 
a single linear scan over $i = p,\dots,q$
computes exactly those intervals in $\mathcal{A} \wedge \mathcal{B}$ whose left endpoint
comes from $\mathcal{A}$ and whose left insertion rank is $\alpha$, in increasing order of
their left endpoints.
Repeating this for all left insertion ranks occurring in $\mathcal{A}$ yields a sequence
$\mathcal{I}_L^\mathcal{A}$ of intervals in $\mathcal{A} \wedge \mathcal{B}$, sorted by increasing
left endpoints, consisting precisely of those intervals whose left endpoint comes from an
interval of $\mathcal{A}$.
By symmetry, the same procedure applied to right endpoints yields a sequence
$\mathcal{I}_R^\mathcal{A}$, also sorted by increasing left endpoints, consisting of all intervals
whose right endpoint comes from an interval of $\mathcal{A}$.
Note that $\mathcal{I}_L^\mathcal{A}$ and $\mathcal{I}_R^\mathcal{A}$ may share intervals coming directly from
$\mathcal{A}$, and $|\mathcal{I}_L^\mathcal{A} \cup \mathcal{I}_R^\mathcal{A}| = O(k)$.
We merge $\mathcal{I}_L^\mathcal{A}$ and $\mathcal{I}_R^\mathcal{A}$ into a single sorted list
$\mathcal{I}^\mathcal{A}$, removing duplicate intervals.
Since both sequences are sorted and their total size is $O(k)$, this takes $O(k)$ time.

Then $\mathcal{A} \wedge \mathcal{B}$ is obtained by merging $\mathcal{I}^\mathcal{A}$ and
$(\mathcal{I}^\mathcal{B}_i)_{i=1}^k$ in increasing order of left endpoints, and all
resulting intervals are inclusion-minimal.
We claim that
no interval from $\mathcal{I}^\mathcal{A}$ can lie strictly between two
consecutive intervals $J_1,J_2 \in \mathcal{I}^\mathcal{B}_i$.
Suppose, for a contradiction, that 
$I \in \mathcal{I}^\mathcal{A}$ lies strictly between $J_1$ and $J_2$.
Then, $[\ell_i,r_i]\subseteq I$. 
Since $\mathcal{A}$ is an antichain, $I$ cannot be an interval of $\mathcal{A}$, so
$I$ must have at least one endpoint coming from $\mathcal{B}$.
Thus $I$ must be the smallest interval containing some $I' \in \mathcal{A}$ and $J' \in \mathcal{B}$.
As $\mathcal{B}$ is an antichain, 
$J'$ is distinct from $J_1$ and $J_2$,  
and then $J'$ must lie between them in the sorted order of $\mathcal{B}$.
This implies that $J'$ contains $[\ell_i,r_i]$, so $J' \in \mathcal{I}^\mathcal{B}_i$,
which contradicts the fact that $J_1$ and $J_2$ are consecutive in
$\mathcal{I}^\mathcal{B}_i$.

Thus each interval in $\mathcal{I}^\mathcal{A}$ is inserted between two consecutive
subsequences $\mathcal{I}^\mathcal{B}_i$ and $\mathcal{I}^\mathcal{B}_{i+1}$ (or
before the first or after the last subsequence).
Hence a single linear pass over $\mathcal{I}^\mathcal{A}$ and
$(\mathcal{I}^\mathcal{B}_i)_{i=1}^k$ yields the sorted list representing
$\mathcal{A} \wedge \mathcal{B}$.

\subparagraph{Filtering intervals in the meet.}
Note that $\mathcal{A} \wedge \mathcal{B}$ is the union of $\mathcal{I}^\mathcal{A}$ and
$\bigcup_{i=1}^k \mathcal{I}^\mathcal{B}_i$.
Every interval in $\bigcup_{i=1}^k \mathcal{I}^\mathcal{B}_i$ has length at most~$1$, so
$\filter$ only needs to be applied to $\mathcal{I}^\mathcal{A}$.
An interval $I \in \mathcal{I}^\mathcal{A}$ can have length greater than~$1$ only when it
is the union of one interval from $\mathcal{A}$ and one from $\mathcal{B}$. 
Such intervals
arise only from the middle parts.
In this case we simply discard $I$ if its length exceeds~$1$, so $\filter(\mathcal{A} \wedge
\mathcal{B})$ is obtained with no additional asymptotic cost.
\end{proof}

Let $|\mathcal{A}|=k$ and $|\mathcal{B}| = m$ with $k \le m$. 
We first compute 
the insertion ranks of all endpoints of intervals in $\mathcal{A}$ among $L(\mathcal{B})$ and $R(\mathcal{B})$
in $O\!\left(k \log \tfrac{m}{k}\right)$ time.
If $4k \le m$, 
we then compute $\filter(\mathcal{A} \wedge \mathcal{B})$ and $\mathcal{A} \vee \mathcal{B}$ in $O(k)$ time. 
If $ k \le m \le 4k$, we instead use the linear-time algorithm of Boldi and Vigna, 
which runs in $O(m+k) = O(k)$ time.

Let $T(t)$ be the running time of Algorithm~\ref{alg:update-node} on a subtree of size $t$. 
For a trapezoid node with a unique child, one antichain in the meet is a singleton, so the update takes $O(\log t)$ time and $T(t) = T(t-1) + O(\log t)$.
For a node whose children have subtree sizes $s$ and $t-s-1$ with $s \le t-s-1$, 
the cost at that node is $O\bigl(s \log \tfrac{t-s}{s}\bigr)$, so $T(t) = T(s) + T(t-s-1) + O\!\left(s \log \tfrac{t-s}{s}\right)$. These recurrences give $T(n) = O(n \log n)$ in the worst case.

\begin{theorem}
For a simple polygon $P$ 
with $n$ vertices, the problems
$\prob{V}{simp}(P)$ and $\prob{R}{simp}(P)$ can be solved in $O(n \log n)$ time using $O(n)$ space.
\end{theorem}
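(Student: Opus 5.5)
The proof assembles results already established in this section. It has three parts, taken in order: preprocessing, correctness, and a time and space bound.

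\emph{Preprocessing.} Compute the vertical trapezoidal decomposition $\td$ and its dual tree $G$ in $O(n)$ time with Chazelle's algorithm. Root $G$ at a leaf $\overline{\Delta}$ and refine it into the binary tree $\widetilde G$, which has $O(n)$ nodes and is built in $O(n)$ time. By Lemmas~\ref{lem:subproblem.structure} and~\ref{lem:subproblem-cap}, every node $u$ then has a well-defined capped simple subpolygon $Q_u^\varepsilon$. We fix $\varepsilon$ only conceptually. Endpoints are stored as a value in $\mathbb{Z}(P,0)$ together with one offset bit, and every comparison and floor the algorithm needs is resolved from these base values by the identities given above. This uses the real-\texttt{RAM} model with floors of input $x$-coordinates.

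\emph{Correctness.} By Lemma~\ref{lem:alg.correctness}, running Algorithm~\ref{alg:update-node} bottom-up yields the DP state $(\nopt^\varepsilon(P),\iant^\varepsilon(P,e_\rho))$ at the root. By the remark on the difference between $\nopt$ and $\nopt^\varepsilon$, $\nopt(P)$ follows in $O(1)$ time: subtract one exactly when the root antichain is the singleton $\{[x(e_\rho),x(e_\rho)+\varepsilon]\}$. This solves $\prob{V}{simp}(P)$. For $\prob{R}{simp}(P)$, we record birth and death records and the compact descriptions of the trapezoid-node cuts during the DP. Lemma~\ref{lem:cut-encoding} says this yields an $O(n)$-space lossless codeword from which every cut descriptor can be rebuilt. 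Recording this information adds only $O(1)$ work per endpoint that is born or dies, so the cost is absorbed by the meet and join costs.

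\emph{Time.} The main obstacle is the recurrence
\[
T(t)=T(s)+T(t-s-1)+O\bigl(s\log\tfrac{t-s}{s}+1\bigr),
\]
where $s$ is the smaller child size. Two facts feed into it. First, Lemma~\ref{lem:num.meet.join} bounds each antichain size by its subtree size, so an antichain that is singly linked to its node never exceeds that bound. Second, each update costs the rank computation plus Lemma~\ref{lem:meet.join.algorithm}, or Boldi--Vigna when $m\le 4k$.

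There is a subtlety: the output of Lemma~\ref{lem:meet.join.algorithm} has size up to $k+m$ but is produced in $O(k)$ time. The larger antichain must therefore be modified in place, by splicing subsequences, and not copied. I would store each antichain in a structure that supports splicing contiguous ranges and rank-based random access, for instance a balanced search tree or finger-searchable list, so that both operations stay within the stated bounds.

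To solve the recurrence I would use the standard smaller-into-larger (merging) argument, charging $O(\log\frac{t}{s})$ to each of the $s$ elements of the smaller side. Following one original trapezoid up the tree, each charge is at most the logarithm of the factor by which its containing subtree grows. These factors telescope to $O(\log n)$ per element, giving $O(n\log n)$ in total. Trapezoid nodes with a singleton operand add $O(\log n)$ each, which is again $O(n\log n)$ over the $O(n)$ nodes.

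\emph{Space.} Only the current antichains are kept live, and they are reused in place. Their total size is $O(n)$, since there are $O(n)$ endpoints overall. Together with the $O(n)$ records from Lemma~\ref{lem:cut-encoding}, the total space is $O(n)$.
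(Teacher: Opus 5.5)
Your proposal is correct and follows essentially the paper's route. Both assemble the trapezoidal decomposition and the refined tree $\widetilde G$, correctness from Lemma~\ref{lem:alg.correctness} with the $\nopt$ versus $\nopt^\varepsilon$ adjustment at the root, reporting via Lemma~\ref{lem:cut-encoding}, and rank-based meet/join (Kaplan et al.\ ranks with Lemma~\ref{lem:meet.join.algorithm}, Boldi--Vigna when $m\le 4k$) feeding the recurrence $T(t)=T(s)+T(t-s-1)+O\bigl(s\log\frac{t-s}{s}\bigr)$. Your in-place splicing remark and the smaller-into-larger charging make explicit what the paper leaves implicit, namely that an output of size up to $k+m$ cannot be written into a fresh array in $O(k)$ time. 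Note, however, that splits must be finger-based or done as one bulk join-based split costing $O\bigl(k\log(1+m/k)\bigr)$ in total; a plain $O(\log m)$ per split would only yield an $O(n\log^2 n)$ bound.
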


\subparagraph{Lower bounds.}
We now show that no sublinear-time algorithm exists for the value or reporting version, 
so our algorithm is optimal up to a logarithmic factor.
If the input polygon is given as an array of vertices in boundary order 
and each query reveals one input vertex at unit cost, 
then any algorithm requires $\Omega(n)$ time in the worst case.
\begin{theorem}\label{thm:simp.alg.optimality}
Let $P$ be a simple polygon with $n$ vertices stored in an array in boundary order,
and suppose that each query reveals one vertex of the input polygon and has unit cost.
Then, any algorithm for $\prob{V}{simp}(P)$ or $\prob{R}{simp}(P)$ requires
$\Omega(n)$ time in the worst case.
\end{theorem}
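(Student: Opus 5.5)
We prove the bound with an adversary argument: an algorithm that leaves one vertex unqueried cannot tell apart two valid inputs that differ only at that vertex yet have different optimal values. It suffices to treat the value version. For the reporting version, the codeword determines an optimal partition, and its number of strips is $\nopt(P)$. So any reporting algorithm also solves the value version without further queries, and the bound carries over.

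\emph{Construction.} Take a comb-shaped simple polygon $P_0$ with $k=\Theta(n)$ teeth, the vertices listed in boundary order.
\begin{itemize}\denseitems
\item The spine is a thin horizontal rectangle of horizontal width exactly $1$, say $[0,1]\times[0,\delta]$.
\item Above the spine sit $k$ thin vertical teeth, all contained in the slab $0\le x\le 1$.
\item Each tooth $t_i$ has an apex vertex $p_i$ at some $x$-coordinate inside $[0,1]$.
\end{itemize}
In $P_0$ the whole polygon has horizontal width $1$, so $\nopt(P_0)=1$.

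For each $i$ define $P_i$ by moving only $p_i$ so that tooth $t_i$ leans and pokes outside the slab, to $x$-coordinate $1+\eta$. We keep the polygon simple by making the teeth pairwise far apart vertically, for example by giving them distinct heights so that the leaning tooth passes above its neighbours. Then the horizontal width of $P_i$ exceeds $1$, so $\nopt(P_i)\ge 2$. Every $P_i$ has the same $n$ vertices as $P_0$ except $p_i$.

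\emph{Adversary.} The adversary answers every query according to $P_0$. Suppose the algorithm halts after fewer than $k$ queries. Then some apex $p_i$ was never revealed. The input $P_i$ is consistent with every answer given, but $\nopt(P_i)\neq\nopt(P_0)$, so the algorithm errs on one of the two inputs. Hence at least $k=\Omega(n)$ queries are needed.

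\emph{Main obstacle.} The geometric realization is the step needing care. The moved apex must keep the boundary simple and the counterclockwise order valid, while every other vertex stays fixed. The simplest safe choice is to stack the teeth so that tooth $t_i$ is the only part of the polygon at its height range. Concretely, use a staircase of horizontal arms: arm $i$ occupies height band $[2i,2i+1]$ and is attached to a common vertical trunk of width $\ll 1$ at the left. Extending the tip vertex of arm $i$ rightward then cannot hit any other edge. With all arm tips at $x\le 1$, the polygon has width at most $1$ and $\nopt=1$. Pushing one tip (say its two tip vertices, still $O(1)$ per arm) beyond $x=1$ forces a second strip. This gives $\Omega(n)$ for both versions.
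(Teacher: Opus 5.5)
Your proposal is correct and uses essentially the same adversary argument as the paper. The adversary answers consistently with a polygon of horizontal width at most~$1$ (so $\nopt=1$), and any unqueried vertex can be pushed past $x=1$ to force $\nopt\ge 2$. The paper's instance is the $y$-monotone polygon $(0,0),(x_1,1),\ldots,(x_n,n),(0,n)$ with answers $x_i=1/2$ and alternative $x_j=3/2$; simplicity is automatic there, so it avoids the realization issue your comb has to handle, though your band-separated arm version handles it correctly. Your explicit reduction of the reporting version to the value version is a detail the paper only asserts.
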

\begin{proof}
Given an input vector $\mathbf{x}\in (0,2)^n$, we construct $n+2$ vertices 
$p_i = (x_i, i)$ for $1\le i\le n$, together with 
$p_0 = (0,0)$ and $p_{n+1} = (0,n)$.
The polygon $P$ with vertices $(p_0,p_1,\ldots,p_n,p_{n+1})$ in this order is simple.
The horizontal width of $P$ is $h = \max_i{x_i}$. 
If $h \le 1$, then $P$ itself satisfies the width constraint and
$\nopt(P) = 1$; if $h > 1$, then at least two strips are required, so
$\nopt(P) > 1$. 

Suppose that an algorithm queries only $n-1$ of the vertices $p_1,\ldots,p_n$.
By an adversary argument~\cite{erickson_lower_1996}, 
the adversary replies $x_i=1/2$ to every query.
Let $j$ be the unique unqueried index.
Then two completions remain possible:
if $x_j=1/2$, then $\nopt(P)=1$;
if $x_j=3/2$, then $\nopt(P)>1$.
Thus, no correct algorithm can terminate before querying all $n$ vertices in the worst case.
Therefore, $\prob{V}{simp}(P)$ and $\prob{R}{simp}(P)$ both require $\Omega(n)$ time.
\end{proof}

\section{Orthogonal strip partitioning of self-overlapping polygons}\label{sec:extend.overlapping}
A closed curve $C\subset \mathbb{R}^2$ 
is self-overlapping if there exists a continuous map $F:D^2 \to \mathbb{R}^2$ such that 
$F|_{\mybd{D^2}}$ parametrizes $C$, and $F$ is locally one-to-one on $\myint{D^2}$. 
Intuitively, one can continuously deform a topological disk in $\mathbb{R}^2$, allowing it to overlap itself but not to tear or glue. 
The boundary of the resulting disk is a self-overlapping curve.

When $C$ is a closed polygonal chain, 
we refer to $C$ as a \emph{self-overlapping polygonal curve}.
For such a curve, 
Shor and van Wyk~\cite{shor_detecting_1992} 
define constrained Delaunay triangulations (CDTs) of $C$
whose union is an immersed disk bounded by $C$.
Each CDT determines a unique visibility relation between points on $C$.
A given curve $C$ may admit multiple CDTs, and each CDT induces a unique visibility relation on $C$.
Thus we define a \emph{self-overlapping polygon} $P$ 
to be a pair $(C,T)$, 
where $C$ is a self-overlapping polygonal curve and $T$ is a triangulation of the immersed disk bounded by $C$.
Once $(C,T)$ is fixed, it determines a visibility structure, so that strip partitions are well-defined.
See Figure~\ref{fig:self.overlap.partition}(a--b) for an illustration.

Alternatively, we can describe a self-overlapping polygon using a gluing model.
Let $T=\{\Delta_1,\Delta_2,\ldots,\Delta_m\}$ be a finite set 
of closed triangles,
and let $G=(V,E)$ be a tree with $V=T$.
For each edge $(i,j)\in E$, the triangles 
$\Delta_i$ and $\Delta_j$ are interior-disjoint and share a common segment.
We glue these shared segments 
in the topological sense, by identifying corresponding points along them. 
The gluing specified by $G$ 
produces a connected surface homeomorphic to a closed disk, 
and its boundary is a closed polygonal chain $C$.
If this surface is an immersed disk bounded by $C$, 
then $(C,T)$ defines a self-overlapping polygon, which we represent by $(T,G)$.

\begin{figure}[t!]
  \centering
  \includegraphics[width=\textwidth]{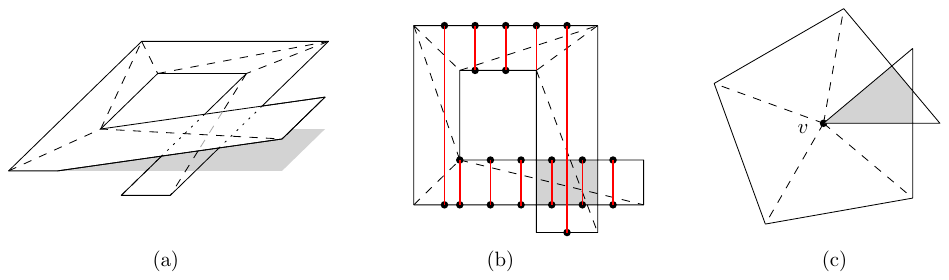}
  \caption{\small (a) A triangulated generalized terrain in $\mathbb{R}^3$ whose projection onto the
$xy$-plane defines a self-overlapping polygon. 
(b) An orthogonal strip partition of the corresponding self-overlapping polygon. 
(c) A counterexample represented by a $(T,G)$-model, but not a self-overlapping polygon.}
  \label{fig:self.overlap.partition}
\end{figure}

Given a self-overlapping polygon $P = (T,G)$, 
the visibility relation on the boundary $C$ is defined as follows.
Let $p$ and $q$ be two points on $C$ with $p\in \Delta_i, q\in\Delta_j$ 
for some triangles $\Delta_i, \Delta_j \in T$.
Let $G'$ be the minimal subtree of $G$ 
connecting $\Delta_i$ and $\Delta_j$, and let $T'$ be 
the subset of triangles corresponding to $G'$.
Gluing the triangles in $T'$ according to $G'$ yields 
a subpolygon $P'$, whose boundary contains $p$ and $q$.
We say that $p$ and $q$ are \emph{mutually visible} in $P$ 
if and only if $P'$ is a simple polygon and the segment connecting $p$ to $q$
is a cut in $P'$.

In particular, a vertical cut in $P$ is a vertical segment
whose endpoints lie on $C$ and are mutually visible in $P$.
The horizontal width of $P$  
is defined as the length of the projection of its boundary $C$ onto the $x$-axis.
With these notions, 
the problem of orthogonal strip partitioning of self-overlapping polygons 
is well-defined.

\subsection{Extending the algorithm to self-overlapping polygons.}
Let $P$ be a self-overlapping polygon with $n$ vertices, specified by $(T,G)$, 
where $T$ is a triangulation of $P$ and $G$ is the dual tree of $T$. 
From $(T,G)$, we construct a vertical trapezoidal decomposition of $P$ 
by shooting a vertical ray from each vertex 
into the interior of the immersed disk until it meets the boundary of $P$. 
Each such ray splits $P$ into two cells, 
so the resulting decomposition consists of $O(n)$ trapezoids. 

In the case of simple polygons, triangulation and trapezoidal decomposition are equivalent: each can be obtained from the other in linear time~\cite{chazelle_triangulation_1984,chazelle_triangulating_1991,fournier_triangulating_1984}.
Among several known results, 
Fournier and Montuno~\cite{fournier_triangulating_1984} give linear-time reductions in both directions between a triangulation and a horizontal trapezoidal decomposition. Their reductions 
operate purely on the triangulation (or trapezoid map) and its dual graph, without exploiting 
the simplicity of the boundary. 
Their algorithm also uses the fact 
that every triangulated simple polygon admits an ear. 
In our gluing model, the dual graph of a self-overlapping polygon is also a tree, 
so the two-ear theorem applies in exactly the same way. 
Thus, the linear-time reductions extend to self-overlapping polygons as well.

Given a triangulation of a self-overlapping polygon $P$ with $n$ vertices, 
we first convert it into the vertical trapezoidal decomposition of $P$ in linear time and then run Algorithm~\ref{alg:update-node} on this decomposition, solving $\prob{V}{self}(P)$ in $O(n \log n)$ time using $O(n)$ space.
In the simple-polygon setting, 
the simplicity assumption is not needed for correctness.
It is imposed only to ensure that all intermediate regions created by the algorithm
remain simple.
For self-overlapping polygons, we drop this restriction and only require that vertical cuts 
are consistent with the visibility relation induced by the trapezoidal decomposition, 
so the same dynamic program applies without modification.
The reporting procedure for reconstructing all cuts from the DP state also extends to
this setting without any change.


\begin{theorem}
Let $P$ be a self-overlapping polygon with $n$ vertices.
Given a triangulation of $P$, 
the problems $\prob{V}{self}(P)$ and $\prob{R}{self}(P)$ 
can be solved in $O(n \log n)$ time using $O(n)$ space.
\end{theorem}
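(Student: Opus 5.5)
The proof reduces the self-overlapping case to the simple-polygon machinery. Everything in Section~\ref{sec:dynamic.programming.alg} is local: it uses only the trapezoidal decomposition, its dual tree, and the $x$-intervals of trapezoids. None of it uses global simplicity. The plan has three steps: build the vertical trapezoidal decomposition from $(T,G)$ in linear time, check that Algorithm~\ref{alg:update-node} and its correctness proof transfer verbatim, and reuse the complexity and encoding analyses.

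\textbf{Step 1 (decomposition).} Starting from the triangulation, I would invoke the Fournier--Montuno reductions to produce a vertical trapezoidal decomposition in $O(n)$ time. Those reductions manipulate only the triangulation and its dual graph, and they use only the existence of ears. In the gluing model the dual graph $G$ is a tree, so the two-ear argument holds. For each triangle, the rays shot from vertices are traced inside the glued surface rather than in the plane. This yields $O(n)$ trapezoids whose dual graph is again a tree, since the surface is a disk and every vertical ray separates it.

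\textbf{Step 2 (correctness).} I would repeat the refinement into $\widetilde G$, the subproblem structures, and the caps. Lemmas~\ref{lem:subproblem.structure} and~\ref{lem:subproblem-cap} must now be read intrinsically on the immersed disk. Each $Q_u$ is a union of glued trapezoids along a subtree, so it is an immersed disk (weakly so at same-side bridges). The cap $T_u(\varepsilon)$ is attached along $e_u$ in the surface, and local injectivity of the immersion gives the required small $\varepsilon_0$. The induction of Lemma~\ref{lem:alg.correctness} then goes through. Its only geometric inputs are three facts. First, a strip extended through a trapezoid with $x$-interval $[a,b]$ has its interval met with $\{[a,b+\varepsilon]\}$. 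Second, two strips glued along $e_u$ have interval equal to the meet. Third, a cut placed along $e_v$, $e_w$, or at distance $\varepsilon$ separates them. All three are statements about neighbouring trapezoids in the tree.

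The visibility definition for $(T,G)$ checks that the minimal subpolygon $P'$ containing a cut is simple. So I must verify that every cut the algorithm produces is a genuine cut in this sense. Each such cut lies inside a single trapezoid, or along a vertical interface between adjacent trapezoids. The minimal subtree containing it is therefore tiny and embeds in the plane, so $P'$ is simple. Conversely, every cut in an optimal partition of $P$ lies in a chain of trapezoids, which restricts partitions to the subproblems exactly as before.

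\textbf{Step 3 (complexity and reporting).} The running-time analysis depends only on antichain sizes (Lemma~\ref{lem:num.meet.join}), on Lemma~\ref{lem:meet.join.algorithm}, and on the recurrence over $\widetilde G$, so it gives $O(n\log n)$ time and $O(n)$ space unchanged. The encoding of Lemma~\ref{lem:cut-encoding} stores birth and death records at tree nodes, so it carries over and handles $\prob{R}{self}(P)$.

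\textbf{Main obstacle.} I expect Step 1 and the intrinsic reinterpretation in Step 2 to be the hard part. One must make sure the ray shooting happens in the surface rather than in the plane, where overlapping sheets would confuse ``the first boundary hit.'' One must also make sure that no step of the simple-polygon proof silently used planar embeddedness, for example in ruling out that a strip wraps around and meets itself. I would address this by working entirely in the abstract glued surface with the immersion used only to read off $x$-coordinates. Since $x$-projection is all the width constraint sees, the argument should close.
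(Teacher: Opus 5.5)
Your route is the paper's: convert the triangulation into a vertical trapezoidal decomposition in linear time via the Fournier--Montuno reductions, which is justified because the dual graph of the gluing model is a tree and so the two-ear argument applies. You then run Algorithm~\ref{alg:update-node} unchanged and reuse the running-time and encoding analyses verbatim. The step that fails is your check that every cut the algorithm produces is a genuine cut for the $(T,G)$ visibility relation. In that definition the minimal subtree is taken in the dual tree of the \emph{triangulation} $T$, not in the dual tree of the trapezoidal decomposition. A vertical chord inside a single trapezoid can cross many triangles of $T$, and their union need not be simple, even when every vertex angle is below $2\pi$. So ``tiny, hence embedded'' is not a valid inference, and the conclusion itself can be false.

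For a concrete instance, glue the path of triangles $P_Ar_1v_1$, $v_1r_1r_2$, $v_1r_2v_2$, $v_2r_2r_3$, $v_2r_3P_B$, where
\begin{itemize}
\item $P_A=(-9,4)$, $r_1=(1,-2)$, $v_1=(-1,0)$, $r_2=(2,1)$,
\item $v_2=(-1,2)$, $r_3=(1,4)$, $P_B=(-9,-2)$.
\end{itemize}
This is a genuine self-overlapping polygon:
\begin{itemize}
\item consecutive triangles lie on opposite sides of their shared edge;
\item every vertex angle is below $2\pi$ (the largest, at $v_1$ and $v_2$, is about $297^\circ$).
\end{itemize}
No vertex has $x$-coordinate in $(-1,1)$, so each vertical chord with $x\in(-1,1)$ lies in one trapezoid. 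Yet each such chord crosses all five triangles. The first and last triangles overlap: both contain $(-3.5,1)$ in their interiors. Hence $P'$ is the whole non-simple polygon, and none of these chords is a cut under the $(T,G)$ definition. Since that trapezoid has width $2$, this polygon admits no strip partition at all under the literal definition, while the DP still returns one.

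The paper does not prove what you are trying to prove here; it sidesteps it. For self-overlapping inputs it explicitly drops the simplicity requirement and only asks that cuts be consistent with the visibility relation induced by the trapezoidal decomposition. Adopt that convention explicitly instead of claiming $P'$ is simple. Under it, your ``conversely'' direction and the rest of Step~2 go through, and your argument coincides with the paper's.
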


\subparagraph{Remarks.}
The same algorithm applies, without modification, to any gluing model $(T,G)$
even when its boundary does not bound an immersed disk.
In that case, $(T,G)$ does not represent a self-overlapping polygon but rather a simply
connected affine $2$-manifold with polygonal boundary.
This more general class strictly contains self-overlapping polygons.

\subsection{Lower bounds in the algebraic computation-tree model}\label{subsec:lower.self}
Let $P$ be a self-overlapping polygon with $n$ vertices. 
Recall that $P$ is given as a pair $(T,G)$, where $T$ is a triangulation of $P$ and $G$ is its dual graph.
We show that the $\delta$-closeness 
problem can be reduced to $\prob{V}{self}(P)$ in linear time, 
and use this reduction to derive lower bounds for both $\prob{V}{self}(P)$ and $\prob{R}{self}(P)$.

\subparagraph{The $\delta$-closeness problem and its lower bound.}
The $\delta$-closeness problem is defined as follows: 
given a vector $\mathbf{x} = (x_1,\ldots,x_n)$ of real numbers and a real parameter $\delta > 0$, 
decide whether there exist indices $i \neq j$ such that $\lvert x_i - x_j \rvert \le \delta$.
In the framework of Ben-Or's lower-bound argument~\cite{ben-or_lower_1983}, 
we fix $\delta > 0$ and view this as a membership problem for a subset
$W \subset \mathbb{R}^n$, where 
\[
  W := \{\, \mathbf{x} \in \mathbb{R}^n \mid 
          \exists\, i \neq j \text{ with } |x_i - x_j| \le \delta \,\}.
\]
Thus $W$ is precisely the set of all YES-instances of the $\delta$-closeness problem.
It is known that the membership set $W$ 
has at least $n!$ disjoint connected components in $\mathbb{R}^n$~\cite{preparata_computational_1985};
they state this for the strict inequality $\lvert x_i - x_j \rvert < \delta$, but the same
argument applies to $\lvert x_i - x_j \rvert \le \delta$ as well, changing only the components from open to closed sets.
Therefore, any algebraic computation-tree deciding $\delta$-closeness must have
depth $\Omega(\log n!) = \Omega(n\log n)$.

\subparagraph{Normalization of the input $\mathbf{x}$ and $\delta$.}
For convenience in describing the reduction, we first normalize $\mathbf{x}$ and $\delta$
in linear time. By shifting and scaling, we may assume that $x_i \in [0,1]$ for all $i$
and $0 < \delta < \tfrac{1}{5}$. Consider the three intervals
\[
  I_L := [0,\delta), \qquad
  I_R := (1-\delta, 1], \qquad
  I_M := \Bigl(\tfrac{1}{2}- \tfrac{\delta}{2}, \tfrac{1}{2} + \tfrac{\delta}{2}\Bigr).
\]
If any of these intervals contains at least two numbers from $\mathbf{x}$, we can
immediately answer \texttt{YES}, since any two points in the same one of
$I_L$, $I_R$, or $I_M$ are at distance less than~$\delta$.

Otherwise, each of $I_L$, $I_R$, and $I_M$ contains at most one point of $\mathbf{x}$.
For every $x_j$ that lies in one of these three intervals, we explicitly check in
$O(n)$ time whether there exists an index $i \neq j$ with $|x_i - x_j| \le \delta$.
If such an index exists, we answer \texttt{YES}; otherwise, we safely remove $x_j$
from $\mathbf{x}$, since it cannot participate in any $\delta$-close pair.
After this normalization step, 
no $x_i$ lies in $I_L, I_R$, or $I_M$; 
either $x_i \in \bigl[\delta,\; \tfrac{1}{2}-\tfrac{\delta}{2}\bigr]$ or
$x_i \in \bigl[\tfrac{1}{2}+\tfrac{\delta}{2},\; 1-\delta\bigr]$.

Any pair taken from opposite sides has distance at least $\delta$ 
with equality only when $x_i = \tfrac{1}{2} - \tfrac{\delta}{2}$ and
$x_j = \tfrac{1}{2} + \tfrac{\delta}{2}$. 
We check in $O(n)$ time whether such a pair exists;
if so, we return \texttt{YES}. Otherwise, the two sides form independent
subinstances of the $\delta$-closeness problem. 
Without loss of generality, 
we focus on the left side.
Similarly, for each $x_i \in \{\delta,\tfrac{1}{2}-\tfrac{\delta}{2}\}$, we perform
a single $O(n)$-time scan to check whether it has a $\delta$-close neighbor:
if such a neighbor exists, we answer \texttt{YES}; otherwise, we delete $x_i$.
After $O(n)$ preprocessing, we may assume
\[
  \delta < x_i < \tfrac{1}{2}-\tfrac{\delta}{2}
  \quad\text{for all } i \in [n],
  \quad 0 < \delta < \tfrac{1}{5}.
\]

\subparagraph{Encoding $\mathbf{x}$ as a self-overlapping polygon.}
For each $i \in [n]$, we define two intervals 
\[
  A_i = \bigl[-(x_i-\tfrac{\delta}{2}),\, 1-\tfrac{\delta}{2}-x_i\bigr]
  \quad\text{and}\quad
  B_i = \bigl[-(1-\tfrac{\delta}{2}-x_i),\, x_i-\tfrac{\delta}{2}\bigr].
\]
These are disjoint intervals of equal length, $|A_i| = |B_i| = 1-\delta$, and
their union has length $\lvert A_i\cup B_i \rvert = 2-\delta-2x_i > 1$.

\begin{figure}[t!]
  \centering
  \includegraphics[width=0.85\textwidth]{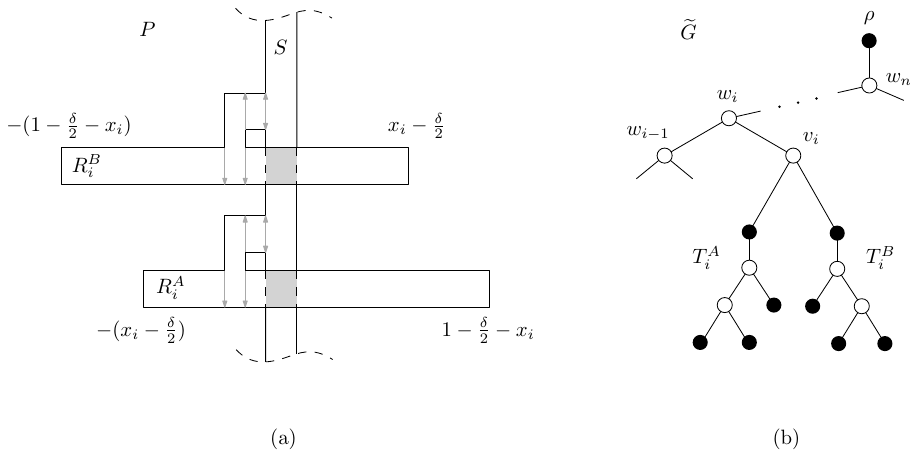}
  \caption{\small 
  Encoding the input sequence $\mathbf{x}$ as a self-overlapping polygon $P$ and its dual graph.
  (a) The rectangles $R_i^A$ and $R_i^B$ are attached to the left side of $S$ by thin $L$-shaped pipes. 
  (b) The refined binary tree $\widetilde{G}$ of the vertical trapezoidal decomposition of $P$, where the subtrees $T_i^A$ and $T_i^B$ correspond to $R_i^A$ and $R_i^B$, respectively.}
  \label{fig:self.overlap.construction}
\end{figure}

We begin by placing a long vertical rectangle $S$ in the plane whose $x$-interval is $[0,\tfrac{\delta}{2}]$.
For each index $i\in[n]$,
we create two axis-aligned rectangles $R_i^A$ and $R_i^B$ whose
$x$-projections are exactly the intervals $A_i$ and $B_i$, respectively.
We arrange these rectangles so that they are pairwise disjoint and 
appear in the vertical order
$R_1^A, R_1^B, R_2^A, R_2^B, \ldots, R_n^A, R_n^B$
from bottom to top.
Next, for each rectangle $R_i^A$ and $R_i^B$, 
we connect its interior to the left side of $S$ by thin, axis-aligned L-shaped pipes.
Gluing $S$, the rectangles $R_i^A$, $R_i^B$, and the pipes along their shared boundary segments
yields a simply connected region $P$ with $20n +4$ vertices.
See Figure~\ref{fig:self.overlap.construction}(a) for an illustration.

We represent this region using our $(T,G)$ model.
Each rectangle and pipe is triangulated independently, including vertices on shared boundary segments.
Whenever two pieces are glued along a common boundary, we take the two triangles
adjacent to that segment and add an edge between the corresponding vertices in their dual
graphs. 
In this way, 
the individual triangulations and dual graphs assemble into a single
triangulation $T$ of $P$ and its dual graph $G$.
Both $T$ and $G$ have size $O(n)$ and can be constructed in $O(n)$ time.

However, not every pair $(T,G)$ obtained by gluing arbitrary triangulated pieces
necessarily defines a self-overlapping polygon. 
For example, one can glue a
sequence of triangles so that they all meet at a single common vertex. See Figure~\ref{fig:self.overlap.partition}(c).
The resulting complex has a vertex whose neighborhood is not homeomorphic to a disk
and therefore does not bound an immersed disk, mirroring the non-self-overlapping example of Shor and van Wyk~\cite{shor_detecting_1992}.

In our construction, we can view $P$ as the projection of a generalized terrain in $\mathbb{R}^3$.
Since each rectangle $R_i^A$ and $R_i^B$ intersects $S$, 
we slightly separate $R_i^A$ and $R_i^B$ in the $z$-direction by assigning them
distinct $z$-heights and 
viewing $P$ as a polyhedral surface in $\mathbb{R}^3$. 
Following Eppstein and Mumford~\cite{eppstein_self-overlapping_2008},
a \emph{terrain} is a surface embedded in $\mathbb{R}^3$ such that every vertical line $\{(x,y,z)\mid x=c_1,y=c_2\}$ intersects it at most once. 
A generalized terrain relaxes this requirement by demanding only that every point
has a neighborhood whose image is a terrain. 
Figure~\ref{fig:self.overlap.partition}(a) illustrates a generalized terrain in $\mathbb{R}^3$.
Our layered surface satisfies this local condition, and they show that 
the projection of any generalized terrain into $xy$-plane 
has a self-overlapping curve as its boundary. 
Thus, the pair $(T,G)$ represents a self-overlapping polygon.

\subparagraph{Embedding the construction into the DP framework.}
Recall from Section~\ref{sec:simple.alg} that 
we first
compute the vertical trapezoidal decomposition of $P$ and 
refine its dual graph into a rooted binary tree $\widetilde G$.
Each node $u$ of $\widetilde G$ represents a subpolygon $Q_u$ 
with its vertical edge $e_u$ containing vertical interfaces. 
The dynamic program traverses $\widetilde G$ in bottom-up order.
For each node $u$, it computes a pair $(\nopt^\varepsilon(Q_u), \iant^\varepsilon(Q_u,e_u))$,
where $\nopt^\varepsilon(Q_u)$ is the number of strips in an optimal strip
partition of the capped subpolygon $Q_u^\varepsilon$, and
$\iant^\varepsilon(Q_u,e_u)$ is the antichain of $x$-intervals of strips
incident to $q_u(\varepsilon)$.
Algorithm~\ref{alg:update-node} specifies how this state is updated at each node,
with different rules for trapezoid nodes and bridge nodes.

Given the pair $(T,G)$ of $P$, 
we first compute the 
vertical trapezoidal decomposition of $P$ in $O(n)$ time.
For each index $i\in [n]$, 
consider the region consisting of $A_i$ (resp. $B_i$) with its incident L-shaped pipe.
The region decomposes into four trapezoids, two for $A_i$ (or $B_i$) 
and two for the pipe. 
Refining the dual graph of these four trapezoids as in Section~\ref{subsec.dynamic.approach} yields 
a binary tree whose root is a trapezoid of the pipe adjacent to $L$.
We denote these resulting trees by $T^A_i$ and $T^B_i$, respectively.
Next, for each $i$, we create 
a same-side bridge node $v_i$ whose children are the roots of $T_i^A$ and $T_i^B$.
We then connect $v_1,\ldots,v_n$ into a binary chain using same-side bridge nodes  
$w_2,\ldots w_n$, where $w_1:= v_1$ and $w_i$ has children $w_{i-1}$ and $v_i$ for $i \ge 2$.
Finally, we attach the trapezoid node $\rho$ corresponding to $L$ as a parent of $w_n$.

The resulting binary tree $\widetilde G$ is
rooted at $\rho$, and
applying the dynamic program to $\widetilde G$ correctly computes $\nopt(P)$.
The construction is illustrated in Figure~\ref{fig:self.overlap.construction}(b).

Let $a$ and $b$ be the roots of $T_i^A$ and $T_i^B$, respectively.
Since $|A_i| = |B_i| < 1$, we have
\[
  \iant^\varepsilon(Q_a,e_a) = \{A_i\}
  \quad\text{and}\quad
  \iant^\varepsilon(Q_b,e_b) = \{B_i\}.
\]
At the bridge node $v_i$, 
Algorithm~\ref{alg:update-node} computes
$\filter\bigl(\{A_i\} \wedge \{B_i\}\bigr)$,
which is empty because
$|A_i \cup B_i| > 1$.
Consequently, the update at $v_i$ uses the join, and 
the DP state at $v_i$ becomes
$\bigl(2, \{A_i,B_i\}\bigr)$.

\subparagraph{DP evaluation along the bridge chain.}
For each $i = 1,\ldots,n$, 
define $\mathcal{I}_i := \{A_i,B_i\}$.
Consider the sequence of bridge nodes $w_2,\ldots,w_n$ in $\widetilde G$ 
such that 
$w_j$ has children $v_j$ and $w_{j-1}$ with $w_1 := v_1$.
Suppose that
for $2 \le j <t$, the meet at $w_j$ contains only intervals of length $>1$, 
so each update at $w_j$ uses the join.
By induction, the antichain component at $w_{t-1}$ is precisely
the iterated join:
\[
  \iant^\varepsilon(Q_{w_{t-1}},e_{w_{t-1}})
  \;=\;
  \mathcal{I}_1 \vee \mathcal{I}_2 \vee \cdots \vee \mathcal{I}_{t-1}
  \;=\;
  \bigvee_{i=1}^{t-1} \mathcal{I}_i.
\]


In the update procedure at the bridge node $w_t$, 
the dynamic program computes $\filter((\bigvee_{i<t} \mathcal{I}_i)\wedge \mathcal{I}_t)$.
Recall that $\ccb_{\mathbb{Z}(P,\varepsilon)}$ is a completely 
distributive lattice~\cite{boldi_lattice_2018}. Hence
\[
  \Bigl(\bigvee_{i=1}^{t-1} \mathcal{I}_i\Bigr) \wedge \mathcal{I}_t
  \;=\;
  \bigvee_{i=1}^{t-1} \bigl(\mathcal{I}_i \wedge \mathcal{I}_t\bigr).
\]
Moreover, the low-pass operator $\filter(\cdot)$ is join-preserving (a
join-homomorphism), that is,
\[
  \filter(\mathcal{A} \vee \mathcal{B})
  \;=\;
  \filter(\mathcal{A}) \vee \filter(\mathcal{B})
  \quad\text{for all } \mathcal{A},\mathcal{B} \in \ccb_{\mathbb{Z}(P,\varepsilon)}.
\]
Combining these two facts, 
we can rewrite the low-pass of the meet term at $w_t$ as
\[
  \filter\!\left(
    \Bigl(\bigvee_{i=1}^{t-1} \mathcal{I}_i\Bigr) \wedge \mathcal{I}_t
  \right)
  \;=\;
  \filter\!\left(
    \bigvee_{i=1}^{t-1} \bigl(\mathcal{I}_i \wedge \mathcal{I}_t\bigr)
  \right)
  \;=\;
  \bigvee_{i=1}^{t-1} \filter\bigl(\mathcal{I}_i \wedge \mathcal{I}_t\bigr).
\]

\begin{lemma}\label{lem:delta.meet.empty}
    For $i\neq j$, 
    $\lvert x_i - x_j \rvert > \delta$ if and only if 
    $\filter(\mathcal{I}_i\wedge \mathcal{I}_j)$ is empty.
\end{lemma}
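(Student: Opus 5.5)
The plan is to compute $\mathcal{I}_i\wedge\mathcal{I}_j$ explicitly. By the definition of the meet, before applying $\min_{\subseteq}$ it consists of the four hulls $A_i\sqcup A_j$, $A_i\sqcup B_j$, $B_i\sqcup A_j$ and $B_i\sqcup B_j$, where $I\sqcup J$ denotes the smallest interval containing $I\cup J$. The operator $\filter$ keeps exactly the members of length at most $1$, and removing non-minimal elements never empties a nonempty finite family. Moreover, if a hull of length $\le 1$ is discarded by $\min_{\subseteq}$, then some retained interval contained in it also has length $\le 1$. Hence $\filter(\mathcal{I}_i\wedge\mathcal{I}_j)=\varnothing$ if and only if all four hulls have length $>1$. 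So the lemma reduces to computing the four lengths.

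Write $A_i=[-x_i+\tfrac{\delta}{2},\,1-\tfrac{\delta}{2}-x_i]$ and $B_i=[-(1-\tfrac{\delta}{2}-x_i),\,x_i-\tfrac{\delta}{2}]$. Note that $B_i=-A_i$ and that $A_i$ is a translate of $A_j$ by $x_j-x_i$.
\begin{itemize}
\item For $A_i\sqcup A_j$ and $B_i\sqcup B_j$, the two intervals are translates of each other, each of length $1-\delta$. The hull therefore has length $1-\delta+|x_i-x_j|$, which is $\le 1$ exactly when $|x_i-x_j|\le\delta$.
\item For the cross terms $A_i\sqcup B_j$, note that the left endpoint of $A_i$ is $\tfrac{\delta}{2}-x_i$ and the right endpoint of $B_j$ is $x_j-\tfrac{\delta}{2}$. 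Using the normalization $\delta<x_i,x_j<\tfrac12-\tfrac{\delta}{2}$, the left endpoint of $B_j$, namely $x_j+\tfrac{\delta}{2}-1$, lies below $\tfrac{\delta}{2}-x_i$. Similarly, the right endpoint of $A_i$, namely $1-\tfrac{\delta}{2}-x_i$, exceeds $x_j-\tfrac{\delta}{2}$.
\item Consequently the hull is $[x_j+\tfrac{\delta}{2}-1,\;1-\tfrac{\delta}{2}-x_i]$, of length $2-\delta-x_i-x_j$. By the normalization, $x_i+x_j<1-\delta$, so this length is greater than $1+\dots>1$. The case $B_i\sqcup A_j$ is symmetric.
\end{itemize}

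Combining the two cases, some hull of length at most $1$ exists if and only if $|x_i-x_j|\le\delta$. Equivalently, $\filter(\mathcal{I}_i\wedge\mathcal{I}_j)$ is empty if and only if $|x_i-x_j|>\delta$.

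The only delicate step is the endpoint comparison for the cross terms. It needs the normalized range $x<\tfrac12-\tfrac{\delta}{2}$ to guarantee that the hull of $A_i$ and $B_j$ is determined by the outer endpoints, and to get the strict inequality $x_i+x_j<1-\delta$. I expect this to go through directly from the bounds obtained in the normalization paragraph, so I do not anticipate a genuine obstacle.
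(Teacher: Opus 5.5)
Your proposal is correct and follows the paper's proof. Both compute the four hulls in $\mathcal{I}_i\wedge\mathcal{I}_j$ and get lengths $1-\delta+|x_i-x_j|$ (twice) and $2-\delta-x_i-x_j>1$ (twice, via $x_i,x_j<\tfrac12-\tfrac{\delta}{2}$), which gives the equivalence. Your remark that $\min_{\subseteq}$ cannot discard a hull of length at most $1$ without keeping a shorter one is extra care the paper leaves implicit; the stray ``$1+\dots$'' should simply read $2-\delta-(1-\delta)=1$.
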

\begin{proof}
Without loss of generality assume $x_i \le x_j$ and set $d := x_j - x_i$.
By construction of $\mathcal{I}_i$ and $\mathcal{I}_j$, the meet
$\mathcal{I}_i \wedge \mathcal{I}_j$ consists of at most the following four intervals:
\[
\begin{array}{rcl@{\qquad}rcl}
\left[ -x_j+\tfrac{\delta}{2},\; 1-\tfrac{\delta}{2}-x_i \right] 
& &
&
\left[ -\!\left(1-\tfrac{\delta}{2}-x_i\right),\; x_j-\tfrac{\delta}{2} \right]
\\[6pt]
\left[ -\!\left(1-\tfrac{\delta}{2}-x_j\right),\; 1-\tfrac{\delta}{2}-x_i \right]
& &
&
\left[ -\!\left(1-\tfrac{\delta}{2}-x_i\right),\; 1-\tfrac{\delta}{2}-x_j \right]
\end{array}
\]
The lengths of these four intervals are
\[
d + 1 - \delta,\quad
d + 1 - \delta,\quad
2 - \delta - x_i - x_j,\quad
2 - \delta - x_i - x_j, \quad \text{respectively.}
\]

Since $x_i,x_j \in (\delta,\; 1/2 - \delta/2)$, we have
\[
x_i + x_j < 1 - \delta
\quad\text{and hence}\quad
2 - \delta - x_i - x_j > 1.
\tag{$\ast$}
\]

($\Rightarrow$) Suppose $\lvert x_i - x_j\rvert > \delta$, so $d > \delta$.
Then $d + 1 - \delta > 1$.
Together with $(\ast)$, this implies that every interval in
$\mathcal{I}_i \wedge \mathcal{I}_j$ has length greater than~$1$.
Since the low-pass operator $\filter$ keeps only intervals of length at most~$1$, 
$\filter(\mathcal{I}_i \wedge \mathcal{I}_j)$ is empty.

($\Leftarrow$) Conversely, suppose $\lvert x_i - x_j\rvert \le \delta$, so
$d \le \delta$.
Then $d + 1 - \delta \le 1$,
so at least the first two intervals above have length at most~$1$.
This interval survives under the low-pass operator, and we conclude that
$\filter(\mathcal{I}_i \wedge \mathcal{I}_j)$ is nonempty.
\end{proof}

By Lemma~\ref{lem:delta.meet.empty}, we have
\[
  \bigvee_{i=1}^{t-1} \filter\!\bigl(\mathcal{I}_i \wedge \mathcal{I}_t\bigr)= \varnothing
  \quad\Longleftrightarrow\quad
  |x_t - x_i| > \delta\ \text{for all } i=1,\ldots,t-1 .
\]
If this iterated join is empty,
the value component at $w_t$ is the sum of the value components of its two children, 
and the antichain component becomes $\iant^\varepsilon(Q_{w_t},e_{w_t}) = \bigvee_{i=1}^{t} \mathcal{I}_i$.
If, on the other hand, this join is nonempty, 
then 
the value
component at $w_t$ is the sum of the values of its two children minus~$1$.

\subparagraph{Reduction from $\delta$-closeness to orthogonal strip partitions.}
At every bridge node $v_i$, the value component is $2$.
Let $V_t$ denote the value component at $w_t$ for $t = 1,\ldots, n$, where $w_1 = v_1$ and $V_1 = 2$.
First suppose that the $\delta$-closeness instance is a \texttt{NO}-instance; 
that is, $|x_i - x_j| > \delta$ for all $i \neq j$.
Then, at every bridge node $w_t$ with $t\ge 2$, the value component increases by $2$.  
We obtain $V_t = 2t$ for all $t$, and in particular $V_n = 2n$.

Now suppose that the instance is a \texttt{YES}-instance.
Then there exist $i \neq j$ with $|x_i - x_j| \le \delta$.
Let $j'$ be the smallest index such that $|x_{j'} - x_i| \le \delta$ for some
$i < j'$.
For every $t < j'$, we have $|x_t - x_i| > \delta$ for all $i < t$, and hence
$V_t = 2t$ as above. 
At $w_{j'}$, 
the iterated join is nonempty, so the value component
increases only by~$1$. 
Even if the value component increases by~$2$ at all subsequent bridge nodes,
we have \[
  V_n \;\le\; (2j' - 1) + 2(n - j') = 2n - 1 \;<\; 2n.
\]
Thus $\nopt(P) = 2n$ if and only if the $\delta$-closeness instance is a
\texttt{NO}-instance.

At the root node $\rho$ corresponding to the thin rectangle $S$, we take the meet of
the antichain coming from $w_n$ with 
the singleton antichain defined by $S$. 
Since the $x$-interval of $S$ is contained in both $A_i$ and $B_i$ for every
$i \in [n]$, this meet
is just the antichain component at $w_n$, and 
the value component at $\rho$ is equal to $V_n$.
Moreover, adding or removing the triangular cap
$T_\rho(\varepsilon)$ (that is, passing between $Q_\rho$ and
$Q_\rho^\varepsilon$) does not affect the number of strips in an optimal
partition, so $\nopt(P)=V_n$.

For the self-overlapping polygon $P$ constructed from
$\mathbf{x} = (x_1,\ldots,x_n)$ with $x_i\in (\delta, \tfrac{1}{2}-\tfrac{\delta}{2})$ 
and $ 0 < \delta < \frac{1}{5}$, 
we obtain
\[
  \nopt(P) = 2n
  \quad\Longleftrightarrow\quad
  |x_i - x_j| > \delta \ \text{for all } i \neq j.
\]
Thus the $\delta$-closeness problem on $n$ numbers reduces in
linear time to solving $\prob{V}{self}(P)$ for a self-overlapping polygon $P$ 
with $O(n)$ vertices.

\begin{theorem}\label{thm:self.lowerbound}
Let $P$ be a self-overlapping polygon with $n$ vertices represented in the 
$(T,G)$-model, where 
$T$ is a triangulation of $P$ and $G$ is its dual graph.
In the algebraic computation-tree model, 
any algorithm for $\prob{V}{self}(P)$ or $\prob{R}{self}(P)$ requires
$\Omega(n\log n)$ time in the worst case.
\end{theorem}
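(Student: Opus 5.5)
The plan is to combine Ben-Or's lower bound for $\delta$-closeness with the linear-time reduction just built. Suppose some algebraic computation tree $\mathcal{T}$ of depth $D(n)$ solves $\prob{V}{self}(P)$ for every self-overlapping polygon $P$ with $n$ vertices in the $(T,G)$-model. From this we obtain a computation tree for $\delta$-closeness on $n$ reals, and the depth transfers.

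The first step is to simulate the preprocessing. Normalization is by shifting and scaling into $[0,1]$. We then handle the points in $I_L$, $I_R$, $I_M$ and the equality case across the two sides, and split into the left and right subinstances. Each of these operations uses $O(n)$ arithmetic operations and comparisons, so it becomes a prefix of depth $O(n)$ in the new tree. This prefix either answers \texttt{YES} or outputs at most two subinstances, each satisfying $\delta < x_i < \tfrac12-\tfrac\delta2$. The right subinstance is mapped to the left by the reflection $x\mapsto 1-x$ and treated identically.

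The second step is to build the polygon. For a subinstance of size $n'\le n$, the coordinates of $S$, of the rectangles $R_i^A,R_i^B$ with projections $A_i,B_i$, and of the L-shaped pipes are affine functions of the $x_i$, $\delta$, and fixed constants. The combinatorial structure of $(T,G)$, meaning the triangles and dual-tree adjacencies, does not depend on the values $x_i$. Hence all vertex coordinates are computed by $O(n)$ arithmetic nodes. The resulting $P$ has $20n'+4=O(n)$ vertices, and we then run $\mathcal{T}$ on it. By the equivalence established above, $\nopt(P)=2n'$ if and only if the subinstance is a \texttt{NO}-instance, so one further comparison of the output with $2n'$ decides it. Combining both sides gives a correct tree for $\delta$-closeness of depth $D(O(n))+O(n)$.

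Ben-Or's bound says $W$ has at least $n!$ components, so $D(O(n))+O(n)=\Omega(n\log n)$. Therefore $D(n)=\Omega(n\log n)$ for $\prob{V}{self}$.

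For $\prob{R}{self}$, I would reduce the value version to the reporting version. From the codeword output by a reporting algorithm, the decoder $D_P$ reconstructs all cut descriptors, and the number of strips equals the number of cuts plus one. The polygon here has $O(n)$ vertices and width $O(1)$, so there are $O(n)$ cuts and this count costs $O(n)$ extra. Alternatively, as in the convex case, the lossless scheme can be composed with the map ``partition $\mapsto$ number of strips''. Thus the $\Omega(n\log n)$ bound carries over.

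The step I expect to be the main obstacle is making sure the reduction is genuinely a computation-tree reduction. In particular, the combinatorial type of $(T,G)$ must be input-independent, so that no branching on $x_i$ is needed beyond the preprocessing. The pipes must remain disjoint and the layered-terrain argument must keep $(T,G)$ self-overlapping for every admissible $\mathbf{x}$. I would address this by fixing the heights of all rectangles and pipe positions as constants depending only on $i$, letting only the horizontal extents vary with $x_i$, and checking that no degenerate coincidences arise on the open domain $\delta<x_i<\tfrac12-\tfrac\delta2$.
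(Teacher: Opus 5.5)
Your proposal is correct and follows the paper's argument: the same normalization of the $\delta$-closeness instance, the same rectangle-and-pipe polygon for which $\nopt(P)=2n$ exactly on \texttt{NO}-instances, and Ben-Or's bound. Your remarks on the input-independent combinatorial type of $(T,G)$ and on the value-to-reporting step make explicit what the paper leaves implicit; note that decoding and counting cuts presumes an efficiently decodable scheme, such as the one in Lemma~\ref{lem:cut-encoding}. One small correction (the paper's text has the same slip): it is the \texttt{NO}-set $\mathbb{R}^n\setminus W$ that has $n!$ components, whereas $W$ is connected because every slab $|x_i-x_j|\le\delta$ contains the diagonal; this is harmless, since Ben-Or's bound applies equally to a set and its complement.
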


This lower bound matches the running time of the algorithm, 
hence it is optimal.

\section{Concluding remarks}
We studied orthogonal strip partitioning for convex, simple and self-overlapping polygons. 
For convex polygons, we presented 
optimal algorithms for both value and reporting versions. In particular, for the reporting version, our reporting algorithm is input-sensitive optimal with respect to the horizontal width of $P$.

For simple and self-overlapping polygons, 
we presented a dynamic programming approach that runs in $O(n\log n)$ time and uses $O(n)$ space for both the value and reporting versions.
The DP recurrences are naturally formulated on the Clarke--Cormack--Burkowski lattice from order theory.
For self-overlapping inputs, the running time is optimal in the algebraic computation-tree model; for simple polygons, it is optimal up to a logarithmic factor in the decision-tree model.

The main open problem is to
close the logarithmic gap between
the upper and lower bounds for simple polygons. 
Lemma~\ref{lem:no.greedy} rules out any greedy pruning rule, 
so in the worst case the algorithm must compute a meet or join operation 
between antichains of size $\Theta(n)$.
However, it seems difficult to construct a simple polygon in which 
such expensive lattice operations occur repeatedly,
whereas this is easy for self-overlapping inputs.
This implies 
that simplicity may limit the number of costly lattice operations.
Even if this intuition is correct, 
a binary-search based implementation still incurs $O(n\log n)$ time.
To obtain an $O(n)$-algorithm, 
one likely needs a linear-scan procedure, 
similar to that of Boldi and Vigna~\cite{boldi_efficient_2016}, 
together with an amortized analysis of the total cost of all lattice operations.

\bibliographystyle{plain}
\bibliography{mypaper}

\end{document}